\documentclass[11pt,USenglish,sort]{article}

\usepackage{chngcntr}
\usepackage{todonotes}

\let\oldparagraph=\paragraph
\renewcommand\paragraph[1]{\oldparagraph{#1.}}

\usepackage[T1]{fontenc}
\usepackage{authblk}
\usepackage{a4wide}
\usepackage[pagebackref,colorlinks,linkcolor=red!50!black,citecolor=green!50!black]{hyperref}
\usepackage{amsthm}
\usepackage{amsmath}
\usepackage{amssymb}
\usepackage{subcaption}
\usepackage{amsfonts}
\usepackage{dsfont}
\usepackage{subcaption}
\usepackage[font=small,labelfont=bf]{caption}
\usepackage{thmtools}
\usepackage{thm-restate}
\usepackage[linesnumbered,boxruled]{algorithm2e}
\usepackage{mathtools}
\usepackage{paralist}
\usepackage[nameinlink]{cleveref}

\usepackage{booktabs}
\usepackage{tabularx}
\usepackage{multirow}
\usepackage{makecell}
\usepackage{enumitem}
\usepackage[sort,numbers]{natbib} %

\usepackage{tikz}
\usetikzlibrary{positioning,backgrounds,patterns,calc,shapes,arrows}

\usepackage{etoolbox}

\newcommand{\tempinterval}{temporal interval graph}
\newcommand{\utempinterval}{temporal unit interval graph}
\newcommand{\outempinterval}{order-preserving \utempinterval}

\newtheorem{theorem}{Theorem}[section]
\newtheorem{lemma}[theorem]{Lemma}
\newtheorem{corollary}[theorem]{Corollary}
\newtheorem{observation}[theorem]{Observation}
\newtheorem{proposition}[theorem]{Proposition}

\usepackage{apptools}

\crefname{rrule}{Reduction Rule}{Reduction Rules}
\crefname{claim}{Claim}{Claims}
\crefname{paragraph}{Paragraph}{Paragraphs}
\crefname{observation}{Observation}{Observations}
\crefname{lemma}{Lemma}{Lemmata}
\crefname{theorem}{Theorem}{Theorems}
\crefname{proposition}{Proposition}{Propositions}
\crefname{corollary}{Corollary}{Corollaries}
\crefname{remark}{Remark}{Remarks}
\crefname{section}{Section}{sections}
\crefname{figure}{Figure}{Figures}
\crefname{table}{Table}{Tables}
\crefname{definition}{Definition}{Definitions}
\crefname{equation}{Equation}{Equations}
\crefname{algorithm}{Algorithm}{Algorithms}

\Crefname{remark}{Remark}{Remarks}
\Crefname{paragraph}{Paragraph}{Paragraphs}
\Crefname{theorem}{Thm.}{Thms}
\Crefname{proposition}{Prop}{Props}
\Crefname{corollary}{Cor}{Cors}
\Crefname{observation}{Obs}{Obs}
\Crefname{lemma}{Lem.}{Lems.}

\theoremstyle{definition}
\newtheorem{definition}[theorem]{Definition}

\theoremstyle{remark}

\theoremstyle{plain}

\newcommand{\problemdef}[3]{
		\begin{center}
	\begin{minipage}{0.95\textwidth}
		\noindent
		\textsc{#1}

				\vspace{2pt}
				\setlength{\tabcolsep}{3pt}
				\renewcommand{\arraystretch}{1.1}
				\begin{tabularx}{\textwidth}{@{}lX@{}}
						\textbf{Input:} 		& #2 \\
						\textbf{Question:} 	& #3
					\end{tabularx}
	\end{minipage}
		\end{center}
}

\usepackage{amsopn}

\newcommand{\FPT}{\textnormal{\textsf{FPT}}}
\newcommand{\Wone}{\textnormal{\textsf{W[1]}}}

\newcommand{\XP}{\textnormal{\textsf{XP}}}
\newcommand{\N}{\mathbb N}

\newcommand{\T}{\mathcal{T}}
\newcommand{\I}{\mathcal{I}}
\newcommand{\TG}{\boldsymbol{G}}
\newcommand{\TE}{\boldsymbol{E}}

\newcommand{\NP}{\textnormal{\textsf{NP}}}

\newcommand{\ON}{\mathcal{O}}

\newcommand{\impl}{\Rightarrow}
\newcommand{\yes}{\textnormal{\textsf{yes}}}
\newcommand{\no}{\textnormal{\textsf{no}}}
\newcommand{\ug}[1]{\ensuremath{#1_\downarrow}}

\newcommand{\nonstrproblem}{\textnormal{\textsc{Temporal $(s,z)$-Sep\-a\-ra\-tion}}}

\newcommand{\strproblem}{\textnormal{\textsc{Strict Temporal $(s,z)$-Sep\-a\-ra\-tion}}}

\newcommand{\sepproblem}{\textnormal{\textsc{$(s,z)$-Sep\-a\-ra\-tion}}}

\newcommand{\vertexcover}{\textnormal{\textsc{Vertex Cover}}}

\newcommand{\nonstrpath}[1]{temporal~$(#1)$-path}

\newcommand{\strpath}[1]{strict temporal~$(#1)$-path}

\newcommand{\npath}[1]{$(#1)$-path}

\newcommand{\nonstrsep}[1]{temporal~$(#1)$-sep\-a\-ra\-tor}

\newcommand{\strsep}[1]{strict temporal~$(#1)$-sep\-a\-ra\-tor}

\newcommand{\nsep}[1]{$(#1)$-sep\-a\-ra\-tor}

\newcommand{\exble}{extendable}
\newcommand{\exsion}{extension}

\newcommand{\implone}{\smallskip\noindent$\Rightarrow$:}
\newcommand{\impltwo}{\smallskip\noindent$\Leftarrow$:}

\usepackage{xparse}

\usepackage{mathtools}
\DeclareMathOperator{\tw}{tw}

\newcommand{\abs}[1]{{\left|#1\right|}}

\DeclareMathOperator{\vc}{vc}

\DeclareMathOperator{\td}{td}

\DeclareMathOperator*{\argmin}{arg\,min}
\DeclareMathOperator*{\argmax}{arg\,max}

\DeclarePairedDelimiterX{\set}[1]{\{ }{ \} }{\setargs{#1}}
\NewDocumentCommand{\setargs}{>{\SplitArgument{1}{;}}m}
{\setargsaux#1}
\NewDocumentCommand{\setargsaux}{mm}
{\IfNoValueTF{#2}{#1} {#1\,\delimsize|\,\mathopen{}#2}}%

\newdimen\longformulasindent

\newcommand{\condRef}[1]{{\hyperref[#1]{(\ref{#1})}}}

\usepackage{mathrsfs}
\usepackage{yfonts}
\newcommand{\thetitle}{Temporal Graph Classes: A View Through Temporal Separators}

  \title{\thetitle{}}

\author{Till~Fluschnik\thanks{Supported by the DFG, project DAMM (NI 369/13) and TORE (NI 369/18).}}
\author{Hendrik~Molter\thanks{Supported by the DFG, project MATE (NI 369/17).}} 
\author{Rolf~Niedermeier}
\author{Malte~Renken\thanks{Supported by the DFG, project MATE (NI 369/17).}}
\author{Philipp~Zschoche}

\affil{\small Algorithmics and Computational Complexity, Faculty~IV, TU Berlin, Berlin, Germany,\\
 \texttt{\{till.fluschnik,h.molter,rolf.niedermeier,m.renken,zschoche\}@tu-berlin.de}}
\date{}

\begin{document}
\maketitle
  \begin{abstract}
    We investigate for temporal graphs the computational complexity of separating two distinct vertices~$s$ and~$z$ by vertex deletion.
    In a temporal graph, the vertex set is fixed but the edges have (discrete) time labels.
    Since the corresponding \nonstrproblem{} problem is \NP{}-complete, it is natural to investigate whether relevant special cases exist that are computationally tractable.
    To this end, we study restrictions of the underlying (static) graph---there we observe polynomial-time solvability in the case of bounded treewidth---as well as restrictions concerning the ``temporal evolution'' along the time steps.
    Systematically studying partially novel concepts in this direction, we identify sharp borders between tractable and intractable cases.

\medskip

  \noindent\emph{Keywords:}
    Temporal Paths, 
    Temporal Restrictions, 
    Unit Interval Graphs, 
    NP-completeness,
    Fixed-Parameter Tractability, 
    Dynamic Programming 

  \end{abstract}

\section{Introduction}
  \label{sec:intro}

Reachability, connectivity, and robustness in networks depend often on time.
For instance, in public transport or human contact networks, available connections or contacts are time-dependent.
To model such time-dependent aspects, one turns from static graphs to temporal graphs.
Formally, an undirected \emph{temporal graph}~$\TG = (V,\TE,\tau)$ is an ordered triple consisting of a set~$V$ of vertices, a set~$\TE \subseteq \binom{V}{2} \times \{1,2,\dots, \tau\}$ of \emph{time-edges}, and a maximal time label~$\tau \in \N$. 
We study the problem of finding a small set of vertices in a temporal graph whose removal disconnects two designated terminals: a classic, polynomial-time solvable problem in (static) graph theory.
\problemdef{\nonstrproblem{}}
{ A temporal graph~$\TG=(V,\TE,\tau)$, two distinct vertices~$s,z\in V$, and~$k \in \N$.}
{Does $\TG$ admit a \nonstrsep{s,z} of size at most~$k$?   }
Herein, a vertex set~$S\subseteq V\setminus\{s,z\}$ is a \emph{\nonstrsep{s,z}} for a given temporal graph $\TG=(V,\TE,\tau)$ with $s,z\in V$ if there is no \nonstrpath{s,z} in~$\TG - S:= (V \setminus S, \set{ (\{v,w\},t) \in \TE ; v,w \in V \setminus S}, \tau)$.
A \emph{temporal~$(s,z)$-path} of 
length~$\ell$ in a temporal graph~$\TG=(V,\TE,\tau)$ is a sequence
$P = ( (\{v_0,v_1\}, t_1), (\{v_1, v_2\}, t_2)$, $\dots, (\{v_{\ell-1}, v_\ell\}, t_{\ell} ) )$  of time-edges in~$\TE$,
where $s=v_0$, $z=v_\ell$, $v_i\neq v_j$ for all $i, j\in \{0, 1, \dots, \ell\}$ with $i\neq j$, and~$t_i \leq t_{i+1}$ for all~$i \in \{1, 2, \dots, \ell -1\}$.\footnote{
In the literature, temporal paths are also known as journeys \cite{DBLP:journals/ijfcs/XuanFJ03}. However, in some work a journey has strictly increasing labels \cite{Akrida2017temporallyconnected,akrida2017temporal,michail2016introduction,mertzios2013temporal}.}
\nonstrproblem{} is~\NP{}-complete~\cite{kempe2000connectivity}.
In this work, we study \nonstrproblem{} on restricted classes of temporal graphs with the goal to identify computationally tractable cases.

So far, in the literature one basically finds two different directions concerning the definition of temporal graph classes.
One direction is to define temporal graph classes through the underlying graph  (that is, essentially, the graph obtained by forgetting about the time labels of the edges)~\cite{erlebach2015temporal,AxiotisF16,zschoche2017computational}.
Herein, one restricts the input temporal graph to have its underlying graph being contained in some specific graph class.
The other direction is to consider properties expressible through temporal aspects~\cite{casteigts2012time,FlocchiniMS13,KuhnLO10,michail2016traveling}.
Such properties are, for instance, each layer being a subgraph of its succeeding layer, or the temporal graph being periodic, that is, having a subsequence of layers which is repeated in the same order for some periods.
In this work, we study \nonstrproblem{} on temporal graph classes from both directions.

\paragraph{Our contributions} 
We show that \nonstrproblem{} remains \NP{}-complete on many restricted temporal graph classes.
\begin{figure}[t!]
  \centering
  \newcommand{\tworowsSec}[2]{#1 #2}
  \newcommand{\disttoSec}[1]{distance~to #1}
  \tikzstyle{boxes}=[draw,thick, rounded corners=3mm,text width=2.9cm,align=center,text opacity=1,fill opacity=1,fill=white,font=\footnotesize]
  \tikzstyle{unk}=[fill=gray!15!white]
  \def\eps{0.11}
  \def\diff{0.18}
  \newcommand*{\TExtractCoordinate}[1]{\path (#1); \pgfgetlastxy{\XCoord}{\YCoord};}%
  \resizebox{\textwidth}{!}
  {
  \begin{tikzpicture}[]
  \matrix (first) [ampersand replacement=\&,row sep=0.1cm,column sep=0.55cm]
  {
	  
	  \& \node[boxes,minimum height=0.6cm,yshift=-6mm] (perfect) {perfect graphs, chordal graphs, cographs, split graphs, (unit) interval graphs, thresholds graphs};
	  \& \node[boxes,minimum height=0.6cm] (planar) {planar graphs~\cite{zschoche2017computational} \\ ($\tau$ unbounded; $k$?)};
	  \& \node[boxes,minimum height=0.6cm] (btw) {graphs of bounded treewidth (\Cref{thm:fpt-tw-tau})};
	  \\
	  
	  \node[boxes,minimum height=0.6cm] (claw) {claw-free graphs};
	  \& 
	  \& \node[boxes,minimum height=0.6cm] (bip) {bipartite graphs~\cite{zschoche2017computational} \\ ($\tau\geq2$; $k$:~W[1]-h.)};
	  \& \node[boxes,minimum height=0.6cm] (outplanar) {series-parallel graphs, outerplanar graphs};
	  \\
	  
	  \node[boxes,minimum height=0.6cm] (line) {line graphs (\Cref{lem:nphardline})\\ ($\tau\geq 11$; $k$:~W[1]-h.)};
	  \& \node[boxes,minimum height=0.6cm] (complete) {complete-but-one graphs (\Cref{lem:reductocliqueminusone})\\ ($\tau\geq4$; $k$:~W[1]-h.)};
	  \& \node[boxes,minimum height=0.6cm] (trees) {trees};
	  \& \node[boxes,minimum height=0.6cm] (cactus) {cacti};
	  \\
  };
	  \draw[thick,-] (complete) to (perfect);
	  \draw[thick,-] (complete) to (claw);
	  \draw[thick,-] (line) to (claw);
	  \draw[thick,-] (trees) to (bip);
	  \draw[thick,-] ($(trees.north)+(11mm,0)$) to (outplanar);
	  \draw[thick] (cactus)-- (outplanar);
	  \draw[thick] (outplanar)-- (btw);
	  \draw[thick,-] ($(outplanar.north west)+(2mm,0)$) to (planar);
	  \draw[thick,-] (bip.north) to node[midway, above,scale=0.7,sloped]{perfect graphs}(perfect);
	  
	  \node[below of=line,yshift=0mm,xshift=0mm] () {\textbf{NP-complete}};
	  \node[below of=trees,yshift=0mm,xshift=22mm,text width=6cm] () {\textbf{polynomial-time solvable}};

	  \begin{pgfonlayer}{background}

		  \draw[rounded corners,draw=red,ultra thick]
			  ($(line.west |-complete.south) 				+ ( -\eps,-\diff-6mm)$)--
			  ($(complete.south east) 				+ ( \eps,-\diff-6mm)$)--
			  ($(complete.east |- complete.north) 				+ ( \eps,\diff)$)--
			  ($(bip.east |- complete.north) 				+ ( \eps,\diff)$)--
			  ($(planar.east |- perfect.north) 				+ ( \eps,\diff)$)--
			  ($(perfect.west |- perfect.north) 				+ ( -\eps,\diff)$)--
			  ($(perfect.west |- claw.north) 				+ ( -\eps,\diff)$)--
			  ($(claw.north west) 				+ ( -\eps,\diff)$)--
			  cycle;
			  
		  \draw[rounded corners,draw=green,ultra thick]
			  ($(trees.north west)	+(-\eps, \diff)$)--
			  ($(outplanar.west |- trees.north)	+(-\eps, \diff)$)--
			  ($(btw.west |- perfect.north)	+(-\eps, \diff)$)--
			  ($(btw.east |- perfect.north)	+(\eps, \diff)$)--
			  ($(cactus.south east |-complete.south)	+(\eps, -\diff-6mm)$)--
			  ($(trees.south west|-complete.south)		+( -\eps,-\diff-6mm)$)--
			  cycle;
			  
	  \end{pgfonlayer}
  \end{tikzpicture}
  }
  \vspace{-6mm}
  \caption{
	  Computational complexity of \nonstrproblem{} for some graph classes of the underlying graph. 
	  An edge between two classes indicates containment of the lower in the upper class.
	  For the classes of line, complete-but-one, bipartite, and planar graphs, we provide for which values of the maximum time label~$\tau$ \NP{}-completeness is proven as well as the parameterized complexity of \nonstrproblem{} when parameterized by the solution size~$k$.
	  Note that in the case of planar graphs our \NP{}-hardness proof only holds for unbounded~$\tau$. Moreover, the parameterized complexity regarding~$k$ is unknown.
  }
  \label{fig:udnerlygcs}
\end{figure}
\begin{compactitem}
 \item \nonstrproblem{} remains \NP{}-complete on temporal graphs whose underlying graph falls into a class of graphs containing complete-but-one graphs (that is, complete graphs where exactly one edge is missing) or line graphs.
	 However, if the underlying graph has bounded treewidth, then \nonstrproblem{} becomes polynomial-time solvable (see~\cref{fig:udnerlygcs} for an overview).
 \item \nonstrproblem{} remains \NP{}-complete on temporal graphs where each layer contains only one edge (\cref{cor:one-edge-layer}).
 In contrast, if we require each layer to be a unit interval graph and impose
 suitable restrictions on how the intervals may change over time, then
 \nonstrproblem{} becomes tractable (\cref{prop:intervalpoly}, \cref{thm:ktdist}).
 \item Regarding temporal graph classes defined solely by restrictions on how the edge
 sets of the layers may change over time, \nonstrproblem{} becomes solvable in
 polynomial time on temporal graphs where one layer contains all others (grounded), on graphs where all layers are identical (1-periodic or 0-steady), or when the number of periods is at least the number of vertices.
 In all other considered cases \nonstrproblem{} remains~\NP{}-complete (see \cref{tab:results} in~\cref{sec:tem-classes} for an overview).
\end{compactitem}

\paragraph{Related work}

\citet{kempe2000connectivity} proved that \nonstrproblem{} is \NP-complete.
\citet{zschoche2017computational} proved that~\nonstrproblem{} remains \NP{}-complete on temporal graphs with bipartite or planar underlying graphs.
Moreover, \nonstrproblem{} is \Wone{}-hard when parameterized by the separator size~$k$~\cite{zschoche2017computational}.

\citet{casteigts2012time} defined twelve different classes of temporal graphs and showed a corresponding inclusion diagram.
Among these classes, they define temporal graph classes with recurrence or periodicity of edges.
On a slightly different notion of the latter class, \citet{FlocchiniMS13}
studied the problem of exploring a temporal graph, that is, asking whether it
is possible to visit all vertices of the graph with a temporal walk.
\citet{KuhnLO10} studied the problem of token dissemination on temporal graphs
where for each time-interval of length~$T$, all layers in the interval admit
the same spanning tree.

The class of temporal graphs with underlying graphs of bounded tree\-width are considered in the context of temporal graph exploration~\cite{erlebach2015temporal} and single-source temporal connectivity~\cite{AxiotisF16}.
\citet{erlebach2015temporal} studied the problem of temporal graph exploration on temporal graphs with underlying graphs being planar and of bounded vertex degree.
They also introduced the class of temporal graphs with regularly present edges, where the number of consecutive time steps for which any edge can be absent is lower- and upper-bounded (a similar class without the lower bound is also introduced by \citet[Class 7]{casteigts2012time}).
\citet{michail2016traveling} studied a temporal version of the \textsc{Traveling Salesperson Problem} on temporal graphs with respect to the smallest number~$d$ 
such that every vertex can reach any other vertex at any time in at most~$d$ time steps.

\paragraph{Organization}
In \cref{sec:prem} we introduce all necessary notation and terminology
concerning graph theory and (parameterized) computational complexity theory. In the next three section, we discuss and investigate three canonical and incomparable ways to restrict temporal graphs: In
\cref{sec:layer} we present our results for \nonstrproblem{} on temporal graph
classes that are defined by restricting the layers to be contained in certain graph classes. In
\cref{sec:ug-classes} we present our results for \nonstrproblem{} on temporal
graphs with restricted underlying graphs. In \cref{sec:tem-classes} we discuss some \emph{temporal} restrictions known from the literature that restrict how the edge sets of layers may relate to each other. 
In \cref{sec:unitinterval} we introduce a new class of temporal graphs that combines restrictions on the layers with temporal restrictions and hence does not fit in any of the previous three categories: (almost)
\outempinterval{}s and we present our results for \nonstrproblem{} on those
temporal graphs. We conclude in \cref{sec:conclusion}.

\section{Preliminaries}\label{sec:prem}
As a convention, $\N$ denotes the natural numbers without zero.
For~$n\in \N$, we use~$[n]:=[1:n]:=\{1,2,\ldots,n\}$.
Analogously, for a sequence $x_1,x_2,\ldots,x_n$ and~$a,b\in[n]$,~$a<b$, we write~$x_{[a:b]}$ for the subsequence $x_a,x_{a+1},\ldots,x_b$.

\paragraph{Static graphs}
We use basic notations from (static) graph theory~\cite{diestel2000graphentheory}.
Let $G = (V, E)$ be an \emph{undirected, simple graph}. 
$V(G)$ and $E(G)$ denote the set of vertices and set of edges of~$G$, respectively.
We denote by~$G - V' :=(V\setminus V',\set{ \{v,w\} \in E; v,w \in V \setminus V' })$ 
the graph~$G$ without the vertices in~$V'\subseteq V$.
For~$V' \subseteq V$,~$G[V']:=G - (V \setminus V')$ is the \emph{induced subgraph} of~$G$ on the vertices~$V'$. %
A \emph{path} of length~$\ell$ is sequence of edges~$P = (\{v_1, v_2\}, \{v_2, v_3\},\dots,\{v_\ell, v_{\ell+1}\})$ where $v_i\neq v_j$ for all $i, j\in[\ell-1]$ with $i\neq j$.
We set~$V(P) := \{v_1,v_2,\ldots,v_{\ell+1}\}$.
Path~$P$ is an \emph{$(s,z)$-path} if~$s=v_1$ and~$z=v_{\ell+1}$. %
A set~$S \subseteq V\setminus \{s,z\}$ of vertices is an \emph{\nsep{s,z}}
in~$G$ if there is no \npath{s,z} in~$G-S$.

A \emph{tree decomposition} of a graph~$G$ is a pair~$\T := (T,(B_i)_{i \in V(T)})$ consisting of a tree~$T$ and a family~$(B_i)_{i \in V(T)}$ of \emph{bags}~$B_i \subseteq V(G)$, such that %
\begin{compactenum}[(i)]
	\item for all vertices~$v \in V(G)$ the set~$B^{-1}(v) := \set{i \in V(T) ; v \in B_i}$ is non-empty and induces a subtree of~$T$, and
	\item for every edge~$e \in E(G)$ there is an~$i \in V(T)$ with~$e \subseteq B_i$.
\end{compactenum}
The \emph{width} of~$\T$ is~$\max \set{|B_i| - 1 ; i \in V(T)}$. 
The \emph{treewidth}~$\tw(G)$ of~$G$ is defined as the minimal width over all tree decompositions of~$G$.

\paragraph{Temporal graphs}
Let~$\TG = (V,\TE,\tau)$ be a temporal graph.
We call the graph $G_i(\TG) = (V, E_i(\TG))$ the \emph{layer}~$i$ of $\TG$
where $E_i(\TG) := \set{\{v, w\} ; (\{v, w\}, i) \in \TE }$. %
The \emph{underlying graph}~$\ug{\TG}$ of $\TG$ is defined as~$\ug{\TG} := (V,\ug{E})$, where~$\ug{E} := \set{ e ; \exists t: (e,t) \in \TE}$.
(We drop~$\TG$ in the notations if it is clear from the context.)
For~$X \subseteq V$ we define the \emph{induced temporal subgraph} of~$\TG$ by~$X$ by~$\TG[X] := (X,\set{ (\{v,w\},t) \in \TE ; v,w \in X },\tau)$.
We say that a temporal graph $\TG$ is \emph{connected} if its underlying graph $\ug{\TG}$ is connected.
Let~$s,z \in V$.
The \emph{departure time} (\emph{arrival time}) of a \nonstrpath{s,z}~$P = ((e_1,t_1),(e_2, t_2),\dots,(e_\ell,t_\ell))$ is~$t_1$ ($t_{\ell}$), the \emph{traversal time} of~$P$ is~$t_{\ell} - t_1$, and the length of~$P$ is~$\ell$.
The vertices \emph{visited} by~$P$ are denoted by~$V(P) := \bigcup_{i=1}^{\ell} e_i$.
Throughout the whole paper we assume that the temporal input graph $\TG$ is connected and that there is no time-edge between $s$ and $z$.
Furthermore, in accordance with \citet{wu2016efficient} we assume that the time-edge set $\TE$ is ordered by ascending labels.\footnote{If this is not the case, then $\TE$ can be sorted by ascending labels with bucketsort or mergesort in~$\ON(\min\{\tau,|\TE|\log|\TE|\})$ time.}
		The \emph{concatenation} of two temporal graphs $\TG_1 = (V,\TE_1,\tau_1)$, $\TG_2 = (V,\TE_2,\tau_2)$ is denoted by $\TG_1 \circ \TG_2 := (V, \TE_1 \cup \set{ (e,t+\tau_1) ; (e,t) \in \TE_2}, \tau_1 + \tau_2)$.
		Furthermore, we define that $\TG_1^1 := \TG_1$ and $\TG_1^x := \TG_1^{x-1} \circ \TG_1$ for all integers $x \geq 2$.

We begin by noting that one can efficiently find \nonstrpath{s,z}s 
by using the \emph{static expansion} of a temporal graph.
Intuitively, the static expansion of a temporal graph $\TG$ 
is a directed graph consisting of the union of the layers of $\TG$ where each layer has its own vertex set, and 
additional edges from one vertex of a layer to the same vertex in the next layer.

\begin{definition}
	For a temporal graph $\TG = (V = \{v_1,\dots,v_{n-2},s,z\},\TE,\tau)$, the \emph{static expansion} of~$(\TG,s,z)$ is the directed graph~$H := (V',A)$ with 
\begin{align*}
V' &:= \set{ s, z } \cup \set{ u_{t,j} ; j\in[n-2] \land t \in \phi(v_j) }\\
A  &:= A' \cup A_{s} \cup A_{z}\cup A_{\rm col}\\
A' &:=\set{ ( u_{i,j},u_{i,j'}), (u_{i,j'}, u_{i,j}) ; (\{ v_j,v_{j'}\},i) \in \TE}\\
A_{s} &:=\set{ (s,u_{i,j}) ; (\{ s,v_j\},i) \in \TE }\\
A_{z} &:=\set{ (u_{i,j},z) ; (\{ v_j, z\},i) \in \TE }\\
A_{\rm col} &:=\set{ (u_{t,j},u_{t',j}) ; (t,t') \in \vec{\phi}(v_j)\land j\in[n-2]} \,,
\intertext{where, for all $v \in \{v_1,v_2,\dots,v_{n-2}\}=V\setminus \{s,z\}$,}
\phi(v) &:= \set{ t ; t \in[\tau], \exists w : (\{v,w\},t) \in \TE} \\
\vec{\phi}(v) &:= \set{ (t,t')\in \phi(v)^2 ; \allowbreak t < t' \land \nexists t'' \in \phi(v) : t< t'' < t' } \,.
\end{align*}
The set $A_{\rm col}$ is referred to as the set of \emph{column-edges} of~$H$.
\end{definition}

\begin{lemma}
	\label{lemma:get-path}
	Given a temporal graph~$\TG = (V,\TE,\tau)$ and two distinct vertices~$s$ and~$z$, a \nonstrpath{s,z} can be computed in $\ON(|\TE|)$ time.
\end{lemma}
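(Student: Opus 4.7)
The plan is to perform a single forward sweep through the time-edge list, which by the preceding standing assumption is already sorted by ascending time labels. For every vertex $v \in V$ I maintain an earliest-reach-time $r(v)$, initialized to $0$ for $v = s$ and to $+\infty$ otherwise, together with a predecessor record $\pi(v)$ for later path reconstruction. I then scan the time-edges $(\{u,v\},t) \in \TE$ once in the given order and, for each such time-edge, check both orientations: whenever $r(u) \leq t$ and $r(v) = +\infty$, I set $r(v) \gets t$ and $\pi(v) \gets (u,t)$, and symmetrically for the other endpoint. After the scan, if $r(z) < +\infty$ I reconstruct the output by following the predecessor pointers from $z$ back to $s$ and reversing the obtained time-edge sequence.

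For correctness, a straightforward induction over the scan position gives that $r(v) < +\infty$ only if there is a \nonstrpath{s,v} with arrival time at most $r(v)$, so the algorithm never outputs a spurious path. Conversely, assume a \nonstrpath{s,z} exists and pick one, say $P = ((e_1,t_1),\dots,(e_\ell,t_\ell))$, of minimum arrival time. An induction over the prefix length shows that by the moment the scan reaches $(e_i,t_i)$, the earlier endpoint of $e_i$ carries a finite reach time of at most $t_i$, so the algorithm updates the latter endpoint by time $t_i$ at the latest; in particular $r(z) < +\infty$ at termination. Because $\pi(v)$ is written only when $r(v)$ first becomes finite, it is fixed once and never overwritten, so $\pi$ encodes an in-tree rooted at $s$ on the set of reached vertices; consequently the reconstructed sequence of time-edges visits every vertex at most once and, after reversal, is a valid \nonstrpath{s,z}.

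For the running time, initializing $r$ and $\pi$ takes $\ON(|V|)$ time, each of the $|\TE|$ scanned time-edges is handled in constant time, and the reconstruction follows at most $|V|-1$ predecessor pointers. Under the standing assumption that $\TG$ is connected we have $|V| \leq |\TE|+1$, so the total cost is $\ON(|\TE|)$ as claimed. The only non-routine point of the argument is ensuring that the reconstructed walk is in fact a simple path, and this is exactly what the write-once discipline on $\pi$ (mirroring the classical BFS-tree construction) provides; getting that detail right is the main thing to be careful about.
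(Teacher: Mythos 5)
There is a genuine gap: your single forward sweep is not correct for the \emph{non-strict} temporal paths used in this paper, where consecutive time-edges may carry the \emph{same} label ($t_i \leq t_{i+1}$). The standing assumption only says that $\TE$ is sorted by ascending labels; it guarantees nothing about the order of time-edges sharing one label. Your key induction step --- ``by the moment the scan reaches $(e_i,t_i)$, the earlier endpoint of $e_i$ carries a finite reach time of at most $t_i$'' --- breaks exactly when $t_{i-1}=t_i$ and $(e_{i-1},t_{i-1})$ happens to come after $(e_i,t_i)$ in the sorted list. Concretely, take $V=\{s,a,z\}$ and $\TE$ listed as $(\{a,z\},1),(\{s,a\},1)$: when you scan $(\{a,z\},1)$ both endpoints still have $r=+\infty$, so nothing is updated; afterwards you set $r(a)=1$ via $(\{s,a\},1)$, but $(\{a,z\},1)$ is never revisited, so you report that $z$ is unreachable even though $((\{s,a\},1),(\{a,z\},1))$ is a valid \nonstrpath{s,z}. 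More generally, within a single layer a non-strict temporal path is an arbitrary static path, and propagating reachability through a layer cannot be done by one pass over its edges in an adversarial order. (Your argument would be fine for \emph{strict} temporal paths, where within-label propagation never occurs; that is a different problem in this paper.)

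The paper sidesteps this by building the static expansion of $(\TG,s,z)$: each vertex gets a copy per time label at which it is active, each time-edge $(\{v_j,v_{j'}\},i)$ contributes arcs in \emph{both} directions between the two copies at time $i$ (allowing travel along several edges of the same layer), and ``column'' arcs between consecutive copies of the same vertex model waiting; a single BFS on this $\ON(|\TE|)$-size digraph then finds a path corresponding to a \nonstrpath{s,z}. Your approach can be repaired in the same spirit while keeping the sweep flavour: process the time-edges label by label, and inside each maximal block of equal labels run a small BFS on that layer seeded by the vertices that are already reached (with $r\leq t$), so that within-label propagation is exhausted before moving on; since each block's BFS costs time linear in the block, the total remains $\ON(|\TE|)$. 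Without such a within-label propagation step (or the static expansion), the algorithm as written is wrong, not merely under-explained; the write-once predecessor tree, which you correctly identify as the device ensuring vertex-disjointness, does not address this issue.
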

\begin{proof}
	Let~$\TG = (V,\TE,\tau)$ be a temporal graph with vertex set $V :={} \allowbreak \{ v_1, v_2, \dots, v_{n-2} \}  \cup \{ s, z \}$
	and let $H$ be the static expansion of~$\TG$.
Observe that each \nonstrpath{s,z} in~$\TG$ has a one-to-one correspondence to some~$(s,z)$-path in~$H$ and that $H$ can be computed in $\ON(|\TE|)$ time \cite{zschoche2017computational}.
Thus we can find a \nonstrpath{s,z} in $\TG$, using a breadth-first search on the static expansion of $(\TG,s,z)$.
This gives an overall running time of $\ON(|\TE|)$.
\end{proof}

\paragraph{Parameterized complexity}
We use standard notation and terminology from parameterized
complexity~\cite{downey2013fundamentals,flum2006parameterized,Nie06,cygan2015parameterized}
and give here a brief overview of the most important concepts.
A \emph{parameterized problem} is a language $L\subseteq \Sigma^* \times \mathbb{N}$, where $\Sigma$ is a finite alphabet. We call the second component
the \emph{parameter} of the problem.
A parameterized problem is in the complexity class \XP{} if there is an
algorithm that solves each instance~$(I,r)$ in~$|I|^{f(r)}$ time, for some
computable function $f$.
It is \emph{fixed-pa\-ram\-e\-ter tractable} (in the complexity class \FPT{})
if there is an algorithm that solves each instance~$(I, r)$ in~$f(r) \cdot |I|^{\ON(1)}$ time,
for some computable function $f$. 
There is the W-hierarchy of complexity classes for parameterized problems,
of which the most basic one is called \Wone{}. 
All parameterized complexity classes discussed here are closed under parameterized reductions, which may run in \FPT-time and additionally set the new parameter to a value that only depends on the old parameter. 
If a parameterized problem is \Wone-hard, then it is (presumably) not in \FPT{}.

\section{Layer-wise Restrictions for Temporal Graphs}\label{sec:layer} 
Two approaches to define temporal graph classes derive from restricting either
(i) each layer or (ii) the underlying graph 
to be contained in some specific graph class.
Notably, these restrictions are both independent of the order of the layers and hence appear to not fully capture the temporal characteristics of a given temporal graph.
This section considers case~(i), i.e.\ restrictions on the layers of a temporal graph.
Restricting the layers to fall into a specific graph class neither captures any temporal aspect of the temporal graph nor the full picture drawn by all layers together.
In fact, we show that such restrictions alone are not helpful: \nonstrproblem{} is already \NP{}-complete when each layer consists of at most one edge.
\begin{lemma}
	\label{lemma:edge-layer}
	There is a polynomial-time many-one reduction that maps any instance $(\TG = (V,\TE,\tau),s,z,k)$ of \nonstrproblem{} to an equivalent instance $(\TG' = (V,\TE',\tau'),$ $s,z,k)$ such that each layer in $\TG'$ has at most one edge and  $\tau' \leq \tau \cdot |V|^4$.
\end{lemma}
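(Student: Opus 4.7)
The plan is to expand each layer of $\TG$ into a block of single-edge layers that is long enough to realize any simple path through the original layer. Concretely, for each $t \in [\tau]$ I would fix an arbitrary ordering $e_1^t, \ldots, e_{m_t}^t$ of the edges of $G_t(\TG)$ and, in $\TG'$, replace layer $t$ by $|V|-1$ consecutive copies of the block $(e_1^t, \ldots, e_{m_t}^t)$, placing each edge into its own single-edge sublayer. Since $m_t \le \binom{|V|}{2}$, each original layer contributes at most $|V|^3$ new layers, hence $\tau' \le \tau \cdot |V|^3 \le \tau \cdot |V|^4$. The vertex set as well as $s$, $z$, and $k$ are left unchanged.

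For correctness I would prove that, for every $S \subseteq V \setminus \{s,z\}$, a \nonstrpath{s,z} exists in $\TG - S$ iff one exists in $\TG' - S$. The forward direction is a routine projection: collapsing every new sublayer back to the original time label it was derived from turns any \nonstrpath{s,z} in $\TG' - S$ into one in $\TG - S$, because the vertex sequence is unchanged, distinctness of vertices is preserved, and the projected label sequence remains non-decreasing (sublayers of earlier original layers lie strictly before sublayers of later ones).

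The converse is the main step and the only place where the $|V|-1$ repetitions are needed. Given a \nonstrpath{s,z}~$P$ in $\TG - S$, I would group its time-edges by their common label $t$; within one such group the edges of $P$ form a simple path in $G_t$, hence an ordered sequence of at most $|V|-1$ edges drawn from $\{e_1^t, \ldots, e_{m_t}^t\}$. The key combinatorial observation is that any length-$r$ sequence over $\{e_1^t, \ldots, e_{m_t}^t\}$ is a subsequence of $r$ consecutive copies of $(e_1^t, \ldots, e_{m_t}^t)$: place the $j$-th element into the $j$-th copy of the block at the slot corresponding to its name. Since $r \le |V|-1$, our $|V|-1$ copies suffice to realize every layer-group of $P$ inside the expanded block for $t$, and concatenating these realizations in the order of the original labels yields a \nonstrpath{s,z} in $\TG' - S$ on the same vertex sequence. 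I do not anticipate a serious obstacle; the only detail to verify carefully is that each same-label subwalk of $P$ is genuinely a simple path rather than an arbitrary walk, which is automatic from the vertex-distinctness built into the definition of a temporal path.
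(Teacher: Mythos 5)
Your proposal is correct and follows essentially the same approach as the paper: expand each layer into consecutive single-edge sublayers obtained by repeating a fixed ordering of that layer's edges, then argue that projecting labels back gives one direction and that any simple path within a layer embeds as a subsequence of sufficiently many block copies for the other. The only difference is that you repeat each block $|V|-1$ times (bounding path length by the number of vertices) while the paper repeats it $|E_i|$ times, which even gives you the slightly better bound $\tau'\le\tau\cdot|V|^3$.
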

\begin{proof}
	Let~$\TG = (V,\TE,\tau)$ be a temporal graph.
	We construct~$\TG' := (V,\TE',\tau')$ by concatenating for each layer $i$ of $\TG$ a temporal graph $\TG_i^{|E_i|}$ such that there is a temporal path in $\TG_i^{|E_i|}$
	if and only if there is a path in layer $i$ of $\TG$.

	For each layer~$i$ of $\TG$ 
	we construct a temporal graph $\TG_i := (V, \TE_i,\tau_i)$ 
	by fixing an arbitrary total order on the edge set $E_i = \set{ e_1,e_2,\dots,e_m}$ 
	of layer~$i$ in $\TG$ and setting the time-edge set of layer $j$ of~$\TG_i$ to be $\set{ (e_j,j) }$.
	Now, we build $\TG' := \TG_1^{|E_1|} \circ \TG_2^{|E_2|} \circ \dots \circ \TG_{\tau}^{|E_\tau|}$, where $|E_i|$ is the number of edges in layer $i$ of~$\TG$ for all~$i \in [\tau]$.
	This is obviously a polynomial-time construction.
	Since, for all~$i \in [\tau]$, $|E_i| \leq |V|^2$ and each $\TG_i$ has $|E_i|$ many layers, we know that~$\tau' \leq \tau \cdot |V|^4$.

	Let $i \in [\tau]$ and $v, w \in V$.
	Observe that $G_i(\TG)$ is the underlying graph of both, $\TG_i$ and~$\TG_i^{|E_i|}$.
	Since every temporal path is also a path in the underlying graph, it is easy to see that for each \nonstrpath{v,w} in $\TG_i^{|E_i|}$ there is a \npath{v,w} in layer~$i$ of~$\TG$ which visits the vertices in the same order.
	We claim that for each \npath{v,w} $P$ of length $\ell$ in layer $i$ of $\TG$ there is a \nonstrpath{v,w} in $\TG_i^{\ell}$ which visits the vertices in the same order.
	Let $V(P) =: \set{ v = v_0,v_1,\dots, v_{\ell+1}=w}$ such that $v_{j}$ is visited before~$v_{j+1}$, for all $j \in [0:\ell]$.
	We prove the claim by induction on $\ell$.
	If $\ell=1$, then we know that there is a time-edge between $v$ and $w$ in $\TG_1$.
	For the induction step we observe that there is a time-edge between $v=v_0$ and $v_1$ in $\TG_i$ and, 
	by the induction hypothesis, there is a \nonstrpath{v_1,w} of length $\ell-1$ in~$\TG_i^{\ell-1}$
	which visits the vertices  in the same order as $P$. 
	Since~$\ell \leq |E_i|$, 
	we have that for each \npath{v,w} in layer $i$ of $\TG$ there is a \nonstrpath{v,w} in $\TG_i^{|E_i|}$
	which visits the vertices in the same order, where $v,w \in V$ and~$i \in [\tau]$.
	If follows that a vertex set $S \subseteq V \setminus \set{s,z}$ is a
	\nonstrsep{s,z} in $\TG$ if and only if $S$ is a \nonstrsep{s,z} in $\TG'$, because in the construction of $\TG'$ we replaced  layer $i$ of $\TG$ with~$\TG_i^{|E_i|}$.
\end{proof}
\noindent\cref{lemma:edge-layer} together with known hardness
reductions for \nonstrproblem~\cite{kempe2000connectivity,zschoche2017computational}
implies the following.
\begin{corollary}
  \label{cor:one-edge-layer}
	\nonstrproblem{} is~\NP{}-complete and \Wone-hard when parameterized by the separator size $k$ even if each layer has at most one~edge.
\end{corollary}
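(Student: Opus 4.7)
The plan is to derive the corollary directly by chaining the reduction of \cref{lemma:edge-layer} with the two hardness results referenced just before the corollary, namely the \NP{}-hardness of \nonstrproblem{} by \citet{kempe2000connectivity} and the \Wone{}-hardness with respect to the solution size $k$ by \citet{zschoche2017computational}.

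First, I would observe that \nonstrproblem{} lies in \NP{}: given a candidate vertex set $S$ of size at most $k$, one can verify in polynomial time that no \nonstrpath{s,z} exists in $\TG - S$ using \cref{lemma:get-path}. Hence only the hardness side needs attention.

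Next, I would apply \cref{lemma:edge-layer} as a reduction from the unrestricted version of \nonstrproblem{}. The lemma guarantees that every input instance $(\TG, s, z, k)$ can be transformed in polynomial time into an equivalent instance $(\TG', s, z, k)$ in which each layer has at most one edge. Crucially, the parameter $k$ is preserved verbatim, so this is simultaneously a polynomial-time many-one reduction and a parameterized reduction with respect to $k$. Feeding the known \NP{}-hard instances of \citet{kempe2000connectivity} through this reduction yields \NP{}-hardness on the restricted class, and feeding the \Wone{}-hard instances of \citet{zschoche2017computational} (parameterized by $k$) through the same reduction yields \Wone{}-hardness on the restricted class.

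There is essentially no obstacle here beyond checking that \cref{lemma:edge-layer} is a valid parameterized reduction, which is immediate from its statement since it leaves $s$, $z$, and $k$ unchanged and runs in polynomial time (hence in particular in \FPT{} time). The only small thing worth noting explicitly is that $\tau'$ blows up polynomially in $|V|$, but this does not affect either the classical or the parameterized reduction since neither hardness result requires a bound on $\tau$.
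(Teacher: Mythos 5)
Your proposal is correct and matches the paper's intended argument: the corollary is stated to follow from \cref{lemma:edge-layer} combined with the known \NP{}-hardness~\cite{kempe2000connectivity} and \Wone{}-hardness~\cite{zschoche2017computational} results, exactly as you chain them, with the parameter~$k$ preserved by the reduction. Your explicit remarks on \NP{}-membership via \cref{lemma:get-path} and on the polynomial blow-up of~$\tau'$ being harmless are fine and consistent with the paper.
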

 
Now we consider a scenario in which the temporal graphs have a certain geometric interpretation. 
For example in data sets where vertices are individuals and edges model physical proximity (see e.g.~\cite{fournet2014contact}), 
it is a plausible assumption that the individual layers are disc intersection graphs (assuming the individuals only move in the plane).
We investigate the restriction to (unit) interval graphs, which constitute the one-dimensional equivalent, meant as a starting point for further research.

Next, we introduce \tempinterval{}s. 
We call a temporal graph $\TG=(V,\TE,\tau)$ a \emph{\tempinterval} if every layer~$G_i$ is an interval graph. 
We say that a temporal graph~$\TG=(V,\TE,\tau)$ is a \emph{\utempinterval} if every layer~$G_i$ is a \emph{unit} interval graph.
By \cref{lemma:edge-layer}, \nonstrproblem{} on \utempinterval{} is \NP-complete.
Furthermore the problem remains \NP-complete even if $\tau$ is constant:

\begin{proposition}\label{thm:uinterval-hardness-fixed-tau}
\nonstrproblem{} on \utempinterval{}s is \NP{}-complete for any fixed $\tau \geq 6$.
\end{proposition}
\begin{proof}
\citeauthor{zschoche2017computational} showed in \cite[Thm.~3.1]{zschoche2017computational}  by reduction from \textsc{Vertex Cover}
that \nonstrproblem{} is \NP-complete for fixed $\tau \geq 2$.
We modify that proof to ensure that each layer of the resulting temporal graph is a unit interval graph.
\problemdef{\vertexcover{}}
	{ An undirected graph~$G=(V,E)$ and an integer~$k \in \N$.   }
	{Is there a subset~$V' \subseteq V$ of size at most~$k$ such that for all~$\{v,w\} \in E$ it holds~$\{v,w\} \cap V' \not = \emptyset$? }

The basic idea behind the reduction is
to create a gadget for each vertex such that one can use two types of vertex sets to separate $s$ from $z$ in this gadget: a small one and large one.
Then, for each edge in the \vertexcover{} instance,
we connect the corresponding gadgets in such a way, that at least in one of the gadgets it is necessary to take the large vertex set.
Hence, taking the large vertex set from a gadget into the \nonstrsep{s,z} corresponds to taking the vertex into the vertex cover.

Let~$\I := (G = (V,E),k)$ be a \vertexcover{} instance and~$n := |V|$.
We  construct a \nonstrproblem{} instance~$\I' := (\TG' := (V', \TE', 6),s,z,n+k)$ by setting
\begin{align*}
V' &:= \set*{ x, v, x_v, x_v', x_{vw} ; v, w \in V, x \in \{s, z\} }
\end{align*}
and
\begin{align*}
\TE' :={} &\left(E_\alpha(s) \times \{1\}\right) \cup \left( E_\alpha(z) \times \{6\} \right)\\
&\cup \left( E_\beta(s) \times \{2\} \right) \cup \left( E_\beta(z) \times \{5\} \right)\\
&\cup \left( E_\gamma(s, z) \times \{4\} \right) \cup \left( E_\gamma(z, s) \times \{3\} \right)\\
&\cup \left( E_\delta \times \{3\} \right)
\end{align*}
where we define, for any $x, y \in \{s, z\}$, the following four edge classes
\begin{align*}
E_\alpha(x) &:= \binom{\{x, x_v, x_v' : v \in V\}}{2} \,,\\
E_\beta(x) &:= \bigcup_{v \in V} \binom{\{x_v, x_v', x_{vv}\}}{2} \cup \binom{\{x_{vw} : w \in V\}}{2} \,,\\
E_\gamma(x, y) &:= \bigcup_{v \in V} \left\{\{x_v, x_v'\}, \{v, x_v\}, \{v, x_v'\}, \{v, y_{vv}\} \right\} \,,\\
E_\delta &:= \set*{ \{s_{vw}, z_{wv}\}, \{s_{wv}, z_{vw}\} ; \{v, w\} \in E }
\end{align*} 
(compare also \cref{fig:vc-to-separator}).

\begin{figure}[t]
\includegraphics[width=\textwidth]{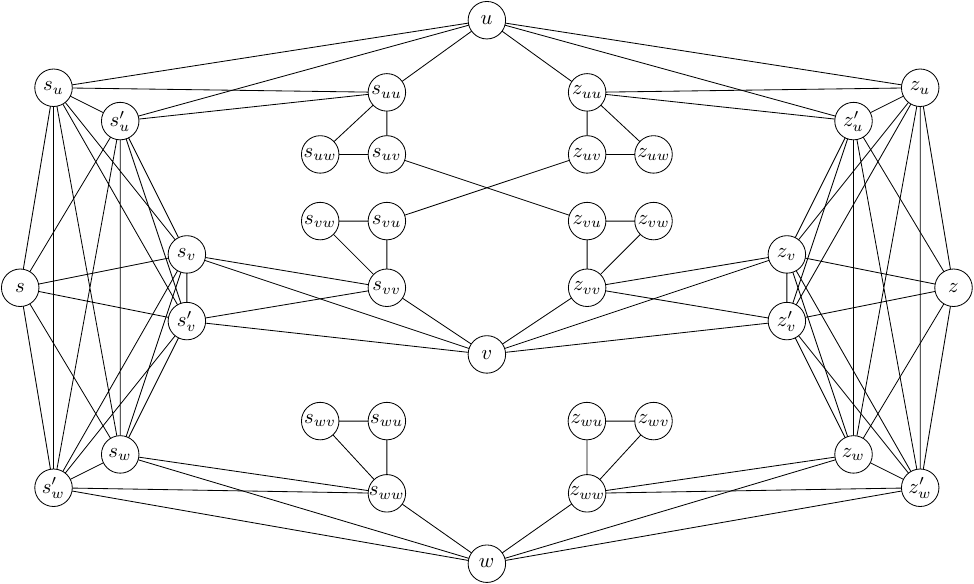}
\caption{Underlying graph of the \nonstrproblem{} instance resulting from a \textsc{Vertex Cover} instance $G = (V, E)$ on three vertices $V = \{u, v, w\}$ and one edge $E = \{\{u, v\}\}$.}
\label{fig:vc-to-separator}
\end{figure}

Observe that no temporal path can use more than one edge from $E_\delta$ as it would need to use an edge from $E_\beta$ in between.
Consequently we may assume that any minimum temporal $(s, z)$-separator only contains vertices from the set $\set{ v, s_{vv}, z_{vv} ; v \in V }$ as we could exchange any other vertex for one of these.
After these observations the rest of the proof works in complete analogy to the proof of \citet[Prop.~3.2]{zschoche2017computational}.

To see that each layer of $\TG'$ is in fact a unit interval graph, first observe that $E_\gamma(z, s)$ and $E_\delta$ are vertex-disjoint and thus each connected component of each layer is taken from a single edge class.
Furthermore, for any choice $x, y \in \{s, z\}$,
\begin{compactitem}
\item $E_\alpha(x)$ forms a clique of size $2n + 1$;
\item each connected component of $E_\beta(x)$ consists of a triangle and a size $n$ clique that share exactly one vertex;
\item each connected component of $E_\gamma(x,y)$ is the union of a triangle and a single edge, joined on a common vertex;
\item $E_\delta$ is a disjoint union of edges.
\end{compactitem}
In summary, each connected component of each layer is either a clique or a union of two cliques sharing a single vertex and thus an interval graph.
\end{proof}

 \section{Restrictions of the Underlying Graph}\label{sec:ug-classes} 
 
After having investigated layer-wise restrictions, we now turn to case (ii),
i.e.\ the study of temporal graphs whose underlying graph is contained in some graph class.
See \cref{fig:udnerlygcs} for an overview of the results.
 
One such class is that of \emph{complete-but-one} graphs, in which all but one possible edges are present.
We show that \nonstrproblem{} is \NP-hard even if the underlying graph of the
temporal input graph is complete-but-one. The main idea that we can reduce the
general problem to that on temporal graphs with a complete-but-one underlying graph by saturating the instance with
``useless'' edges, that do not create any new \nonstrpath{s,z}s.
 \begin{proposition}
  \label{lem:reductocliqueminusone}
  There is a polynomial-time many-one reduction that maps any instance $(\TG = (V,\TE,\tau),s,z,k)$ of \nonstrproblem{} to an equivalent instance~$(\TG' = (V,\TE',\tau'),s,z,k)$ such that~$E(\ug{\TG'})=\binom{V}{2}\setminus\{s,t\}$.
 \end{proposition}
 \begin{proof}
  We construct~$\TG'$ as~$(V,\TE',\tau+2)$ where
\begin{align*}
\TE' :={} &\set*{ (e, t+1) ; (e,t)\in \TE } \\
&\cup \set*{ (\{v, w\}, 1) ; \{v, w \} \in \binom{V \setminus \{s\}}{2} \setminus E(\ug{\TG}) } \\
&\cup \set*{ (\{s, v\}, \tau + 2) ; v \in V \setminus \{z\} \land \{s, v\} \notin E(\ug{\TG}) } \,.
\end{align*}
The one-to-one correspondence of the \nonstrsep{s,z}s in~$\TG$ and~$\TG'$ is immediate.
 \end{proof}
 \noindent\cref{lem:reductocliqueminusone} implies that \nonstrproblem{} remains \NP-complete 
 on all temporal graphs 
 where the underlying graph falls into a graph class containing all complete-but-one graphs, 
 for instance the classes of unit interval or threshold graphs (see \citet{brandstadt1999graph} for definitions).
 We refer to~\cref{fig:udnerlygcs} in~\cref{sec:intro} for an overview.

 Note that complete-but-one graphs are 
 not line graphs (see \citet{brandstadt1999graph} for line graphs), 
 as each complete-but-one graph (with at least five vertices) 
 contains the complete-but-one graph on five vertices 
 as an induced subgraph (see~\citet[Graph~$G_3$]{BEINEKE1970129}).
 Hence, we next study \nonstrproblem{} on temporal graphs where the underlying graph is a line graph.

 \begin{proposition}
  \label{lem:nphardline}
   \nonstrproblem{}  on temporal graphs where the underlying graph is a line graph is \NP{}-complete.
 \end{proposition}

\begin{proof}
	A \nonstrpath{s,z} $P = \big( (\{ s=v_0,v_1 \},t_1), (\{v_1, v_2\}, t_2), \dots, \linebreak ( \{ v_{\ell-1},v_\ell=z \},t_\ell)\big)$ 
	is called \emph{strict} if $t_i < t_{i+1}$ for all $i \in [\ell -1]$.
	A vertex set~$S$ is a \emph{\strsep{s,z}} if there is no \strpath{s,z} in the temporal graph~$\TG - S$.
	The \strproblem{} problem is the ``strict'' variant of \nonstrproblem{} and asks for a \strsep{s,z} instead of a \nonstrsep{s,z}.

	We reduce from the~\NP{}-complete \strproblem{} where each layer is equal and there is no vertex in the underlying graph of degree at most one \cite{zschoche2017computational}.
 Our reduction is similar to the reduction from \strproblem{} to \nonstrproblem{} due to~\citet{zschoche2017computational}.
 Let~$(\TG=(V,\TE,\tau),s,z,k)$ be an instance of~\strproblem{} with~$G_i(\TG)=G_j(\TG)$ for all~$i,j\in[\tau]$.
 We construct an instance $(\TG'=(V',\TE',\tau'),s^*,z^*,k)$ of \nonstrproblem{}, where~$\ug{\TG'}$ is a line graph, as follows.

 \begin{figure}[t!]
 \centering
  \begin{tikzpicture}
    \usetikzlibrary{shapes}
    \tikzstyle{xnode}=[circle, scale=2/3,draw]
    \tikzstyle{xxnode}=[fill, scale=1/2,draw]
    \tikzstyle{lnode}=[fill=green!80!black, color=green!80!black, diamond, scale=1/2.5, draw]
    \tikzstyle{ledge}=[ultra thick,color=green, densely dotted]
    \def\xr{0.7}
    \def\yr{0.7}

    \begin{scope}[rotate=45,scale=0.5]
    \node (a) at (0,0)[xnode]{};
    \node (b) at (2*\xr,0)[xnode]{};
    \node (c) at (0,2*\yr)[xnode]{};
    \node (d) at (-2*\xr,0)[xnode]{};
    \node (e) at (0,-2*\yr)[xnode]{};

    \draw (a) -- (b);
    \draw (a) -- (c);
    \draw (a) -- (d);
    \draw (a) -- (e);

    \end{scope}

    \begin{scope}[xshift=6*\xr cm,rotate=45]
    \node (a) at (0.5*\yr,0.5*\yr)[xnode]{};
    \node (ab) at (0.5*\xr,0)[xxnode]{};
    \node (ac) at (0,0.5*\xr)[xxnode]{};
    \node (ad) at (-0.5*\xr,0)[xxnode]{};
    \node (ae) at (0,-0.5*\xr)[xxnode]{};

    \node (ab1) at (1.5*\xr,-0.5*\yr)[xxnode]{};
    \node (ab2) at (1.5*\xr,0.5*\yr)[xxnode]{};
    \node (ab3) at (2.5*\xr,-0.5*\yr)[xxnode]{};
    \node (ab4) at (2.5*\xr,0.5*\yr)[xxnode]{};

    \node (ac1) at (-0.5*\yr,1.5*\yr)[xxnode]{};
    \node (ac2) at (0.5*\yr,1.5*\yr)[xxnode]{};
    \node (ac3) at (-0.5*\yr,2.5*\yr)[xxnode]{};
    \node (ac4) at (0.5*\yr,2.5*\yr)[xxnode]{};

    \node (ad1) at (-1.5*\xr,-0.5*\yr)[xxnode]{};
    \node (ad2) at (-1.5*\xr,0.5*\yr)[xxnode]{};
    \node (ad3) at (-2.5*\xr,-0.5*\yr)[xxnode]{};
    \node (ad4) at (-2.5*\xr,0.5*\yr)[xxnode]{};

    \node (ae1) at (-0.5*\yr,-1.5*\yr)[xxnode]{};
    \node (ae2) at (0.5*\yr,-1.5*\yr)[xxnode]{};
    \node (ae3) at (-0.5*\yr,-2.5*\yr)[xxnode]{};
    \node (ae4) at (0.5*\yr,-2.5*\yr)[xxnode]{};

    \node (b) at (4*\xr,0)[xnode]{};
    \node (b0a) at (3.5*\xr,0)[xxnode]{};
    \node (c) at (0,4*\yr)[xnode]{};
    \node (c0a) at (0,3.5*\yr)[xxnode]{};
    \node (d) at (-4*\xr,0)[xnode]{};
    \node (d0a) at (-3.5*\xr,0)[xxnode]{};
    \node (e) at (0,-4*\yr)[xnode]{};
    \node (e0a) at (0,-3.5*\yr)[xxnode]{};

    \foreach \x in {b,c,d,e}{
    \draw (a) -- (a\x) -- (a\x1) -- (a\x3) -- (\x0a) -- (a\x4) -- (a\x2) -- (a\x);
    \draw[color=red] (a\x1) -- (a\x2);
    \draw[color=red] (a\x3) -- (a\x4);
    \draw (\x0a) -- (\x);
    }

    \draw[color=red] (ab) -- (ac) -- (ad) -- (ae) -- (ab);
    \draw[color=red] (ab) -- (ad);
    \draw[color=red] (ac) -- (ae);

    \end{scope}

    \begin{scope}[xshift=14*\xr cm,rotate=45]
    \node (a) at (0,0)[lnode]{};
    \node (a0star) at (1,1)[lnode]{};
    \draw[ledge] (a) -- (a0star);

    \node (ab1) at (1*\xr,0*\yr)[lnode]{};
    \node (ab2) at (2*\xr,1*\yr)[lnode]{};
    \node (ab3) at (2*\xr,-1*\yr)[lnode]{};
    \node (ab4) at (3*\xr,0*\yr)[lnode]{};

    \node (ac1) at (0*\yr,1*\yr)[lnode]{};
    \node (ac2) at (1*\yr,2*\yr)[lnode]{};
    \node (ac3) at (-1*\yr,2*\yr)[lnode]{};
    \node (ac4) at (0*\yr,3*\yr)[lnode]{};

    \node (ad1) at (-1*\xr,0*\yr)[lnode]{};
    \node (ad2) at (-2*\xr,1*\yr)[lnode]{};
    \node (ad3) at (-2*\xr,-1*\yr)[lnode]{};
    \node (ad4) at (-3*\xr,0*\yr)[lnode]{};

    \node (ae1) at (0*\yr,-1*\yr)[lnode]{};
    \node (ae2) at (1*\yr,-2*\yr)[lnode]{};
    \node (ae3) at (-1*\yr,-2*\yr)[lnode]{};
    \node (ae4) at (0*\yr,-3*\yr)[lnode]{};

    \node (b) at (3.7*\xr,0)[lnode]{};
    \node (b0star) at (4.3*\xr,0)[lnode]{};
    \node (c) at (0,3.7*\yr)[lnode]{};
    \node (c0star) at (0,4.3*\xr)[lnode]{};
    \node (d) at (-3.7*\xr,0)[lnode]{};
    \node (d0star) at (-4.3*\xr,0)[lnode]{};
    \node (e) at (0,-3.7*\yr)[lnode]{};
    \node (e0star) at (0,-4.3*\xr)[lnode]{};
    \foreach \x in {b,c,d,e}{
    \draw[ledge] (\x) -- (\x0star);
    \draw[ledge]  (a\x1) -- (a\x3) -- (a\x4) -- (a\x2) -- (a\x1);
    \draw[ledge] (a) to (a\x1);
    \draw[ledge] (a\x4) to (\x);
    }
    
    \end{scope}

    \begin{scope}[xshift=14*\xr cm,rotate=45]
    \node (a) at (0.5*\yr,0.5*\yr)[xnode]{};
    \node (ab) at (0.5*\xr,0)[xxnode]{};
    \node (ac) at (0,0.5*\xr)[xxnode]{};
    \node (ad) at (-0.5*\xr,0)[xxnode]{};
    \node (ae) at (0,-0.5*\xr)[xxnode]{};

    \node (ab1) at (1.5*\xr,-0.5*\yr)[xxnode]{};
    \node (ab2) at (1.5*\xr,0.5*\yr)[xxnode]{};
    \node (ab3) at (2.5*\xr,-0.5*\yr)[xxnode]{};
    \node (ab4) at (2.5*\xr,0.5*\yr)[xxnode]{};

    \node (ac1) at (-0.5*\yr,1.5*\yr)[xxnode]{};
    \node (ac2) at (0.5*\yr,1.5*\yr)[xxnode]{};
    \node (ac3) at (-0.5*\yr,2.5*\yr)[xxnode]{};
    \node (ac4) at (0.5*\yr,2.5*\yr)[xxnode]{};

    \node (ad1) at (-1.5*\xr,-0.5*\yr)[xxnode]{};
    \node (ad2) at (-1.5*\xr,0.5*\yr)[xxnode]{};
    \node (ad3) at (-2.5*\xr,-0.5*\yr)[xxnode]{};
    \node (ad4) at (-2.5*\xr,0.5*\yr)[xxnode]{};

    \node (ae1) at (-0.5*\yr,-1.5*\yr)[xxnode]{};
    \node (ae2) at (0.5*\yr,-1.5*\yr)[xxnode]{};
    \node (ae3) at (-0.5*\yr,-2.5*\yr)[xxnode]{};
    \node (ae4) at (0.5*\yr,-2.5*\yr)[xxnode]{};

    \node (b) at (4*\xr,0)[xnode]{};
    \node (b0a) at (3.5*\xr,0)[xxnode]{};
    \node (c) at (0,4*\yr)[xnode]{};
    \node (c0a) at (0,3.5*\yr)[xxnode]{};
    \node (d) at (-4*\xr,0)[xnode]{};
    \node (d0a) at (-3.5*\xr,0)[xxnode]{};
    \node (e) at (0,-4*\yr)[xnode]{};
    \node (e0a) at (0,-3.5*\yr)[xxnode]{};

    \foreach \x in {b,c,d,e}{
    \draw (a) -- (a\x) -- (a\x1) -- (a\x3) -- (\x0a) -- (a\x4) -- (a\x2) -- (a\x);
    \draw[color=black] (a\x1) -- (a\x2);
    \draw[color=black] (a\x3) -- (a\x4);
    \draw (\x0a) -- (\x);
    }

    \draw[color=black] (ab) -- (ac) -- (ad) -- (ae) -- (ab);
    \draw[color=black] (ab) -- (ad);
    \draw[color=black] (ac) -- (ae);
    \end{scope}
  \end{tikzpicture}
  \caption{
	  The underlying graph $\ug{\TG}$ on the left-hand side, the graph $G'$ in the middle, and the graph $H$ (dotted/green) on the right-hand side.
      Red edges (stilts) are the only edges present in layer 1.
    }
  \label{fig:linegraphs}
 \end{figure}
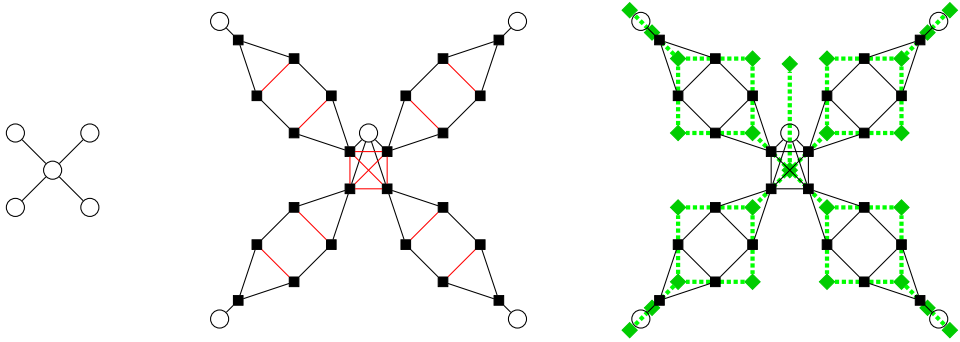
 Let~$G=(V,E):=\ug{\TG}$.
 We construct a graph~$G'=(V',E')$ which will be the underlying graph of~$\TG'$ (refer to~\cref{fig:linegraphs} for an illustration).
 Let~$G'$ be initially a copy of~$G$.
 As a first step, iteratively replace each vertex~$v \in V(G)$ by a set~$W_v$ of~$\deg(v)+1$ vertices such that each edge incident with~$v$ is incident with exactly one vertex from~$W_v$ and every vertex in~$W_v$ is of degree at most one, where $\deg(v)$ denotes the degree of~$v$.
 Note that there is exactly one vertex in~$W_v$ not being incident with an edge, and we call this vertex~$v^*$.
 Denote the edge set of~$G'$ after the first step by~$E''$.
 Next, replace each edge~$\{x,y\}\in E'$ by two paths of length three.
 Denote by $e_{(x,y)}^x$,~$e_{(x,y)}^y$ and by~$e_{(y,x)}^x$,~$e_{(y,x)}^y$ the inner vertices of each path respectively, where $e_{(x,y)}^x$,~$e_{(y,x)}^x$ are neighbors of~$x$ and $e_{(x,y)}^y$,~$e_{(y,x)}^y$ are neighbors of~$y$.
 Next, connect the neighbors of~$x$ on both paths by an edge, and connect the neighbors of~$y$ on both paths by an edge (we refer to these edges as \emph{path stilts}).
 Finally, for each~$v\in V$, turn~$W_v$ into a clique (and refer to all edges in the clique not incident with~$v^*$ as \emph{clique stilts}).
 This finishes the construction of~$G'$.
 It is not hard to see that~$G'$ is indeed a line graph (see \cref{fig:linegraphs} for the graph~$H$ for which holds~$\mathcal{L}(H)=G'$).
 
 We construct~$\TG'$ with vertex set~$V'$ and underlying graph~$G'$ as follows.
 Add the set~$\{(e,1)\mid e\in E'\text{ is a stilt}\}$.
 For each~$2\leq t\leq 2\tau+1$, add the set~$\{(\{v^*,w\},t)\mid w^*\in W_v\setminus\{v^*\}\}$.
 For each~$1\leq t\leq \tau$, add the set of temporal edges $\{(\{x,e_{(x,y)}^x\},2t),(\{e_{(x,y)}^x,e_{(x,y)}^y\},2t),(\{y,e_{(y,x)}^y\},2t)$, $(\{e_{(y,x)}^x,e_{(y,x)}^y\},2t)\mid \{x,y\}\in E''\}$ and
 $\{(\{x,e_{(y,x)}^x\},2t+1),(\{y,e_{(y,x)}^y\},2t+1)\mid \{x,y\}\in E''\}$.
 This finishes the construction of~$\TG'$.
 It is not difficult to see that~$\ug{\TG'}=G'$.
 
 For the correctness, it is enough to observe the following.
 There is no temporal~$(s^*,z^*)$-path starting at time step one.
 It holds that~$\{v,w\}\in E$ if and only if there is a temporal~$(v^*,w^*)$-path starting at~$t$ and ending at~$t+1$ for every~$2\leq t\leq 2\tau$ that does not contain any~$u^*$ except for~$v^*$ and $w^*$.
 We can assume a minimum temporal $(s^*,z^*)$-separator in~$\TG'$ to only contain vertices in~$\{v^*\mid v\in V\}$.
 Hence, the following is immediate: if~$S\subseteq V$ is a strict temporal $(s,z)$-separator in~$\TG$, then~$\{v^*\mid v\in S\}$ is a temporal~$(s^*,z^*)$-separator in~$\TG'$, and vice versa.
\end{proof}
An alternative way to classify an instance of a graph-theoretic problem is through its (graph) parameters.
We study \nonstrproblem{} according to some parameterizations. In the following we show that any upper bound on the maximum length of a \nonstrpath{s,z} leads to a straightforward search-tree algorithm. This gives us a tool to solve \nonstrproblem{} on temporal graphs where the underlying graph is restricted in a way that allows us to upper-bound the length of any temporal path.

\begin{lemma}
	\label{thm:fpt-k-length}
	\nonstrproblem{} is solvable in~$\ON(\ell^{k} \cdot |\TE|)$~time, and thus is fixed-parameter tractable when parameterized by $k+\ell$, 
	where~$k$ is the solution size and~$\ell$ is the maximum length of a \nonstrpath{s,z}.
\end{lemma}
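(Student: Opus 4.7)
The plan is to design a branching algorithm that repeatedly finds a \nonstrpath{s,z} and branches on which of its internal vertices to place into a tentative separator.

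More concretely, I would proceed as follows. First, I would invoke \cref{lemma:get-path} to find (in~$\ON(|\TE|)$ time) a \nonstrpath{s,z}~$P = ((e_1,t_1),\dots,(e_m,t_m))$ in~$\TG$, if one exists. If no such path exists, then the empty set is already a \nonstrsep{s,z} and we return \yes{}. If one exists, any \nonstrsep{s,z} must contain at least one internal vertex of~$P$ (vertices other than~$s$ and~$z$); since the length of~$P$ is at most~$\ell$, the number of such internal vertices is at most~$\ell - 1$. I would therefore branch into at most~$\ell - 1 \leq \ell$ subcases, each corresponding to a choice of an internal vertex~$v$ of~$P$ to include in the separator, and recursively solve the instance~$(\TG - \{v\}, s, z, k - 1)$. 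The base case is~$k = 0$: if a \nonstrpath{s,z} still exists after removing the currently chosen vertices, return \no{}; otherwise return \yes{} together with the accumulated set.

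For correctness, it suffices to observe that for any \nonstrpath{s,z}~$P$ and any \nonstrsep{s,z}~$S$ of size at most~$k$, $S$ contains at least one internal vertex of~$P$. Hence at least one of the branches explores a subset of a valid separator of size~$k - 1$ in the reduced instance, and the recursion is sound and complete. For the running time, the recursion tree has depth at most~$k$ and branching factor at most~$\ell$, so there are at most~$\ell^k$ leaves. Each node of the recursion tree only invokes \cref{lemma:get-path} once on a subgraph of~$\TG$, which takes~$\ON(|\TE|)$ time. This yields the claimed total running time of~$\ON(\ell^{k} \cdot |\TE|)$, and in particular fixed-parameter tractability with respect to the combined parameter~$k + \ell$.

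I do not expect any real obstacle here: the argument is a textbook bounded-search-tree reduction on top of \cref{lemma:get-path}. The only mild subtlety is making sure the path returned by \cref{lemma:get-path} has at least one internal vertex to branch on, which follows from the standing assumption in the preliminaries that there is no time-edge directly between~$s$ and~$z$ (so any \nonstrpath{s,z} has length at least~$2$).
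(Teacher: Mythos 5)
Your proposal is correct and matches the paper's own proof: the paper's \texttt{getSeparator} routine (\cref{alg:strsep}) is exactly this bounded-search-tree algorithm built on \cref{lemma:get-path}, branching on the at most~$\ell-1$ internal vertices of a found \nonstrpath{s,z} to depth~$k$, giving the same~$\ON(\ell^k\cdot|\TE|)$ bound. Your correctness argument (every separator must hit the internal vertices of every path) is the same as the paper's partial-solution induction, so there is nothing to add.
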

  \begin{proof}
	  We present a depth-first search algorithm (see \cref{alg:strsep}) to show fixed-pa\-ram\-e\-ter tractability.
	  Let~$\I := (\TG = (V,\TE,\tau),s,z,k)$ be an instance of \nonstrproblem{}.
	  The basic idea of our algorithm is simple: at least one vertex of each \nonstrpath{s,z} must be in the \nonstrsep{s,z}.
	  Thus, we compute an arbitrary \nonstrpath{s,z} (Line 4) and branch over all
	  visited vertices of that \nonstrpath{s,z} (Line 9) until we cannot find a
	  \nonstrpath{s,z} in~$\TG-S$, in which case the algorithm outputs \yes, or
	  until we already picked~$k$~vertices to be in the \nonstrsep{s,z}, in which case the algorithm outputs \no.
	  Hence, if the algorithm outputs~\yes, then~$S$ is a \nonstrsep{s,z}.

	  It remains to show that if there is a \nonstrsep{s, z} in~$\TG$, then the algorithm outputs~\yes. 
	  We call a tuple $(S', k')$ a \emph{partial solution} if there is a \nonstrsep{s, z} $S$ of size~$k$ such that $S'\subseteq S$ and $k'\ge k-|S'|$. 
	  Note that $(\emptyset, k)$ is a trivial partial solution. 
	  Now assume \texttt{getSeparator} is called with a partial solution $(S', k')$, then we have that either $S'$ is already a \nonstrsep{s, z} in which case the algorithm outputs~\yes, or there is a \nonstrpath{s, z} $P$ in $\TG - S'$ and a \nonstrsep{s, z}~$S$ such that $S'\subseteq S$. It is clear that $S\cap V(P)\neq \emptyset$, let $v\in S\cap V(P)$. 
	  At some point the algorithm chooses the vertex~$v$ in the for-loop in Line 9 and thus invokes a recursive call with~$(S'\cup \{v\}, k'-1)$. 
	  It is clear that $(S'\cup\{v\})\subseteq S$, we additionally have that $k'-1\ge k-|S'\cup\{v\}|$ since $v\notin S'$. 
	  Hence, we have that $(S'\cup \{v\}, k'-1)$ is a partial solution. 
	  Furthermore, we have that~$|S'| < |S'\cup\{v\}|$. 
	  It is easy to see that if there is a partial solution $(S^\star, k^\star)$ with $|S^\star|=k$, then~$S^\star$ is a \nonstrsep{s, z}. 
	  This implies that the algorithm eventually finds a \nonstrsep{s, z} if one exists and hence is correct.

	  From \cref{lemma:get-path}, we know that we can perform the computation in Line 4 in~$\ON(|\TE|)$ time.
	  Now, we upper-bound the size of the search tree in which each node is a call of the \texttt{getSeparator()} function.
	  We can upper-bound the maximum depth of the search tree by~$k$ as in each recursive call we decrease~$k$ by one, until~$k=0$.
	  Furthermore, a \nonstrpath{s,z} of length at most~$\ell$ visits at most $\ell-1$ vertices different from $s$ and $z$.
	  Thus we can upper-bound the running time of \cref{alg:strsep} by~$\ON(\ell^k \cdot |\TE|)$.
  \end{proof}

  \SetKwProg{Fn}{function}{}{}
  \begin{algorithm}[t]
  \KwIn{A temporal graph~$\TG=(V,\TE,\tau)$, two distinct vertices~$s,z\in V$, and an integer~$k \in \N$.}
  \KwOut{Whether~$\TG$ admits a \nonstrsep{s,z} of size at most~$k$.}

  {\tt getSeparator($\emptyset$,$k$)}\;
	  output \no\;

  \Fn{\tt getSeparator($S$,$k$)}{
	  compute \nonstrpath{s,z}~$P$ in~$\TG-S$\;
	  \uIf{there is no \nonstrpath{s,z} in~$\TG-S$}{
		  output \yes\;
		  exit\;
	  }
	  \ElseIf{$k>0$}{
		  \For{$v \in V(P) \setminus \{s,z\}$}{
			  {\tt getSeparator($S \cup \{v\}$,$k-1$)}\;
		  }
	  }
  }
  \caption{The algorithm behind \cref{thm:fpt-k-length}.}
	  \label{alg:strsep}
  \end{algorithm}

From \cref{thm:fpt-k-length} we can derive that \nonstrproblem{} is linear-time solvable on temporal graph classes where the underlying graph has a constant vertex cover number.\footnote{
The vertex cover number of a graph is the size of the smallest vertex subset that intersects all edges of the graph.}
\begin{corollary}
	\label{thm:fpt-vc}
	\nonstrproblem{} can be solved in~$\ON((2\cdot\vc)^{\vc} \cdot |\TE|)$ time, and thus is fixed-parameter tractable when parameterized by the vertex cover number~$\vc$ of the underlying graph.
\end{corollary}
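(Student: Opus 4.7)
My plan is to bound both the maximum length of any \nonstrpath{s,z} and the residual solution size by $\ON(\vc)$ functions of the vertex cover number, and then invoke \cref{thm:fpt-k-length}. First, I will compute a vertex cover $C \subseteq V$ of the underlying graph $\ul{G}$ of size $\vc$ in time $\ON(1.2738^{\vc}+\vc \cdot n)$ using known parameterized algorithms (absorbed by the final bound), and set $I := V \setminus C$.

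For the path-length bound, I will observe that no simple path in $\ul{G}$ contains two consecutive vertices from $I$, since the edge between them would be uncovered by $C$. Consequently, on any simple path with $\ell + 1$ vertices, at least $\lceil (\ell+1)/2 \rceil$ vertices lie in $C$; since each vertex of $C$ appears at most once on the path, this yields $\ell \leq 2\vc$. In particular, every \nonstrpath{s,z} has length at most $2\vc$.

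For the budget bound, I let $K \subseteq I$ denote the set of vertices that are the internal vertex of some \nonstrpath{s,z} of length $2$. Each $v \in K$ is the unique internal vertex of some \nonstrpath{s,z} and thus must belong to every \nonstrsep{s,z}; so I return \no{} if $|K| > k$, and otherwise pass to the equivalent instance $(\TG - K, s, z, k - |K|)$. I will then argue that $C \setminus \{s,z\}$ is itself a \nonstrsep{s,z} of $\TG - K$: after deleting $C \setminus \{s,z\}$, the remaining vertex set is $\{s,z\} \cup (I \setminus K)$; all surviving edges are incident to $s$ or $z$ because $I$ is independent; hence any remaining \nonstrpath{s,z} would have length exactly $2$ with internal vertex in $I \setminus K$, contradicting the definition of $K$. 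Therefore, if $k - |K| \geq \vc$ I return \yes{}; otherwise $k - |K| < \vc$, and \cref{thm:fpt-k-length} applied to $(\TG - K, s, z, k - |K|)$ with $\ell \leq 2\vc$ runs in $\ON((2\vc)^{\vc} \cdot |\TE|) = \ON(\vc^{\vc} \cdot |\TE|)$ time (absorbing the factor $2^{\vc}$ in the $\ON$).

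The main obstacle is that, unlike the path length, the minimum \nonstrsep{s,z} size is \emph{not} a priori bounded by $\vc$: for instance, when $s, z \in C$ share many common neighbors in $I$, every such common neighbor must lie in every separator. The crucial step is therefore the preprocessing that forces $K$ into the separator; this drives the residual budget below $\vc$ and enables a clean invocation of \cref{thm:fpt-k-length}.
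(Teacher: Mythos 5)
Your proposal is correct and follows essentially the same route as the paper: bound the length of any \nonstrpath{s,z} by $2\vc$ via the vertex cover, force the internal vertices of length-two temporal $(s,z)$-paths into the separator so that the vertex cover (minus $s,z$) witnesses either a \yes{}-instance or a residual budget below $\vc$, and then invoke \cref{thm:fpt-k-length}. The only cosmetic difference is that you delete only the length-two-path internal vertices lying outside the cover $C$, whereas the paper assumes w.l.o.g.\ that no length-two \nonstrpath{s,z} exists at all; both yield the same conclusion.
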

  \begin{proof}
	  Let $\I := (\TG = (V,\TE,\tau),s,z,k)$ be an instance of \nonstrproblem{} and $\vc$ be the vertex cover number of the underlying graph.
	  We prove this in two steps. We first show that the maximum length $\ell$ of a
	  \nonstrpath{s,z} is upper-bounded by $2\cdot\vc$, and then we show that $k$
	  can be upper-bounded by $\vc$.
	  
	  Since at least one endpoint of each edge of the
	  underlying graph~$\ug{\TG}=(V,\ug{E})$ must be in the vertex cover, the maximum length of a path in~$\ug{\TG}$, and hence the maximum length of a \nonstrpath{s,z}, is at most~$2\cdot\vc$.

	  Without loss of generality we assume that there is no \nonstrpath{s,z}~$P$ of length two, 
	  because the vertex~$v \in V(P) \setminus \set{s,z}$ must be contained in every~\nonstrsep{s,z}. 
	  We can find such a \nonstrpath{s,z} by restricting the breadth-first search
	  of \cref{lemma:get-path} such that it explores only vertices which are reachable by a path which contains at most two non-column edges in the static expansion.
	  Let $V' \subseteq V$ be a vertex cover of size at most $\vc$ for $\ug{\TG}$.
	  The graph $\ug{\TG} - (V' \setminus \set{s,z})$ does not contain any \npath{s,z}s of length greater than two
	  because all remaining edges are incident with~$s$ or~$z$.
	  By our assumption, we know that neither of these \npath{s,z}s correspond to a \nonstrpath{s,z} in $\TG$.
	  Hence, $k < \vc$ or $\I$ is a \yes-instance.
	  It is well-known that if $\ug{\TG}$ admits a vertex cover of size~$\vc$, then
	  we can compute one in~$\ON(2^{\vc}\cdot |\ug{E}|)$~time~\cite{cygan2015parameterized}.
	  The application of \cref{thm:fpt-k-length} completes the proof.
  \end{proof}
Another graph parameter which upper-bounds the maximum length of an \npath{s,z} in the underlying graph is the \emph{tree-depth} of the underlying graph.
First, we provide a formal definition of tree-depth. 
For more details, we refer to \citet{de2012sparsity}.
	\begin{definition}
		The \emph{tree-depth} for graph $G$ with connected components $G_1,G_2,\dots,G_p$ is recursively defined by:
		\[
		\td(G) :=
		\begin{cases}
			1 & \text{if $G$ has only one vertex,}\\
			\max_{i \in [p]} \td(G_i) & \text{if $G$ is not connected, and}\\
			1 + \min_{v \in V(G)} \td(G - \set{v}) & \text{if $G$ is connected.}
		\end{cases}
		\]
	\end{definition}
\begin{corollary}
	\label{thm:fpt-td-k}
	\nonstrproblem{} is solvable in~$\ON(2^{\td(\ug{\TG}) \cdot k} \cdot |\TE|)$ time, and thus is fixed-parameter tractable when parameterized by $k + \td(\ug{\TG})$,
	where $k$ is the solution size.
\end{corollary}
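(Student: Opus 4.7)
The plan is to reduce this corollary directly to \cref{thm:fpt-k-length} by upper bounding the length of any \nonstrpath{s,z} in~$\TG$ by a function of~$\td(\ug{G})$ only. Since every \nonstrpath{s,z} in~$\TG$ is, by definition, also an~$(s,z)$-path in the underlying graph~$\ug{G}$, it suffices to bound the length of the longest path in a graph of tree-depth~$d$.

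The key ingredient I would cite is the classical characterization of tree-depth via forbidden paths: a graph~$G$ has~$\td(G)\le d$ if and only if~$G$ contains no simple path on~$2^d$ vertices as a subgraph (this is a standard fact, provable by a short induction on~$d$ using the recursive definition of tree-depth as $\td(G)=1+\min_{v\in V(G)}\td(G-v)$ on connected graphs with~$\ge 2$ vertices; one shows that removing any vertex~$v$ on a longest path splits it into two subpaths, at least one of which has length at least half of the original). In particular, in a graph of tree-depth~$d$ every path has at most~$2^d-1$ edges, so its length (as defined in the preliminaries) is at most~$2^{\td(\ug{G})}$.

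With this bound in hand, I would finish by invoking \cref{thm:fpt-k-length} with~$\ell\le 2^{\td(\ug{G})}$. The running time then becomes
\[
\ON\bigl(\ell^{k}\cdot|\TE|\bigr)\;\le\;\ON\bigl((2^{\td(\ug{G})})^{k}\cdot|\TE|\bigr)\;=\;\ON\bigl(2^{\td(\ug{G})\cdot k}\cdot|\TE|\bigr),
\]
which matches the claimed bound. Fixed-parameter tractability with respect to~$k+\td(\ug{G})$ is then immediate because the exponent is a function of the combined parameter only, and all other factors are linear in the input size.

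The only real obstacle is justifying the path-length bound in terms of tree-depth; once that is in place the argument is a one-line application of the previous lemma. I would either cite the bound from a standard reference on sparsity/tree-depth or include a brief self-contained induction, depending on how much the paper relies on tree-depth elsewhere. Note that no computation of an actual tree-depth decomposition is required, since \cref{thm:fpt-k-length} only needs the temporal-path oracle of \cref{lemma:get-path}, and the bound on~$\ell$ is used purely analytically.
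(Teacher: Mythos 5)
Your proposal is correct and follows essentially the same route as the paper: bound the maximum length of any temporal $(s,z)$-path by the longest path in the underlying graph, show this is at most $2^{\td(\ug{G})}$ via a standard tree-depth fact, and then invoke \cref{thm:fpt-k-length}. The only cosmetic difference is the justification of the path bound (you use the forbidden-path characterization of tree-depth, the paper uses the sandwich $\log_2(h)\leq\td(G)\leq h$ for the DFS-tree height $h$), which amounts to the same classical fact.
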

\begin{proof}[Proof of \cref{thm:fpt-td-k}]
	The tree-depth of a graph $G$ is bounded by~$ \log_2(h) \leq \td(G)$ \cite[Lemma~17.2]{de2012sparsity}, where~$h$ denotes the height of a depth-first search tree of~$G$.
	It follows that $h\leq 2^{\td(G)}$ and hence, all paths in $G$ are of length at
	most $2^{\td(G)}$.
	Then, application of \cref{thm:fpt-k-length} completes the proof.
\end{proof}
One of the tools from the repertoire for designing fixed-parameter algorithms
for (static) graph problems are tree
decompositions~\cite{downey2013fundamentals,flum2006parameterized,Nie06,cygan2015parameterized}.
A tree decomposition is a mapping of a graph into a related tree-like structure.
For many graph problems this tree-like structure can be used to formulate a bottom-up dynamic program that starts at the leaves and ends at the root of the tree decomposition.
Indeed, if we parameterize by the treewidth of the underlying graph~$\tw(\ug{\TG})$, then we obtain an \XP-algorithm by dynamic programming. 
Furhtermore, if we add the maximum label $\tau$ to the parameter, then we obtain fixed-parameter tractability when parameterized by $\tw(\ug{\TG})+\tau$.
\begin{theorem}
	\label{thm:fpt-tw-tau}
	For a given tree decomposition of the underlying graph,
	one can solve \nonstrproblem{} in $\ON((\tau+2)^{\tw(\ug{\TG})+2} \cdot \tw(\ug{\TG}) \cdot |V| \cdot |\TE|)$ time,
	where~$\tau$ is the maximum time label.
\end{theorem}
\begin{figure}[t!]
	\includegraphics[width=\textwidth]{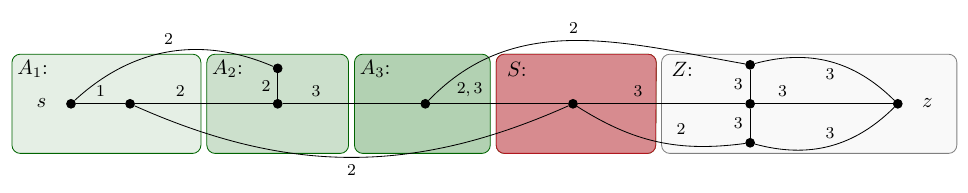}
	\caption{
		The idea for the dynamic program from \cref{thm:fpt-tw-tau} for a temporal graph $\TG$. 
		Vertices in $S$ form the \nonstrsep{s,z},
		vertices in $Z$ are not reachable from $s$ in $\TG - S$, and
		vertices in $A_t$ are not reachable from $s$ in $\TG - S$ before time $t$.
		}
	\label{fig:dp-idea}
\end{figure}
\noindent 
\cref{thm:fpt-tw-tau} is proved by constructing a dynamic program which is based on the fact that for each
vertex~$v \in V\setminus\{s\}$ in a temporal graph $\TG=(V,\TE,\tau)$ there is a
point of time~$t \in [\tau]$ such that~$v$ cannot be reached from~$s \in V$ before time~$t$.
In particular, we guess a 
partition~$V = A_1 \uplus A_2 \uplus \ldots \uplus A_\tau \uplus S \uplus Z$ 
such that (i)~$S$ is the \nonstrsep{s,z}, (ii) in~$\TG-S$~no vertex contained in~$Z$ is reachable from~$s$, and (iii) no vertex~$v \in A_t$ can be reached from~$s$ before time step~$t$, where~$t \in [\tau]$.
See \cref{fig:dp-idea} for an illustrative example.
Due to its length, the formal proof of \cref{thm:fpt-tw-tau} is deferred to \ref{proof:thm:fpt-tw-tau}.

Note that this result implies that \nonstrproblem{} is fixed-parameter tractable when parameterized by $\tw(\ug{\TG}) + \tau$.
  \gappto{\appendixproof}{
    \section{Proof of \cref{thm:fpt-tw-tau}} 
    \label{proof:thm:fpt-tw-tau}
	\renewcommand{\thesection}{4}
	\setcounter{theorem}{6}
\begin{theorem}
	For a given tree decomposition of the underlying graph,
	one can solve \nonstrproblem{} in $\ON((\tau+2)^{\tw(\ug{\TG})+2} \cdot \tw(\ug{\TG}) \cdot |V| \cdot |\TE|)$ time,
	where~$\tau$ is the maximum time label.
\end{theorem}
	\renewcommand{\thesection}{A}
	\setcounter{theorem}{0}
  We prove \cref{thm:fpt-tw-tau} by introducing a dynamic program which is executed on a nice tree decomposition.
\begin{definition}
	  \label{def:nice-tree-dec}
	  A tree decomposition~$\T := (T,(B_i)_{i \in V(T)})$ of a graph~$G$ is a \emph{nice tree decomposition} if $T$ is rooted, every node of the tree~$T$ has at most two children nodes, and for each node~$i\in V(T)$ the following conditions are satisfied:
	  \begin{compactenum}[(i)]
		  \item If~$i$ has two children nodes~$k,j \in V(T)$ in~$T$, then~$B_i = B_k = B_j$. Node~$i$ is called a \emph{join node}.
		  \item If~$i$ has one child node~$j$, then one of the following conditions must hold:
			  \begin{compactenum}
				  \item~$B_i = B_j \cup\{ v \}$. 
				  Node~$i$ is called an \emph{introduce node} of~$v$.
				  \item~$B_i = B_j \setminus \{ v \}$. 
				  Node~$i$ is called a \emph{forget node} of~$v$.
			  \end{compactenum}
		  \item If~$i$ is a leaf in~$T$, then~$|B_i| = 1$. Node~$i$ is called a \emph{leaf node}.
	  \end{compactenum}
	  For the node~$i \in V(T)$, the tree~$T_i$ denotes the subtree of~$T$ rooted at~$i$. %
	  The set~$B(T_i) := \bigcup_{j \in V(T_i)} B_j$ is the union of all bags of~$T_i$.
  \end{definition}

Note that a tree decomposition of width~$\ON(\tw(G))$ for a given graph $G$ with $n$ vertices can be computed in~$2^{\ON{(\tw(G))}} \cdot n$ time~\cite{DBLP:journals/siamcomp/BodlaenderDDFLP16} and can be turned into a nice tree decomposition in polynomial-time \cite[Lemma 7.4]{cygan2015parameterized}.

  We are going to color~$V$ with~$\tau+2$ colors~$\langle A_{[1:\tau]},S,Z \rangle$.
  If a vertex~$v \in V$ has color~$Y \in \{A_{[1:\tau]},S,Z\}$, then we denote this by~$v \in Y$. Thus, formally each color forms a subset of the vertices.
  The meaning of colors is that 
  if~$v \in S$, then~$v$ is in the \nonstrsep{s,z};
  if~$v \in Z$, then~$v$ is not reachable from~$s$ in~$\TG - S$; and
  if~$v \in A_i$, then~$v$ cannot be reached before time point~$i$ from~$s$.

  \begin{definition}
    We say that~$\langle A_{[1:\tau]},S,Z \rangle$ is a \emph{coloring} of~$X \subseteq V(\TG)$ if~$X = A_1 \uplus A_2 \uplus \dots \uplus A_\tau \uplus S \uplus Z$.
    A coloring~$\langle A_{[1:\tau]},S,Z \rangle$ of~$X \subseteq V(\TG)$ is \emph{valid} if
    \begin{inparaenum}[(i)]
	    \item $s \in A_1$,
	    \item $z \in Z$, and
	    \item for all~$a \in A_i$,~$a' \in A_j$, and~$b \in Z$
    \end{inparaenum}
		    \begin{compactitem}
		    \item there is no \nonstrpath{a,b} with departure time at least~$i$ in~$\TG[X] - S$, and
		    \item there is no \nonstrpath{a,a'} with departure time at least~$i$ and arrival time at most~$j-1$ in~$\TG[X] - S$.
		    \end{compactitem}
    
For $Y \supseteq X$, a valid coloring ~$\langle A_{[1:\tau]}',S',Z' \rangle$ of~$Y$ is called an \emph{\exsion} of~$\langle A_{[1:\tau]},S,Z \rangle$ if~$S \subseteq S'$,~$Z \subseteq Z'$, and~$A_i \subseteq A_i'$ for all~$i \in [\tau]$.
If such an \exsion{} exists, $\langle A_{[1:\tau]},S,Z \rangle$ is said to be \emph{\exble} to $Y$.
  \end{definition}

  \begin{lemma}\label{dp-tw-tau:correctness}
	  Let~$\TG = (V,\TE,\tau)$ be a temporal graph, and~$s,z \in V$.
	  There is a valid coloring~$\langle A_{[1:\tau]},S,Z \rangle$ of~$V$ if and only if $S$ is a \nonstrsep{s,z} in~$\TG$.
  \end{lemma}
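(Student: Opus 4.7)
The plan is to prove both directions of the equivalence essentially by transferring between separators and valid colorings in a constructive way.

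For the ``only if'' direction (valid coloring implies separator), I would argue that the set $S$ from any valid coloring is itself the desired temporal $(s,z)$-separator. The validity conditions directly enforce $s \in A_1$ and $z \in Z$, and instantiating the first non-existence condition with $a := s$, $i := 1$, and $b := z$ forbids any temporal $(s,z)$-path in $\TG - S$ with departure time at least $1$. Since every temporal path in $\TG - S$ has departure time at least $1$, this yields that $S$ is a temporal $(s,z)$-separator of size $k$.

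For the ``if'' direction (separator implies valid coloring), given a temporal $(s,z)$-separator $S'$ of size $k$, I would build the coloring canonically as follows: set $S := S'$; for each $v \in V \setminus S'$ that is reachable from $s$ in $\TG - S'$ (with $s$ itself counted as reachable at time $1$), place $v$ into $A_{t(v)}$, where $t(v) \in [\tau]$ is the earliest arrival time of any temporal $(s,v)$-path in $\TG - S'$; and place every other vertex of $V \setminus S'$ into $Z$. This clearly partitions $V$. The conditions $s \in A_1$ and $z \in Z$ hold by construction and because $S'$ separates $z$ from $s$.

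To verify the two non-existence conditions, I would argue by contradiction. Suppose there were a temporal $(a,b)$-path with departure time at least $i$ in $\TG[V] - S$ for some $a \in A_i$ and $b \in Z$. Concatenating the temporal $(s,a)$-path witnessing $a \in A_i$ (arrival time at most $i$) with this $(a,b)$-path yields a time-respecting walk from $s$ to $b$ in $\TG - S$. The standard shortcutting argument (iteratively removing segments between two occurrences of a repeated vertex, which preserves the monotonicity of labels and does not increase the arrival time) turns this walk into a temporal $(s,b)$-path in $\TG - S$, contradicting $b \in Z$. The second non-existence condition is handled analogously: a temporal $(a,a')$-path of departure $\geq i$ and arrival $\leq j-1$ would, after the same concatenation and shortcutting, produce a temporal $(s,a')$-path of arrival $\leq j-1$, contradicting $a' \in A_j$.

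The main obstacle I would expect is the walk-to-path conversion invoked in the concatenation step, since the official temporal path definition requires pairwise distinct vertices; I will therefore invest a sentence or two in establishing that shortcutting preserves both time-monotonicity and arrival time, which then makes the rest of the argument essentially bookkeeping on the definitions of $t(v)$ and of the validity conditions.
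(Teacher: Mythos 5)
Your proposal is correct and follows essentially the same route as the paper: in one direction the set $S$ of a valid coloring is itself the separator (instantiating the validity condition with $a=s$, $i=1$, $b=z$), and in the other direction the paper builds exactly your canonical coloring, placing each reachable vertex into $A_t$ for its earliest arrival time $t$ and all unreachable non-separator vertices into $Z$. The only difference is that you spell out the concatenation-plus-shortcutting (walk-to-path) step, which the paper's argument uses implicitly; this is a welcome but not divergent addition.
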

  \begin{proof}
  \implone{}
  Let~$\langle A_{[1:\tau]},S,Z \rangle$ be a valid coloring of~$V$ such that~$|S| = k$.
  Vertex~$s$ has color~$A_1$ and vertex~$z$ has color~$Z$.
  We know that there is no \nonstrpath{s,z} in~$\TG[V] - S = \TG - S$, otherwise condition~(iii) of the definition of a valid coloring is violated.
  Hence, $S$ is a \nonstrsep{s,z} of size~$k$ in~$\TG$.

  \impltwo{}
  Let~$S$ be a given \nonstrsep{s,z} of size~$k$ in~$\TG$.
  Let~$A \subseteq V(\TG)$ contain all vertices in~$G - S$ that are reachable from~$s$.
  We construct a valid coloring as follows.
  Assign color~$Z$ to all vertices in~$V(\TG) \setminus (A \cup S)$.
  Note that~$z \in Z$.
  For each~$v \in A$ we set~$v \in A_t$
  where~$t \in [\tau]$ is the earliest point of time at which~$v$ can be reached from~$s$.
  In particular~$s \in A_1$.
  As a consequence, there is no~$w \in A_{t'}$ such that there is a \nonstrpath{w,v} with departure time at least~$t'$ and arrival time at most~$t-1$, as otherwise there is a \nonstrpath{s,v} with arrival time at most~$t-1$ contradicting that~$t$ is the earliest time point in which~$v$ is reachable from~$s$.
  Finally, we can observe that there are no~$a \in A_i$ and~$b \in Z$ such that there is a \nonstrpath{a,b} with departure time at least~$i$, because~$a$ can be reached at time point~$i$ from~$s$ and all vertices of color~$Z$ are not reachable in~$\TG - S$.
  Hence,~$\langle A_{[1:\tau]},S,Z \rangle$ is a valid coloring of~$V$.
  \end{proof}

  Let~$\TG = (V,\TE,\tau)$ be a temporal graph,~$s,z \in V$, and~$\T = (T,(B_i)_{i \in V(T)})$ be a nice tree decomposition of~$\ug{\TG}$ of width~$\tw(\ug{\TG})$.
  We add~$s$ and~$z$ to every bag of~$\T$.
  Thus,~$\T$ is of width at most~$\tw(\ug{\TG}) + 2$. 

  In the following, we give a dynamic program on~$\T$.
  For each node~$x$ in~$T$ we compute a table~$D_x$ which stores for each coloring~$\langle A_{[1:\tau]},S,Z \rangle$ of~$B_x$ the minimum size of~$S'$ over all \exsion{}s~$\langle A_{[1:\tau]}',S',Z' \rangle$ of~$\langle A_{[1:\tau]},S,Z \rangle$ to~$B(T_x)$:

  \begin{equation}
	  \label{eq:dp}
D_x[A_{[1:\tau]},S,Z] := 
\min \set*{
\infty, \abs{S'} ;
\parbox[c]{.4\textwidth}{ $\langle A'_{[1:\tau]},S',Z' \rangle$ is an \exsion{} of~$\langle A_{[1:\tau]},S,Z \rangle$ to~$B(T_x)$}}
  \end{equation}
  Let~$r \in V(T)$ be the root of~$T$.
  If~$D_r[A_{[1:\tau]},S,Z] = k' < \infty$, then the coloring~$\langle A_{[1:\tau]},S,Z \rangle$ of~$B_r$ is \exble{} to~$B(T_r)=V(\TG)$ and there is a \nonstrsep{s,z} of size~$k'$ in~$\TG$.
  Hence, the input instance~$(\TG,s,z,k)$ is a \yes-instance if and only if~$k' \leq k$.

  The dynamic program first computes the tables for all leaf nodes of~$T$ and then, in a ``bottom-up'' manner, all tables of nodes of which all child nodes are already computed.
  The computation of~$D_x$,~$x \in V(T)$, depends on the type of~$x$, that is, whether~$x$ is a leaf, introduce, forget, or join node.

  \paragraph{Leaf node}
  Let~$x \in V(T)$ be a leaf node of~$\T$.
  Thus,~$B_x = \{ s, v, z \}$.
  We test each coloring of~$B_x$ and set~$D_x[A_{[1:\tau]},S,Z] = \infty$ if~$s \not \in A_1$ or~$z \not \in Z$, because the coloring cannot be valid.
  Assume~$s \in A_1$ or~$z \in Z$.
  We distinguish three cases.
  \begin{compactenum}[\bf {Case} 1:]
	  \item If~$v \in S$, then this is a valid coloring.
		  We set~$D_x[A_{[1:\tau]},S,Z] := 1$. 
	  \item If~$v \in Z$, then we set
		  $D_x[A_{[1:\tau]},S,Z] := \infty$ if there is a $(\{s,v\},t) \in E(G[B_x])$, and $D_x[A_{[1:\tau]},S,Z]:=0$ otherwise.
	  \item If~$v \in A_i$, $i \in [\tau]$, then we set $D_x[A_{[1:\tau]},S,Z] := \infty$ if there is a $(\{s,v\},t) \in E(G[B_x])$ with $t < i$ or if there is a $(\{v,z\},t) \in E(G[B_x])$ with $i \leq t$, and $D_x[A_{[1:\tau]},S,Z] := 0$ otherwise.
  \end{compactenum}

  \begin{lemma}
	  \label{dp-tw-tau:leaf}
	  Let~$\TG$ be a temporal graph and~$\T$ be a tree decomposition of~$\TG$ as described above,~$x \in V(T)$ be a leaf node, and~$\langle A_{[1:\tau]},S,Z \rangle$ be a coloring of~$B_x$.
	  Then the following holds:
	  \begin{compactenum}[(i)]
		  \item~$D_x[A_{[1:\tau]},S,Z] < \infty$ if and only if~$\langle A_{[1:\tau]},S,Z \rangle$ is a valid coloring of~$B_x$.
		  \item If~$\langle A_{[1:\tau]},S,Z \rangle$ is a valid coloring of~$B_x$, then~$D_x[A_{[1:\tau]},S,Z] = |S|$.
		  \item The table entry~$D_x[A_{[1:\tau]},S,Z]$ can be computed in~$\ON(|\TE|)$ time . 
	  \end{compactenum}
  \end{lemma}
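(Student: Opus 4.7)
The plan is to exploit the fact that $B_x = \{s, v, z\}$ has only three vertices, so the temporal subgraph $\TG[B_x] - S$ supports only a very small repertoire of \nonstrpaths{s,z}. Combined with the standing assumption that~$\TE$ contains no time-edge between~$s$ and~$z$, every temporal path in~$\TG[B_x] - S$ of relevance to the validity conditions must be either a single time-edge~$(\{s,v\},t)$ or $(\{v,z\},t)$, or a length-two path~$(\{s,v\},t_1),(\{v,z\},t_2)$ with~$t_1\le t_2$. This reduces parts~(i) and~(ii) to a finite case analysis on the color of~$v$.

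First I would dispatch Case~1 ($v \in S$): since $\TG[B_x] - S$ contains no edges, all validity conditions are vacuously met (given $s \in A_1$, $z \in Z$, which the algorithm has already checked), and the stored value~$|S| = 1$ is correct. In Case~2 ($v \in Z$), the only validity constraint with a nontrivial left-hand side is the one requiring no \nonstrpath{s, b} for $b \in Z = \{v, z\}$; every such path must begin with the edge $\{s, v\}$, so the algorithm's test for a time-edge $(\{s, v\}, t)$ precisely characterizes invalidity, and when valid one has $|S| = 0$.

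The main obstacle is Case~3 ($v \in A_i$), where the validity definition imposes three a~priori independent constraints: no \nonstrpath{s, v} with arrival time~$\le i-1$, no \nonstrpath{v, z} with departure time~$\ge i$, and no \nonstrpath{s, z}. The algorithm only tests the first two (forbidding~$(\{s,v\},t)$ with $t<i$ and~$(\{v,z\},t)$ with $t\ge i$). I would close the gap by a short case distinction on the time stamps of any would-be length-two path $(\{s,v\},t_1),(\{v,z\},t_2)$ with $t_1\le t_2$: if $t_1<i$, the first test fires, and otherwise $t_2 \ge t_1 \ge i$ triggers the second. Conversely, if neither test fires, then the absence of the relevant time-edges rules out all three forbidden kinds of paths, yielding validity.

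For part~(iii), the computation of a single table entry scans the at most two adjacency lists of $v$ in $E(G[B_x])$, looking for time-edges of the forbidden form; since $E(G[B_x])\subseteq \TE$, this takes $\ON(|\TE|)$ time.
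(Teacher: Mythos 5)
Your proposal is correct and follows essentially the same route as the paper: a case analysis on the color of $v$ that matches the algorithm's time-edge tests against the definition of a valid coloring, plus a direct scan of $\TE$ for the running time. If anything, you are more explicit than the paper in Case~3, where you verify via the $t_1\le t_2$ case distinction (using the standing assumption of no time-edge between $s$ and $z$) that the two edge tests also exclude any \nonstrpath{s,z}, a point the paper's proof leaves implicit.
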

  \begin{proof}
	  We first prove~(i).

	  \impltwo{}
	  Let~$D_x[A_{[1:\tau]},S,Z] = \infty$
	  There are five cases in which~$D_x[A_{[1:\tau]},S,Z]$ is set to~$\infty$.
	  Either~$s \not \in A_1$,~$z \not \in Z$,~$v \in Z$ and there is a time-edge~$(\{s,v\},t) \in E(\TG[B(T_x)])$, or~$v \in A_i$ and there is a time-edge~$(\{s,v\},t) \in E(\TG[B(T_x)])$ with~$t < i$ or there is a time-edge~$(\{v,z\},t) \in E(\TG[B(T_x)])$ with~$i \leq t$, where~$i \in [\tau]$.
	  It follows that~$\langle A_{[1:\tau]},S,Z \rangle$ is no valid coloring of~$B_x$.

	  \implone{}
	  Let~$D_x[A_{[1:\tau]},S,Z] < \infty$.
	  Note that~$s$ must be of color~$A_1$ and~$z$ must be of color~$Z$.
	  Observe that~$D_x[A_{[1:\tau]},S,Z] = 0$ or~$D_x[A_{[1:\tau]},S,Z] = 1$.
	  Consider the case of~$D_x[A_{[1:\tau]},S,Z] = 1$.
	  Thus,~$v \in S$. 
	  This implies that~$\TG[B(T_x)] - S$ is time-edgeless and therefore~$\langle A_{[1:\tau]},S,Z \rangle$ is a valid coloring of~$B_x$.
	  Next, consider the case of~$D_x[A_{[1:\tau]},S,Z] = 0$.
	  If~$v \in Z$, then there is no time-edge from~$s$ to~$v$ which means~$\langle A_{[1:\tau]},S,Z \rangle$ is a valid coloring of~$B_x$.
	  If~$v \in A_i$, then there is no time-edge~$(\{s,v\},t)$ with~$t < i$ and there is no time-edge from~$(\{z,v\},t)$ with~$i \leq t$.
	  In both cases~$\langle A_{[1:\tau]},S,Z \rangle$ is a valid coloring of~$B_x$.
	  
	  If~$\langle A_{[1:\tau]},S,Z \rangle$ is a valid coloring of~$B_x$, then~$D_x[A_{[1:\tau]},S,Z] = |S|$ as we set~$D_x[A_{[1:\tau]},S,Z]=1$ if and only if~$v \in S$.
	  This proves~(ii).
	  Furthermore, we can check by iterating over all time-edges whether~$\langle A_{[1:\tau]},S,Z \rangle$ is a valid coloring of~$B_x$
	  This proves~(iii), and hence (i)--(iii) hold true.
  \end{proof}

  \paragraph{Introduce node} 
  Let~$x \in V(T)$ be an introduce node of~$\T$,~$y \in V(T)$ denote its child node, and~$B_x \setminus B_{y} = \{ v \}$.
  We distinguish three cases.
  \begin{compactenum}[\bf{Case }1:]
	  \item If~$v \in S$, then we set~$D_x[A_{[1:\tau]},S,Z] := D_{y}[A_{[1:\tau]},S \setminus \{v \},Z] +1$.
	  \item If~$v \in Z$, then we set
	    $D_x[A_{[1:\tau]},S,Z]:=D_{y}[A_{[1:\tau]},S,Z\setminus \{v\}]$ if for all~$w \in V$ with~$(\{w,v\},t) \in E(G[B(T_x)])$ it holds that~$w \in A_i \impl t < i$.
	    Otherwise, we set~$D_x[A_{[1:\tau]},S,Z]:=\infty$.
	  \item If~$v \in A_i$, $i \in [\tau]$, then we set
		  $D_x[A_{[1:\tau]},S,Z] :=D_{y}[A_{[1:i-1]},A_i\setminus\{v\},A_{[i+1:\tau]},S,Z]$, if for all~$(\{v,w\},t) \in E(G[B(T_x)])$ it holds that $t \geq i \impl w \in \bigcup_{j=1}^t A_j \cup S$ and $t < i \impl w \in \bigcup_{j=t+1}^{\tau} A_j \cup S \cup Z$.
		  Otherwise, we set $D_x[A_{[1:\tau]},S,Z] := \infty$.

  \end{compactenum}
  We prove the correctness for each case separately.
  We start with the first case.
  \begin{lemma}
	  \label{dp-tw-tau:intro1}
	  Let~$\TG$ and~$\T$ be as described above,~$x \in V(T)$ be an introduce node of~$v$,~$y \in V(T)$ be the child node of~$x$,~$\langle A_{[1:\tau]},S,Z \rangle$ be a coloring of~$B_x$ and~$v \in S$. Then the following holds:

	  \begin{compactenum}
		  \item Coloring~$\langle A_{[1:\tau]},S \setminus \{v \},Z \rangle$ of~$B_{y}$ is \exble{} to~$B(T_{y})$ if and only if coloring~$\langle A_{[1:\tau]},S,Z \rangle$ of~$B_x$ is \exble{} to~$B(T_x)$.
		  \item The value of~$D_x[A_{[1:\tau]},S,Z]$ agrees with \cref{eq:dp} and can be computed in~$\ON(1)$ time.
	  \end{compactenum}
  \end{lemma}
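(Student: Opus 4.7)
The plan is to prove both statements by exploiting the fact that $v \in S$ means $v$ is deleted from the temporal graph, so relocating $v$ between the bag $B_x$ and the bag $B_y = B_x \setminus \{v\}$ does not alter which temporal paths are realized in the graph after removing the separator. A key structural fact I will use is the standard tree-decomposition property combined with the fact that $v$ is introduced at $x$: the vertex $v$ appears in no bag of $T_y$, hence $B(T_y) = B(T_x) \setminus \{v\}$, and so $\TG[B(T_x)] - S' = \TG[B(T_y)] - (S' \setminus \{v\})$ whenever $v \in S'$.

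For the $\Leftarrow$ direction of part~(1), I will start from a valid coloring $\langle A'_{[1:\tau]}, S', Z' \rangle$ of $B(T_y)$ that extends $\langle A_{[1:\tau]}, S \setminus \{v\}, Z \rangle$ and build $\langle A'_{[1:\tau]}, S' \cup \{v\}, Z' \rangle$ on $B(T_x)$. Validity is preserved because adding $v$ to $S$ can only remove more time-edges and hence no new \nonstrpath{\cdot,\cdot} is introduced; the required containment $S \subseteq S' \cup \{v\}$ holds by assumption together with $v \in S$. For the $\Rightarrow$ direction, I will take a valid extension $\langle A'_{[1:\tau]}, S', Z' \rangle$ of $\langle A_{[1:\tau]}, S, Z \rangle$ on $B(T_x)$, note that $v \in S \subseteq S'$, and restrict to $B(T_y)$ by dropping $v$, obtaining $\langle A'_{[1:\tau]}, S' \setminus \{v\}, Z' \rangle$. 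The displayed identity $\TG[B(T_x)] - S' = \TG[B(T_y)] - (S' \setminus \{v\})$ immediately implies that all non-existence-of-path conditions in the definition of a valid coloring still hold.

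For part~(2), I will observe that the bijection established above between valid extensions of the two colorings shifts the size of the separator by exactly one (because $v$ is in $S'$ exactly once). Taking the minimum on both sides therefore yields $D_x[A_{[1:\tau]},S,Z] = D_y[A_{[1:\tau]}, S \setminus \{v\}, Z] + 1$, which is literally the formula in Case~1. A single table lookup gives the $\ON(1)$ running time.

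The main obstacle here is really just careful bookkeeping: making sure that the validity conditions about departure and arrival times of temporal paths between colors $A_i$, $A_j$, and $Z$ translate cleanly under the operation of adding or deleting a vertex of color $S$. Since $S$-vertices never appear on any witness path by definition of $\TG[\cdot]-S$, this bookkeeping amounts to the single observation already highlighted, so I do not anticipate any genuine technical difficulty beyond stating the argument cleanly.
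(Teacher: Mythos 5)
Your proposal is correct and follows essentially the same route as the paper's proof: both hinge on the observation that $v$ appears in no bag of $T_y$, so $B(T_y)=B(T_x)\setminus\{v\}$ and $\TG[B(T_y)]-S'=\TG[B(T_x)]-(S'\cup\{v\})$, giving a size-shifting correspondence between valid extensions and hence $D_x[A_{[1:\tau]},S,Z]=D_{y}[A_{[1:\tau]},S\setminus\{v\},Z]+1$ via a single $\ON(1)$ lookup. No gaps to report.
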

  \begin{proof}
	  \implone{}
	  Let~$\langle A_{[1:\tau]},S \setminus \{v \},Z \rangle$ be a coloring of~$B_{y}$ and~$\langle A'_{[1:\tau]},S',Z' \rangle$ be an \exsion{} to~$B(T_{y})$, where~$\abs{S'} = D_{y}[A_{[1:\tau]},S \setminus \{v \},Z]$.
	  Note that~$v \not \in S'$, because~$v \not \in B(T_{y})$ since~$x$ is the introduce node for~$v$.
	  Since~$B(T_x) \setminus B(T_{y}) = \{v\}$, we know that~$\TG[B(T_{y})] - S'$ is the same temporal graph as~$\TG[B(T_x)] - (S' \cup \{ v \})$.
	  Hence, the coloring~$\langle A_{[1:\tau]},S,Z \rangle$ is \exble{} to~$B(T_x)$ and~$|S' \cup \{v\}| = |S'| + 1$ implies that the table entry~$D_x[A_{[1:\tau]},S,Z] = D_{y}[A_{[1:\tau]},S \setminus \{v \},Z] + 1$.

	  \impltwo{}
	  Let~$\langle A_{[1:\tau]},S \setminus \{v \},Z \rangle$ be not \exble{} to~$B(T_{y})$, then~$\langle A_{[1:\tau]},S,Z \rangle$ is not \exble{} to~$B(T_x)$ because~$\TG[B(T_{y})]$ is a temporal subgraph of $\TG[B(T_x)]$, where~$v \not \in B(T_{y})$.
	  Hence,~$D_x[A_{[1:\tau]},S,Z] = D_{y}[A_{[1:\tau]},S \setminus \{v \},Z] +1 = \infty + 1 = \infty$.

	  Note that~$D_x[A_{[1:\tau]},S,Z]$ can be computed in~$\ON(1)$ time because we just have to look up the value of~$D_{y}[A_{[1:\tau]},S,Z]$. 
  \end{proof}

  Next, we move to the correctness of the second case.
  \begin{lemma}\label{dp-tw-tau:intro2}
	  Let~$\TG$ and~$\T$ be as described above,~$x \in V(T)$ be an introduce node of~$v$,~$y \in V(T)$ be the child node of~$x$,~$\langle A_{[1:\tau]},S,Z \rangle$ be a coloring of~$B_x$ and~$v \in Z$.
	  Then the following holds:
	  \begin{compactenum}
		  \item The coloring~$\langle A_{[1:\tau]},S,Z \rangle$ is \exble{} to~$B(T_x)$ if and only if the coloring~$\langle A_{[1:\tau]},S,$ $Z \setminus \{v \} \rangle$ of~$B_{y}$ is \exble{} to~$B(T_{y})$ 
		        and for all $(\{w,v\},t) \in E(\TG[B(T_x)])$ it holds that~$w \in A_i\impl t < i$.
		  \item The value of~$D_x[A_{[1:\tau]},S,Z]$ agrees with \cref{eq:dp} and can be computed in~$\ON(|\TE|)$ time.
	  \end{compactenum}
  \end{lemma}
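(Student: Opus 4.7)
The plan is to follow the template set by \cref{dp-tw-tau:intro1} and prove both directions of the biconditional separately, using the key property of nice tree decompositions that any edge of $\ug{G}$ incident to $v$ whose other endpoint also lies in $B(T_x)$ must actually have that endpoint in $B_x$ (since $v$ appears in $T_x$ only in the bag $B_x$). This property localizes all reasoning about paths touching~$v$ to neighbors within~$B_x$, which is exactly where the additional condition of the lemma is stated.

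For the forward direction, I would start from a valid coloring~$\langle A'_{[1:\tau]},S',Z' \rangle$ of~$B(T_x)$ that witnesses the extendability of~$\langle A_{[1:\tau]},S,Z \rangle$ and simply delete~$v$ from~$Z'$ and from the ground set to obtain a coloring of~$B(T_y)$. Validity is preserved because restricting to a subset only \emph{removes} potential temporal paths; since $s\in A_1$ and $z\in Z\setminus\{v\}$ still hold, and every forbidden path in~$B(T_y)$ would also be a forbidden path in~$B(T_x)$, the restriction is a valid coloring of~$B(T_y)$. This yields extendability of~$\langle A_{[1:\tau]},S,Z\setminus\{v\}\rangle$. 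For the edge condition, suppose some edge~$(\{w,v\},t)\in E(\TG[B(T_x)])$ has $w\in A_i$ with $t\ge i$. Then this single time-edge constitutes a \nonstrpath{w,v} of departure time $t\ge i$ from a vertex of color~$A'_i$ to the vertex~$v\in Z'$, directly violating validity of~$\langle A'_{[1:\tau]},S',Z'\rangle$.

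For the backward direction, I would take a valid extension~$\langle A'_{[1:\tau]},S',Z'\rangle$ of~$\langle A_{[1:\tau]},S,Z\setminus\{v\}\rangle$ to~$B(T_y)$ and set~$Z''=Z'\cup\{v\}$, then argue that~$\langle A'_{[1:\tau]},S',Z''\rangle$ is valid on~$B(T_x)$. The only new paths to worry about are those that either end at~$v$ or pass through~$v$; I would handle the former first. Consider a hypothetical \nonstrpath{a,v} in~$\TG[B(T_x)]-S'$ with $a\in A'_i$ and departure time~$\ge i$, and let~$w$ be the second-to-last vertex on that path, so the final time-edge is~$(\{w,v\},t')$. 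By the tree-decomposition property, $w\in B_x$. A case distinction on the color of~$w$ yields a contradiction in every case: if $w\in S$ the edge is absent; if $w\in Z$, the prefix from~$a$ to~$w$ already violates validity of the extension on~$B(T_y)$; and if $w\in A_j$, validity on~$B(T_y)$ forces the prefix to have arrival time~$\ge j$, hence $t'\ge j$, but the lemma's condition demands~$t'<j$. The same case analysis, applied to any sub-path ending at~$v$, rules out paths passing through~$v$, handling the second validity clause about~$A'_{i'}$-to-$A'_i$ paths as well.

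The main obstacle I expect is the path-through-$v$ case, because the second validity clause (no temporal path from~$A'_{i'}$ to an earlier~$A'_i$) is not directly about~$v$. The key insight is that such a path has to enter~$v$ via an edge in~$B_x$, so cutting the path at its first visit to~$v$ reduces to the end-at-$v$ case, which the argument above already forbids. Finally, the computation of~$D_x[A_{[1:\tau]},S,Z]$ amounts to one table lookup plus checking the edge condition, which requires scanning the (at most~$\ON(|\TE|)$) temporal edges incident to~$v$ in~$\TG[B_x]$; this establishes the stated running time.
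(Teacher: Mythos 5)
Your proposal is correct and follows essentially the same route as the paper's proof: the forward direction restricts the valid coloring of~$B(T_x)$ to~$B(T_y)$ and derives the edge condition from the single time-edge~$(\{w,v\},t)$ being a forbidden temporal path into~$Z$, and the backward direction adds~$v$ to the~$Z$-class and refutes any new path by locating its first~$Z$-colored vertex at~$v$, using that all neighbors of~$v$ inside~$B(T_x)$ lie in~$B_x$ together with the edge condition. Your explicit treatment of paths passing through~$v$ (cutting at the unique visit of~$v$) and of the second validity clause only spells out what the paper leaves implicit, so the two arguments coincide in substance.
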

  \begin{proof}
	  \implone{}
	  Let the coloring~$\langle A'_{[1:\tau]},S',Z' \rangle$  be an \exsion{} of the coloring~$\langle A_{[1:\tau]},S,Z \rangle$ to~$B(T_x)$.
	  Since~$B(T_{y}) = B(T_x) \setminus \{v\}$ and~$(Z \setminus \{v\}) \subseteq Z \subseteq Z'$, the coloring~$\langle A_{[1:\tau]},S,Z \setminus \{v\} \rangle$ of~$B_{y}$ is \exble{} to~$B(T_{y})$.
	  Furthermore,~$v \in Z$ implies that for all time-edges~$(\{w,v\},t) \in E(\TG[B(T_x)])$ it holds that~$w \in A_i\impl t < i$.

	  \impltwo{}
	  First, if coloring~$\langle A_{[1:\tau]},S,Z\setminus \{v\} \rangle$ of~$B_{y}$ is not \exble{} to~$B(T_{y})$ then coloring~$\langle A_{[1:\tau]},S,Z \rangle$ of~$B_x$ cannot be \exble{} to~$B(T_x)$ because~$\TG[B(T_{y})]$ is a temporal subgraph of~$\TG[B(T_x)]$. 
	  Hence,~$D_x[A_{[1:\tau]},S,Z] = \infty$.

	  Let~$\langle A_{[1:\tau]},S,Z\setminus \{v\} \rangle$ be a coloring of~$B_{y}$ which is \exble{} to~$B(T_{y})$ and for all~$(\{w,v\},t) \in E(\TG[B(T_x)]$ it holds that~$w \in A_i\impl t < i$.
      Then we know that there is an \exsion{}~$\langle A'_{[1:\tau]},S',Z' \rangle$ of~$\langle A_{[1:\tau]},S,Z\setminus \{v\} \rangle$ to~$B(T_{y})$.
	  We claim that~$\langle A'_{[1:\tau]},S',Z'\cup\{v\} \rangle$ is a valid coloring of~$B(T_x)$.
	  Since $\langle A'_{[1:\tau]},S',Z' \rangle$ is a valid coloring of~$B(T_{y})$, we have that~$s \in A_1'$,~$z \in Z'$, and for all~$i,j \in [\tau]$, $a \in A_i'$ and~$a' \in A_j'$ there is no \nonstrpath{a,a'} with departure time at least~$i$ and arrival time at most~$j-1$ in~$\TG[B(T_{y})] - S$.
	  Suppose there exist~$a \in A_i'$ and~$b \in Z'$ such that there is a \nonstrpath{a,b}~$P$ in~$\TG[B(T_x)] - S$ with departure time at least~$i$, for some~$i \in [\tau]$.
	  Since~$B(T_x) \setminus B(T_{y}) = \{v \}$ and $\langle A'_{[1:\tau]},S',Z' \rangle$ is a valid coloring of~$B(T_{y})$, vertex~$v$ is the first vertex of color~$Z$ which is visited by~$P$.
	  Hence, there is a time-edge~$(\{w,v\},t) \in E(\TG[B(T_x)])$ such that~$w \in A_i$ and~$i \leq t$, contradicting~$w \in A_i\impl t < i$.
	  It follows that~$\langle A'_{[1:\tau]},S',Z'\cup\{v\} \rangle$ is a valid coloring of~$B(T_x)$ and hence~$\langle A_{[1:\tau]},S,Z \rangle$ is \exble{} to~$B(T_x)$.
	  Since~$v \in Z$, we have~$D_x[A_{[1:\tau]},S,Z] = D_{y}[A_{[1:\tau]},S,Z\setminus \{v\}]$.

	  Note that~$D_x[A_{[1:\tau]},S,Z]$ can be computed in~$\ON(|\TE|)$ time, since we can decide whether for all~$(\{w,v\},t) \in E(\TG[B(T_x)]$ it holds that~$w \in A_i\impl t < i$ by iterating once over the time-edges in~$\TE$.
  \end{proof}

  Last, we show the correctness of the third case.

  \begin{lemma}\label{dp-tw-tau:intro3}
	  Let~$\TG$ and~$\T$ be as described above,~$x \in V(T)$ be an introduce node of~$v$,~$y \in V(T)$ be the child node of~$x$,~$\langle A_{[1:\tau]},S,Z \rangle$ be a coloring of~$B_x$ and~$v \in A_i$, where~$i \in [\tau]$.
	  Then the following holds:
	  \begin{compactenum}
	    \item	Coloring~$\langle A_{[1:\tau]},S,Z \rangle$ of~$B_x$ is \exble{} to~$B(T_x)$ if and only if
		    coloring~$\langle A_{[1:i-1]},$ $A_i\setminus\{v\},A_{[i+1:\tau]},S,Z \rangle$ of~$B_{y}$ is \exble{} to~$B(T_{y})$ 
		    and for each~$(\{v,w\},t) \in E(\TG[B(T_x)])$ it holds that: $t \geq i\impl w \in \bigcup_{j=1}^t A_j \cup S$ and $t < i\impl w \in \bigcup_{j=t+1}^\tau A_j \cup S \cup Z$.
	    \item The value of~$D_x[A_{[1:\tau]},S,Z]$ agrees with \cref{eq:dp} and can be computed in~$\ON(|\TE|)$ time.
	  \end{compactenum}
  \end{lemma}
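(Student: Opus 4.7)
The plan is to imitate the proof of \cref{dp-tw-tau:intro2}, where the case $v\in Z$ was handled, and to adapt the combinatorial content to the case $v\in A_i$. I must prove both directions of the biconditional in claim~(1), establish the recurrence value in claim~(2), and verify the $\ON(|\TE|)$ running-time bound.

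For the forward direction ($\Rightarrow$), I would suppose $\langle A_{[1:\tau]},S,Z\rangle$ is \exble{} to $B(T_x)$ via some valid coloring $\langle A'_{[1:\tau]},S',Z'\rangle$ of $B(T_x)$. Since $B(T_y)=B(T_x)\setminus\{v\}$ and $\TG[B(T_y)]$ is a temporal subgraph of $\TG[B(T_x)]$, removing $v$ from $A'_i$ yields a valid coloring of $B(T_y)$ witnessing that $\langle A_{[1:i-1]}, A_i\setminus\{v\}, A_{[i+1:\tau]}, S, Z\rangle$ is \exble{} to $B(T_y)$. The edge conditions then follow by applying condition~(iii) of a valid coloring to the witness: for $(\{v,w\},t)\in E(\TG[B(T_x)])$ with $t\ge i$, traversing this edge from $v\in A'_i$ at time $t$ yields a \nonstrpath{s,w} of arrival time at most $t$, so $w$ cannot lie in $Z'$ nor in $A'_j$ with $j>t$, forcing $w\in \bigcup_{j=1}^t A_j\cup S$; and for $t<i$, a neighbor $w\in A'_j$ with $j\le t$ would produce a \nonstrpath{s,v} of arrival time $t<i$, contradicting $v\in A'_i$, so $w\in \bigcup_{j=t+1}^\tau A_j\cup S\cup Z$.

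For the backward direction ($\Leftarrow$), I would take a valid coloring $\langle A'_{[1:\tau]},S',Z'\rangle$ of $B(T_y)$ extending the $B_y$ coloring and place $v$ into $A'_i$ to form a coloring of $B(T_x)$, then argue that the resulting coloring is valid. Since validity holds for the $B(T_y)$ witness and the new vertex $v$ is distinct from $s$ and $z$, the only potential new violations are temporal paths touching $v$. Such a violating path would either enter $v$ via some edge $(\{w,v\},t)$ and continue to a vertex of color $Z$ or to some vertex in $A_{j'}$ reached before time $j'$, or leave $v$ via some $(\{v,w'\},t')$ to such a forbidden endpoint. The two edge conditions rule out precisely the color combinations that could yield these paths: a neighbor of $v$ at time $t\ge i$ is already in $S$ or first reachable by time $t$, while a neighbor at time $t<i$ is either in $S\cup Z$ or first reachable strictly after $t$. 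A short case split on whether the hypothetical violating subpath enters, leaves, or passes through $v$ then reduces to an immediate contradiction with the validity of the $B(T_y)$ witness or with the color of $v$ itself.

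For the recurrence value in~(2), since $v\notin S$, a minimum valid extension on $B(T_x)$ has the same number of $S$-colored vertices as the corresponding minimum valid extension on $B(T_y)$, so $D_x[A_{[1:\tau]},S,Z]=D_y[A_{[1:i-1]},A_i\setminus\{v\},A_{[i+1:\tau]},S,Z]$ when the edge conditions hold, and $\infty$ otherwise. Checking the conditions amounts to iterating over the time-edges incident to $v$ once, so the table entry can be computed in $\ON(|\TE|)$ time after a table lookup. The main obstacle is the backward direction: one must verify carefully that the local edge conditions at $v$ suffice to forbid every possible global temporal path violating validity, and this is essentially a careful bookkeeping over the different ways a path can interact with the newly introduced vertex.
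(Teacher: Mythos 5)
Your overall plan coincides with the paper's proof: the forward direction restricts the witness coloring to $B(T_y)$ and reads the two edge conditions off condition (iii) of validity; the backward direction inserts $v$ into $A_i'$ of a witness for $B(T_y)$ and argues that no new violating temporal path can arise; the recurrence follows since $v\notin S$, and the conditions are checked by one pass over $\TE$. One small repair in the forward direction: validity of a coloring does not assert that a vertex in $A_i'$ is actually reachable from $s$, so your justification via a \nonstrpath{s,w} is not literally available; the correct (and simpler) argument applies condition (iii) to the one-edge temporal path given by $(\{v,w\},t)$ itself---for $t\ge i$ this rules out $w\in Z'$ and $w\in A_j'$ with $j>t$, and for $t<i$ it rules out $w\in A_j'$ with $j\le t$---which yields exactly the conclusions you state.

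More substantially, the backward direction is where the paper's actual work lies, and your claim that a short case split gives an ``immediate'' contradiction---that ``the two edge conditions rule out precisely the color combinations that could yield these paths''---does not hold for the main case: a hypothetical \nonstrpath{a,a'} between $a\in A_j'$ and $a'\in A_\ell'$ (departure at least $j$, arrival at most $\ell-1$) that passes through $v$. There the colors of $v$'s two neighbors on the path are perfectly consistent with both edge conditions, and no contradiction is local to $v$ or to its color. The paper first uses that every neighbor of $v$ in $\TG[B(T_x)]$ lies in $B_x$ (an introduce-node property you do not mention), so that the edge conditions, which speak about the $B_x$ coloring, really determine those neighbors' colors in the extension; it then derives, for the edges $(\{w_1,v\},t_1)$ and $(\{v,w_2\},t_2)$ on the path with $w_1\in A_{u_1}$, $w_2\in A_{u_2}$, the chain $u_1\le t_1$ (validity of the witness applied to the prefix from $a$ to $w_1$), then $i\le t_1\le t_2$ (via the $t<i$ condition), then $u_2\le t_2$ (via the $t\ge i$ condition); only after all this does the contradiction appear, namely that the suffix from $w_2$ to $a'$ is a temporal path lying inside $\TG[B(T_y)]$ with departure at least $u_2$ and arrival at most $\ell-1$, violating validity of the witness. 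A separate argument is still needed for a path ending in a $Z$-colored vertex, where one must first argue that $v$ may be taken as the last non-$Z$ vertex on it before the $t\ge i$ condition can be applied. So the strategy you name (edge conditions plus validity of the witness on subpaths avoiding $v$) is the paper's, but the decisive verification you defer as ``bookkeeping'' is essentially the entire proof of this lemma, and your sketch mischaracterizes how the contradiction is actually obtained.
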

  \begin{proof}
	  \implone{}
	  Let~$\langle A_{[1:\tau]},S,Z \rangle$ be a valid coloring of~$B_x$ 
	  and $\langle A'_{[1:\tau]},S',Z' \rangle$ be an \exsion{} to~$B(T_x)$.
	  Since~$B(T_{y}) = B(T_x) \setminus \{v\}$ and~$(A_i \setminus \{v\}) \subseteq A_i \subseteq A_i'$, the coloring~$\langle A_1,A_2,\dots,A_i\setminus \{v\},\dots,A_\tau,S,Z \rangle$ of~$B_{y}$ is \exble{} to~$B(T_{y})$.
	  Let~$(\{v,w\},t) \in E(\TG[B(T_x)])$. 
	  We distinguish two cases.
	  
	  First, let~$t \geq i$.
	  Note that~$w \in B_y$ since~$x$ is an introduce node for~$v$.
	  Since~$\langle A'_{[1:\tau]},S',Z' \rangle$ is a valid coloring of~$B(T_x)$,~$w \not \in Z$ since there is no \nonstrpath{v,w} with departure time~$t$ in~$\TG[B(T_x)] - S'$. 
	  Assume towards a contradiction that~$w \in A_j$, where~$j \in [t+1:\tau]$.
	  Then the time-edge~$(\{v,w\},t)$ is a \nonstrpath{v,w} with departure time at least~$i$ and arrival time at most~$j-1$, contradicting the fact that~$\langle A'_{[1:\tau]},S',Z' \rangle$ is a valid coloring of~$B(T_x)$.
	  Hence,~$w \in \bigcup_{j=1}^t A_j \cup S$.

	  Second, let~$t < i$.
	  Again,~$\langle A'_{[1:\tau]},S',Z' \rangle$ is a valid coloring of~$B(T_x)$ and therefore~$w \not \in \bigcup_{j=1}^{t} A_j$ because otherwise there would be a \nonstrpath{w,v} in~$\TG[B(T_x)] - S'$ with departure time at least~$t$ and arrival time~$t < i$, contradicting the fact that~$\langle A'_{[1:\tau]},S',Z' \rangle$ is a valid coloring.
	  Hence~$w \in \bigcup_{j=t+1}^{\tau} A_j \cup S \cup Z$.

	  \impltwo{}
	  First, if coloring~$\langle A_1,A_2,\dots,A_i \setminus \{ v \},\dots,A_\tau,S,Z \rangle$ of~$B_{y}$ is not \exble{} to~$B(T_{y})$ then coloring~$\langle A_{[1:\tau]},S,Z \rangle$ of~$B_x$ cannot be \exble{} to~$B(T_x)$ because~$\TG[B(T_{y})]$ is a temporal subgraph of~$\TG[B(T_x)]$. 
	  Hence, $D_x[A_{[1:\tau]},S,Z] = \infty$.

	  Let coloring~$\langle A_1,A_2,\dots,A_i \setminus \{ v \},\dots,A_\tau,S,Z \rangle$ of~$B_{y}$ be \exble{} to $B(T_{y})$ 
	  and for each~$(\{v,w\},t) \in E(\TG[B(T_x)])$ it holds that: 
		  $t \geq i\impl w \in \bigcup_{j=1}^t A_j \cup S$ and 
		  $t < i\impl w \in \bigcup_{j=t+1}^{\tau} A_j \cup S \cup Z$.
      Then let~$\langle A'_{[1:\tau]},S',Z' \rangle$ be an \exsion{} of~$\langle A_1,A_2,\dots,A_i \setminus \{ v \},\dots,A_\tau,S,Z \rangle$ to~$B(T_{y})$.
	  We claim that~$\langle A_1',A_2',\dots, A_i' \cup \{ v \},\dots,S',Z' \rangle$ is a valid coloring for~$B(T_x)$.
	  We know~$s \in A_1'$ and~$z \in Z'$.

	  Suppose towards a contradiction that~there exist~$a \in A_j'$ and~$a' \in A_\ell'$, $j,\ell \in [\tau]$, such that there is a \nonstrpath{a,a'}~$P$ with departure time at least~$j$ and arrival time at most~$\ell-1$. 
	  Since coloring~$\langle A'_{[1:\tau]},S',Z' \rangle$ of~$B(T_{y})$ is valid, we know that $v$ appears in~$P$.
	  Thus, there are time-edges~$(\{w_1,v\},t_1),(\{v,w_2\},t_2) \in E(\TG[B(T_x)])$ in~$P$ such that $t_1\leq t_2$ and~$w_1$ appears before~$v$ and~$v$ appears before~$w_2$ in~$P$, where~$w_1 \in A_{u_1}'$,~$w_2 \in A_{u_2}'$.
	  Note that~$w_1 \in A_{u_1}$ and~$w_2 \in A_{u_2}$ as~$x$ is an introduce node of~$v$.
	  Refer to \cref{fig:proof-dp-tw-tau} for an illustration.
	  \begin{figure}[t!]
		  \centering
		  \includegraphics[width=0.8\textwidth]{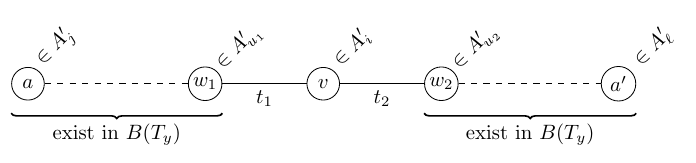}
		  \caption{The \nonstrpath{a,a'}~$P$ from the proof of \cref{dp-tw-tau:intro3}.%
			  }
		  \label{fig:proof-dp-tw-tau}
	  \end{figure}
	  
	  We know the following:
	  \begin{compactitem}
	   \item $u_1 \leq t_1$, otherwise there is a \nonstrpath{a,w_1} with departure time at least~$j$ and arrival time at most~$u_1 - 1$ in~$G[B(T_y)]$, contradicting the fact that~$\langle A'_{[1:\tau]},S',Z' \rangle$ is valid.
	   \item $i\leq t_1$, otherwise either $w_1 \not\in \bigcup_{j=t_1+1}^{\tau} A_j \cup S \cup Z$ contradicting the fact that for each~$(\{v,w\},t) \in E(\TG[B(T_x)])$ it holds that
		  $t < i\impl w \in \bigcup_{j=t+1}^{\tau} A_j \cup S \cup Z$, or $w_1 \in \bigcup_{j=t_1+1}^{\tau} A_j \cup S \cup Z$ and~$w\in A_{u_1}$, contradicting the fact that $\langle A'_{[1:\tau]},S',Z' \rangle$ is a coloring of~$B(T_{y})$.
	   \item $i \leq t_2$, otherwise $i>t_1$ since~$t_1\leq t_2$.
	   \item $u_2 \leq t_2$, otherwise~$i \leq t_2$ and $w_2 \not\in \bigcup_{j'=1}^{t_2} A_{j'}$, contradicting the fact that for each~$(\{v,w\},t) \in E(\TG[B(T_x)])$ it holds that
		  $t < i\impl w \in \bigcup_{j=t+1}^{\tau} A_j \cup S \cup Z$, or $w_2 \in \bigcup_{j'=1}^{t_2} A_{j'}$ and~$w_2 \in A_{u_2}$, contradicting the fact that $\langle A'_{[1:\tau]},S',Z' \rangle$ is a coloring of~$B(T_{y})$.
	  \end{compactitem}
	  It follows that~$P$ contains the \nonstrpath{w_2,a'} as temporal subpath with departure time at least~$u_2 \leq t_2$ and arrival time~$\ell-1$.
	  As this temporal subpath also exists in~$B(T_y)$, this contradicts the fact that coloring~$\langle A'_{[1:\tau]},S',Z' \rangle$ of~$B(T_{y})$ is valid.
	  We conclude that~$P$ does not exist.

	  Next, suppose towards a contradiction that there exist~$a\in A_j'$, $j\in[\tau]$, and~$b \in Z$ such that there is a \nonstrpath{a,b}~$P'$ with departure time at least~$j$. 
	  The vertex~$v \in A_i$ is the last vertex visited by~$P'$ which is not colored by~$Z$, otherwise we would be able to find a subsequence of~$P'$ similar to~$P$.
	  Thus, there are time-edges~$(\{w_1,v\},t_1),(\{v,b\},t_2) \in E(\TG[B(T_x)])$ which are in~$P'$ such that~$w_1$ is visited before~$v$ and~$v$ is visited before~$b$, where~$w_1 \in A_{u_1}'$.
	  We conclude analogously to the case of~$P$ that~$u_1 \leq t_1$,~$i \leq t_1$,~$i \leq t_2$.
	  Since $i \leq t_2$, we have that either~$b \not\in \bigcup_{j=1}^{t} A_j \cup S$, contradicting the fact that for each~$(\{v,w\},t) \in E(\TG[B(T_x)])$ it holds that 
	  $t \geq i\impl w \in \bigcup_{j=1}^t A_j \cup S$, or $b \in \bigcup_{j=1}^{t} A_j \cup S$ and~$b\in Z$, contradicting the fact that~$\langle A_1,A_2,\dots,A_i \setminus \{ v \},\dots,A_\tau,S,Z \rangle$ is a coloring of~$B_{y}$.
	  Hence,~$P'$ does not exist. 

	  Clearly,~$D_x[A_{[1:\tau]},S,Z] = D_{y}[A_1,A_2,\dots,A_i\setminus\{v\},\dots,A_\tau,S,Z]$ because $v \not \in S$.
  
	  Note that~$D_x[A_{[1:\tau]},S,Z]$ can be computed in~$\ON(|\TE|)$ time because we can iterate once over the time-edge set~$\TE$ and decide if for all~$(\{w,v\},t) \in E(\TG[B(T_x)]$ it holds that if~$t \geq i$ then~$w \in \bigcup_{j=1}^t A_j \cup S$ and if~$t < i$ then~$w \in \bigcup_{j=t+1}^\tau A_j \cup S \cup Z$.
  \end{proof}

  \paragraph{Forget node}
  Let~$x \in V(T)$ be a forget node of~$\T$,~$y \in V(T)$ its child, and~$B_{y} \setminus B_x = \{ v \}$.
  We set 
  \[
	  D_x[A_{[1:\tau]},S,Z] = \min \left\{\begin{array}{@{}l@{}} \min_{i \in [\tau]} D_{y}[A_{[1:i-1]},A_i \cup \{ v \}, A_{[i+1:\tau]},S,Z],\\
	  D_{y}[A_{[1:\tau]},S \cup \{v\},Z],\\ D_{y}[A_{[1:\tau]},S,Z\cup \{v\}] \end{array} \right\}.
  \]
  \begin{lemma}\label{dp-tw-tau:forget}
	  Let~$\TG$ and~$\T$ be as described above,~$x \in V(T)$ be a forget node of~$v$,~$y \in V(T)$ be the child node of~$x$, and~$\langle A_{[1:\tau]},S,Z \rangle$ be a coloring of~$B_x$. 
	  Then the following holds:
	  \begin{compactenum}
	  \item The coloring~$\langle A_{[1:\tau]},S,Z \rangle$ of~$B_x$ is \exble{} to~$B(T_x)$ if and only if it has an \exsion{} to $B_{y}$ which is itself \exble{} to~$B(T_{y})$.
		  \item The value of~$D_x[A_{[1:\tau]},S,Z]$ agrees with \cref{eq:dp} and can be computed in~$\ON(|\TE|)$ time.
	  \end{compactenum}
  \end{lemma}
  \begin{proof}
	  \implone{}
	  Let~$\langle A_{[1:\tau]}'',S'',Z'' \rangle$ be an \exsion{} of~$\langle A_{[1:\tau]},S,Z \rangle$ to~$B(T_x)$.
	  Since~$y$ is a child of~$x$ and~$B_x \subseteq B_{y}$, we know that~$B(T_x) = B(T_{y})$ and therefore there is a coloring~$\langle A_{[1:\tau]}',S',Z' \rangle$ of~$B_{y}$ which is \exble{} to~$B(T_{y})$, where~$S' \subseteq S''$,~$Z' \subseteq Z''$, and~$A_i' \subseteq A_i''$, for all~$i \in [\tau]$.
	  It follows from~$B_x \subseteq B_{y}$, that~$S \subseteq S'$,~$Z \subseteq Z'$, and~$A_i \subseteq A_i'$, for all~$i  \in [\tau]$.

	  \impltwo{}
	  It is easy to see that coloring~$\langle A_{[1:\tau]},S,Z \rangle$ of~$B_x$ is \exble{} to~$B(T_x)$ if there is a coloring~$\langle A'_{[1:\tau]},S',Z' \rangle$ of~$B_{y}$ which is \exble{} to~$B(T_{y})$ and is itself an extension of~$\langle A_{[1:\tau]},S,Z \rangle$, because~$\TG[B(T_x)]$ is a temporal subgraph of~$\TG[B(T_{y})]$.
	  Since we want to extend the coloring of~$B_x$ such that we have a minimum size~$S$, we select the minimum over all possible extensions of~$\langle A_{[1:\tau]},S,Z \rangle$ to~$B_{y}$.
	  
	  Note that we can compute the table entry~$D_x[A_{[1:\tau]},S,Z]$ in~$\ON(|\TE|)$ time, because we have to look up~$\tau+2$ entries in~$D_{y}$ and~$\tau \leq |\TE|$, see \cite{zschoche2017computational}.
  \end{proof}
  
  \paragraph{Join node}
  Let~$x \in V(T)$ be a join node of~$\T$,~$y_1,y_2 \in V(T)$ be children of~$x$, and hence~$B_x = B_{y_1} = B_{y_2}$.
  We set $D_x[A_{[1:\tau]},S,Z] := D_{y_1}[A_{[1:\tau]},S,Z] + D_{y_1}[A_{[1:\tau]},S,Z] - |S|$.
  \begin{lemma}\label{dp-tw-tau:join}
    Let~$\TG$ be a temporal graph and~$\T$ be a tree decomposition of~$\TG$ as described above,~$x \in V(T)$ be a join node of~$v$,~$y_1,y_2 \in V(T)$ be the child nodes of~$x$, and~$\langle A_{[1:\tau]},S,Z \rangle$ be a coloring of~$B_x$. 
    Then the following holds:
    \begin{compactenum}
	    \item The coloring~$\langle A_{[1:\tau]},S,Z \rangle$ of~$B_x = B_{y_1} = B_{y_2}$ is \exble{} to~$B(T_x)$ if and only if it is \exble{} to~$B(T_{y_1})$ and~$B(T_{y_2})$.
	    \item The value of~$D_x[A_{[1:\tau]},S,Z]$ agrees with \cref{eq:dp} and can be computed in~$\ON(1)$ time.
    \end{compactenum}
  \end{lemma}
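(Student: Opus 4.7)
The plan is to mimic the pattern of the previous lemmas (\cref{dp-tw-tau:leaf,dp-tw-tau:intro1,dp-tw-tau:intro2,dp-tw-tau:intro3,dp-tw-tau:forget}), proving both directions of (1) and then deriving (2) and the running time.

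For the forward direction of (1), I would start from a valid coloring $\langle A'_{[1:\tau]}, S', Z'\rangle$ of $B(T_x)$ extending $\langle A_{[1:\tau]}, S, Z\rangle$. Because any valid coloring of a set remains valid when restricted (together with the induced temporal subgraph), its restrictions to $B(T_{y_1})$ and to $B(T_{y_2})$ are valid colorings of these sets, and both restrictions extend the coloring on $B_x = B_{y_1} = B_{y_2}$. Hence the coloring on $B_x$ is \exble{} to both $B(T_{y_1})$ and $B(T_{y_2})$.

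The core and only subtle part is the backward direction. Suppose there are valid colorings $\langle A^{(1)}_{[1:\tau]}, S^{(1)}, Z^{(1)}\rangle$ of $B(T_{y_1})$ and $\langle A^{(2)}_{[1:\tau]}, S^{(2)}, Z^{(2)}\rangle$ of $B(T_{y_2})$, both extending the coloring on $B_x$. Form the union coloring on $B(T_x) = B(T_{y_1}) \cup B(T_{y_2})$ component-wise and show that it is valid. Assume towards a contradiction a forbidden temporal path~$P$ (either from some $a \in A_i$ to some $b \in Z$ with departure $\geq i$, or from $a \in A_i$ to $a' \in A_j$ with departure $\geq i$ and arrival $\leq j-1$) in $\TG[B(T_x)]$ minus the union of the two separators. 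The main obstacle is to reduce~$P$ to violations in a single $B(T_{y_q})$. The key tool is the standard tree-decomposition separation property: $B_x$ separates $B(T_{y_1})\setminus B_x$ from $B(T_{y_2})\setminus B_x$ in the underlying graph, so no edge of the underlying graph has endpoints in both of these sets. If all vertices of~$P$ lie in a single $B(T_{y_q})$, this directly contradicts the validity of the corresponding extension; otherwise I would pick the first vertex $v$ on~$P$ that lies in $B_x$ after a prefix entirely in, say, $B(T_{y_1}) \setminus B_x$ (WLOG), note that $v$ has the same color in every coloring since $v \in B_x$, and perform a case analysis on that color. If $v \in Z$, the prefix of $P$ up to $v$ already violates the coloring on $B(T_{y_1})$; if $v \in A_j$, then either the prefix violates the coloring on $B(T_{y_1})$ or the remaining suffix (which I recurse into via induction on the number of transitions between $B(T_{y_1}) \setminus B_x$ and $B(T_{y_2}) \setminus B_x$) yields a violation on one of the two sub-bags. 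The case of $v \in S$ is impossible since $S$ is removed.

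For the counting claim of (2), since $B_{y_1} = B_{y_2} = B_x$, the contribution of $S$ is present in both $D_{y_1}[A_{[1:\tau]}, S, Z]$ and $D_{y_2}[A_{[1:\tau]}, S, Z]$; the optimal valid extensions in the two sub-bags share no other vertex because $B(T_{y_1}) \cap B(T_{y_2}) = B_x$ (another standard tree-decomposition fact), so a minimum-size extension to $B(T_x)$ takes independently optimal extensions in each side and pays $|S|$ only once. This yields $D_x[A_{[1:\tau]}, S, Z] = D_{y_1}[A_{[1:\tau]}, S, Z] + D_{y_2}[A_{[1:\tau]}, S, Z] - |S|$, and the expression is computable in $\ON(1)$ time by two table lookups and an addition, proving (2).
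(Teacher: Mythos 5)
Your proposal is correct and follows essentially the same route as the paper's proof: restricting a valid coloring of~$B(T_x)$ for the forward direction, forming the component-wise union coloring and invoking the tree-decomposition facts that $B(T_{y_1})\cap B(T_{y_2})=B_x$ and that no edge joins $B(T_{y_1})\setminus B_x$ to $B(T_{y_2})\setminus B_x$ for the backward direction, and the count $|S'|+|S''|-|S'\cap S''|$ with $S'\cap S''=S$ for the recurrence and the $\ON(1)$ lookup. If anything, your treatment of forbidden temporal paths that alternate between the two subtrees (splitting at $B_x$-vertices, doing a case analysis on their color, and inducting on the number of side transitions) is spelled out more carefully than the paper's terse dichotomy, which only distinguishes paths lying wholly in one side from paths containing an edge in neither.
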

  \begin{proof}

    \implone{}
    Let~$\langle A_{[1:\tau]},S,Z \rangle$ be a coloring of~$B_x = B_{y_1} = B_{y_2}$ 
    and let $\langle A_{[1:\tau]}',S',Z' \rangle$ be an extension to $B(T_x)$.
    Since~$B(T_{y_1}),B(T_{y_2}) \subseteq B(T_x)$ and~$B_x = B_{y_1} = B_{y_2}$, we know that~$\langle A_{[1:\tau]},S,Z \rangle$ is \exble{} to~$B(T_{y_1})$ and~$B(T_{y_2})$.

    \impltwo{}
    Let coloring~$\langle A_{[1:\tau]},S,Z \rangle$ of~$B_x$ be \exble{} to~$B(T_{y_1})$ and~$B(T_{y_2})$.
	Take~$\langle A'_{[1:\tau]},S',Z' \rangle$ and ~$\langle A''_{[1:\tau]},S'',Z'' \rangle$ to be extensions to~$B(T_{y_1})$ respectively to~$B(T_{y_2})$.
    We claim that $\langle A_1' \cup A_1'',A_2' \cup A_2'', \dots, A_\tau' \cup A_\tau'', S' \cup S'',Z'\cup Z'' \rangle$ is a valid coloring of~$B(T_x)$.
    Suppose not, that is, $\langle A_1' \cup A_1'',A_2' \cup A_2'', \dots, A_\tau' \cup A_\tau'', S' \cup S'',Z'\cup Z'' \rangle$ is a coloring but not valid, or it forms no coloring.
    
    In the first case, each~$s\not\in A_1' \cup A_1''$ or $z\not\in Z'\cup Z''$ contradicts the fact that~$\langle A'_{[1:\tau]},S',Z' \rangle$ and~$\langle A''_{[1:\tau]},S'',Z'' \rangle$ are valid colorings.
    Next, suppose there are~$a\in A_i'\cup A_i''$,~$i\in \tau$, and~$b\in Z'\cup Z''$ such that there is a \nonstrpath{a,b}~$P$ with departure time at least~$i$ in~$G[B(T_x)]-(S'\cup S'')$.
    Then, either~$P$ exists in~$G[B(T_{y_1})]$ or in~$G[B(T_{y_2})]$, contradicting the fact that~$\langle A'_{[1:\tau]},S',Z' \rangle$ and~$\langle A''_{[1:\tau]},S'',Z'' \rangle$ are valid colorings, or~$P$ contains an edge~$(\{v,w\},t)$ that is neither in~$G[B(T_{y_1})]$ nor in~$G[B(T_{y_2})]$.
    It follows that~$\{v,w\}\not\in B_{y_1}\cap B_{y,2}$ but~$\{v,w\}\subseteq B_{x}$, contradicting the fact that~$\T$ is a nice tree decomposition.
    It is not difficult to see that the case of~$a\in A_i'\cup A_i''$ and $a\in A_j'\cup A_j''$,~$i,j\in \tau$, such that there is a \nonstrpath{a,a'}~$P$ with departure time at least~$i$ at arrival time at most~$j-1$ in~$G[B(T_x)]-(S'\cup S'')$, follows the same argumentation.
    
    In the second case, that is, $\langle A_1' \cup A_1'', A_2' \cup A_2'', \dots, A_\tau' \cup A_\tau'', S' \cup S'',Z'\cup Z'' \rangle$ forms no coloring, there is a vertex~$v \in B(T_{y_2}) \cap B(T_{y_1})$ which has different colors in~$\langle A'_{[1:\tau]},S',Z' \rangle$ and~$\langle A''_{[1:\tau]},S'',Z'' \rangle$.
    If~$v\not\in B_x= B_{y_1} = B_{y_2}$, then $B^{-1}(v)$ is not a connected subtree of~$T$, contradicting the fact that~$\T$ is a nice tree decomposition.
    If~$v \in B_x$, then~$v$ has different colors in~$\langle A_{[1:\tau]},S,Z \rangle$, contradicting the fact that~$\langle A_{[1:\tau]},S,Z \rangle$ is a coloring of~$B_x$.
    Altogether, it follows that~$\langle A_1' \cup A_1'', A_2' \cup A_2'', \dots, A_\tau' \cup A_\tau'', S' \cup S'',Z'\cup Z'' \rangle$ is a valid coloring of~$B(T_x)$.

    Furthermore, this implies that for all vertices~$w \in B(T_x)$ it holds that $w \in S' \cap S''$ implies~$w \in S$.
    Hence,~$|S'| + |S''| - |S| =|S'| + |S''| - |S' \cap S''| = |S' \cup S''|$.

    Note that by a look up one table entry of~$D_{y_1}$ and one in~$D_{y_2}$, we can compute the table entry~$D_x[A_{[1:\tau]},S,Z]$ in~$\ON(1)$ time. 
  \end{proof}
  
  Having \cref{dp-tw-tau:forget,dp-tw-tau:join,dp-tw-tau:intro3,dp-tw-tau:intro2,dp-tw-tau:intro1,dp-tw-tau:leaf,dp-tw-tau:correctness} at hand, we now prove \cref{thm:fpt-tw-tau}.

  \begin{proof}[Proof of \cref{thm:fpt-tw-tau}]
    The algorithm works as follows on input instance~$\I = (\TG = (V,\TE,\tau),s,z,k)$ of \nonstrproblem{}.
	  Let~$\T$ be a nice tree decomposition for the underlying graph~$\ug{\TG}$ of width at most $\tw(\ug{\TG})$.
    \begin{compactenum}
	    \item Add~$s$ and~$z$ to every bag in~$\ON(\tw(\ug{\TG}) \cdot |V|)$ time.
		    Note that~$|V(T)| \in \ON(\tw(\ug{\TG}) \cdot |V|)$ and that each bag is of size at most~$\tw(\ug{\TG}) + 2$.
	    \item Compute the dynamic program of \cref{eq:dp} on~$\T$.
		    This can be done in~$\ON((\tau+2)^{\tw(\ug{\TG})+2} \cdot \tw(\ug{\TG}) \cdot |V| \cdot |\TE|)$ time because there are at most~$(\tau+2)^{\tw(\ug{\TG})+2}$ possible colorings for each bag, there are at most~$\ON(\tw(\ug{\TG}) \cdot |V|)$ many bags, and table entry for one coloring can be computed in~$\ON(|\TE|)$ time, 
		    see \cref{dp-tw-tau:leaf,dp-tw-tau:intro1,dp-tw-tau:intro2,dp-tw-tau:intro3,dp-tw-tau:forget,dp-tw-tau:join}.

	    \item Iterate over the root table~$D_r$. 
		    If there is an entry of size at most~$k$, then output \yes{}, otherwise output \no{}.
		    The correctness of this step follows from \cref{dp-tw-tau:correctness}.
    \end{compactenum}
    Alltogether, the input instance~$\I$ can be decided in~$\ON((\tau+2)^{\tw(\ug{\TG})+2} \cdot \tw(\ug{\TG}) \cdot |V| \cdot |\TE|)$~time.
  \end{proof}
  }

It remains open whether \nonstrproblem{} is fixed-pa\-ram\-e\-ter tractable when parameterized by~$\tw(\ug{\TG})$ or by~$k+\tw(\ug{\TG})$.
\section{Temporal Restrictions}
\label{sec:tem-classes}

In \Cref{sec:layer,sec:ug-classes} we considered restrictions on the layers and the underlying graph.
Importantly, these restrictions do not cover essential temporal aspects of a temporal graph, that is, any reordering of the layers yields a different temporal graph obeying the same restrictions.
In this section, we study temporal graph classes whose definitions do rely on the order of the layers.
Herein, we study \emph{monotone}, \emph{periodic}, \emph{consecutively connected}, and~\emph{steady} temporal graphs.

Note that the properties \emph{monotone}, \emph{periodic}, and \emph{consecutively connected} yield quite specific temporal graph classes~\cite{casteigts2012time}.
Unfortunately, even on these specific temporal graph classes, except for trivial cases, we obtain hardness by straight-forward arguments.
We refer to~\cref{tab:results} for an overview on our results.
\newcommand{\smtab}[1]{\scriptsize#1}
\newcommand{\mrrb}[2]{\multirow{#1}{*}{\rotatebox[origin=c]{90}{#2}}}
\renewcommand{\arraystretch}{1.1}

\paragraph{Monotone temporal graphs}

 Intuitively, a temporal graph is $p$-monotone if it can be decomposed into $p$ time intervals in each of which the layers are ordered by inclusion.
 
 \begin{definition}
  \label{def:pmonotone}
  A temporal graph $\TG=(V,\TE,\tau)$ is \emph{$p$-monotone} if~$p\in \N$ is the smallest number such that there are~$1=i_1< i_2< \ldots< i_{p+1}=\tau$ such that
  for all~$\ell\in[p]$
  \begin{compactitem}
  	\item $E_j\subseteq E_{j+1}$ for all~$i_\ell\leq j<i_{\ell+1}$, or 
  	\item $E_j\supseteq E_{j+1}$ for all~$i_\ell\leq j<i_{\ell+1}$
  \end{compactitem}
  holds.
 \end{definition}

 \citet{KhodaverdianWWY16} call a temporal graph monotone if whenever an edge is contained in a layer, this edge is contained in all succeeding layers. 
 Their motivation is based on temporal graphs that model activation of proteins or, more generally, activation by connected components. 
 Note that their definition of monotone temporal graphs is equivalent to our definition of 1-monotone temporal graphs where each layer is a subgraph of its successor.
 
\begin{table}[t]
  \setlength{\tabcolsep}{8pt}
  \centering
  \caption
  { 
    Summary of the results of \cref{sec:tem-classes}, where~$\tau$ denotes the maximum time label and~$r$ the number of periods in~$\TG$.
  }
  \def\fbs{3.5cm}
  \begin{tabular}{@{}p{0.49\textwidth}p{0.28\textwidth}p{0.125\textwidth}@{}}%
  \toprule
						& \multicolumn{2}{c}{\nonstrproblem} \\\cmidrule{2-3}
						& polynomial-time 					& \NP{}-hard \\\cmidrule{1-3}
  $p$-monotone temporal graphs					& $p=1$						& $p\geq 2$\\
  $p$-periodic	temporal graphs		 		& $p=1$, or $r\geq n$ 				& $p\geq 2$ \\
  $T$-interval connected temporal graphs			& ---							& $T\geq 1$ \\
	  $\lambda$-steady temporal graphs				& $\lambda =0$ or ($\lambda,\tau$ const.)       & $\lambda\geq 1$ \\
    \bottomrule	
  \end{tabular}
  \label{tab:results}
\end{table}

If a temporal graph $\TG$ has a layer $G_i = \ug{\TG}$,
then \nonstrproblem{} can trivially be solved by finding an $(s,z)$-separator in $G_i$.
In that case we call $\TG$ \emph{grounded}.
Therefore, a straightforward application of the folklore Ford-Fulkerson algorithm gives the following: 

 \begin{observation}
  \label{obs:easyonmonotone}
  \nonstrproblem{} is solvable in $\ON(k\cdot |\TE|)$ time on grounded temporal graphs,
  where $k$ is the solution size and $|\TE|$ the number of time-edges.
 \end{observation}

Note that 1-monotone temporal graphs are always grounded.
However, the situation changes already when the temporal graph is~$2$-monotone but not grounded.
To see that, first note that one can make every temporal graph $\tau$-monotone by simply adding edge-free layers between any two consecutive layers, formally:
 
 \begin{observation}
  \label{obs:nphardon2mono}
    There is a polynomial-time many-one reduction that maps any instance $(\TG=(V,\TE,\tau),s,z,k)$ of~\nonstrproblem{} to an equivalent instance~$(\TG':=(V,\TE',2\tau-1),s,z,k)$ such that for all~$i\in[\tau]$ it holds that~$E_{2i-1}(\TG')=E_i(\TG)$ and for all~$i\in[\tau-1]$ it holds that~$E_{2i}(\TG')=\emptyset$.
 \end{observation}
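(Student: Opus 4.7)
The plan is to give the reduction explicitly: on input $(\TG=(V,\TE,\tau),s,z,k)$, set $\TG'=(V,\TE',2\tau-1)$ with
\[
\TE' \coloneqq \{(e,2i-1) : (e,i)\in\TE\}.
\]
By construction $E_{2i-1}(\TG')=E_i(\TG)$ for $i\in[\tau]$ and $E_{2i}(\TG')=\emptyset$ for $i\in[\tau-1]$, and the map is computable in linear time.

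For correctness, the task reduces to establishing a bijection between the temporal $(s,z)$-paths of $\TG$ and those of $\TG'$ that preserves the visited vertex sequence. First I would show the forward direction: a temporal $(s,z)$-path $P=((e_1,t_1),\ldots,(e_\ell,t_\ell))$ in $\TG$ with $t_1\le\ldots\le t_\ell$ is mapped to $P'=((e_1,2t_1-1),\ldots,(e_\ell,2t_\ell-1))$, and this is a temporal $(s,z)$-path in $\TG'$ because $t_i\le t_{i+1}$ implies $2t_i-1\le 2t_{i+1}-1$ and because $(e_j,2t_j-1)\in\TE'$ by definition. Conversely, since $E_{2i}(\TG')=\emptyset$ for all $i\in[\tau-1]$, every time-edge used by a temporal $(s,z)$-path in $\TG'$ carries an odd time label $2j-1$; replacing every such label by $j$ preserves the ordering and produces a temporal $(s,z)$-path in $\TG$.

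From this bijection it is immediate that for any $S\subseteq V\setminus\{s,z\}$, the temporal graph $\TG-S$ admits a temporal $(s,z)$-path iff $\TG'-S$ does, because removing $S$ commutes with the time-dilation; hence $S$ is a \nonstrsep{s,z} in $\TG$ iff it is one in $\TG'$, which together with the unchanged budget $k$ yields equivalence of the two instances. There is no substantive obstacle here: the only things to verify carefully are that the map $i\mapsto 2i-1$ is monotone (so the non-decreasing time-label constraint on temporal paths transports in both directions) and that inserted empty layers genuinely cannot be traversed, which is immediate since they contain no time-edges.
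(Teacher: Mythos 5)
Your construction and argument are correct and coincide with what the paper intends: the paper states this as an observation without proof, regarding the insertion of edge-free layers (equivalently, relabelling each time-edge $(e,i)$ to $(e,2i-1)$) as immediate, and your label-dilation bijection between temporal $(s,z)$-paths is exactly the routine verification behind it. Nothing is missing.
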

 
 As~\nonstrproblem{} is already~\NP{}-complete for~$\tau=2$ \cite{zschoche2017computational}, this yields the following.
 
 \begin{observation}
  \label{obs:nphardon2mono}
    For all~$p\geq 2$, \nonstrproblem{} on $p$-monotone temporal graphs is \NP{}-complete.
 \end{observation}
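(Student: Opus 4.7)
I plan to prove NP-hardness by reducing from the $\tau = 2$ case of \nonstrproblem{}, which is already NP-hard by~\cite{zschoche2017computational}. Membership in \NP{} is immediate: given a candidate separator~$S$ of size at most~$k$, we verify in polynomial time via \cref{lemma:get-path} that no \nonstrpath{s,z} exists in~$\TG - S$.

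Given an NP-hard instance~$(\TG = (V,\TE,2), s, z, k)$ with both layers~$E_1(\TG), E_2(\TG)$ non-empty (otherwise the problem on~$\TG$ collapses to static $(s,z)$-separation on a single layer and is polynomial), for any fixed~$p \geq 2$ I would construct~$\TG_p' = (V, \TE_p', p+1)$ by setting~$E_1(\TG_p') := E_1(\TG)$,~$E_i(\TG_p') := \emptyset$ for every even~$i \in [p+1]$, and~$E_i(\TG_p') := E_2(\TG)$ for every odd~$i \in [p+1]$ with~$i \geq 3$. Since every even layer of~$\TG_p'$ is empty, any \nonstrpath{s,z} in~$\TG_p'$ uses $E_1(\TG)$-edges at time~$1$ and $E_2(\TG)$-edges at odd times~$\geq 3$, in non-decreasing temporal order. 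This is in one-to-one correspondence, at the level of vertex sequences, with the \nonstrpaths{s,z} of~$\TG$, so the \nonstrseps{s,z} of~$\TG$ and~$\TG_p'$ coincide; in particular $(\TG,s,z,k)$ and $(\TG_p',s,z,k)$ are equivalent.

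The main remaining step, which I regard as the principal obstacle, is to verify that $\TG_p'$ is exactly $p$-monotone (and not $p'$-monotone for some $p' < p$). Since~$E_1(\TG), E_2(\TG) \neq \emptyset$, every consecutive pair of layers in the sequence $E_1, \emptyset, E_2, \emptyset, E_2, \dots$ is in a strict $\subsetneq$- or $\supsetneq$-relation, and the directions of these relations strictly alternate along the sequence. Consequently, no monotone segment of any valid decomposition can span three or more consecutive positions, so each of the~$p$ transitions of~$\TG_p'$ must occupy its own segment, yielding exactly~$p$ segments. Hence~$p$ is minimal and~$\TG_p'$ is $p$-monotone, establishing NP-hardness of \nonstrproblem{} on $p$-monotone temporal graphs for every $p \geq 2$.
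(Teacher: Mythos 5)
Your proof is correct and takes essentially the same route as the paper, which likewise reduces from the \NP{}-hard $\tau=2$ case of \nonstrproblem{} by interleaving empty layers so that consecutive layers strictly alternate between $\supsetneq$ and $\subsetneq$, forcing one monotone interval per transition. Your version is somewhat more explicit than the paper's, since repeating the $E_2$-layer lets you realize every value $p\geq 2$ exactly and you verify the minimality of~$p$, padding details the paper leaves implicit.
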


\paragraph{Periodic temporal graphs}

 In several real-world scenarios one observes periodicity;
 indeed, whenever one observes mobile entities with periodic movements~\cite{casteigts2012time}, such as satellites or (scheduled) public transport, over longer time periods, periodic patterns appear.
 Such models motivate the following class of temporal graphs.
 
 \begin{definition}
  A temporal graph~$\TG=(V,\TE,\tau)$ is \emph{$p$-periodic}
	 if~$p\in\N$ is the smallest number such that~$\TG = {\TG'}^{r}$ for some~$\TG'=(V,\TE',p)$ and $r$ is called the \emph{number of periods}.
 \end{definition}

 Different notions of periodic temporal graphs exist in the literature.
 \citet{FlocchiniMS13} consider periodic temporal graphs obtained from ``carriers'', that is, a set of strict temporal paths define a network.
  \citet{LiuW09a} consider delay-tolerant networks where vertices have some cyclic movement pattern and get connected when they are in reach: if the time steps are large enough, then periodicity is observed.
 In both cases, the smallest common multiple of the time spans of the entities define the length of a period.
  \citet[Class~8]{casteigts2012time} define periodic temporal graphs by periodicity of edges, that is, for all edges~$e$, time steps~$t$, and~$c\in\N$, edge~$e$ is present at time step~$t$ if and only if~$e$ is present at time step~$t+c\cdot p$, where~$p$ is the periodicity.
  They require the underlying graph to be connected, but they do not require minimality on the periodicity.

 We know that \nonstrproblem{} is \NP{}-complete on~$2$-pe\-ri\-od\-ic temporal graphs~\cite{zschoche2017computational}.
 Contrarily, on $1$-periodic temporal graphs, \nonstrproblem{} collapses to \sepproblem{} in the underlying graph.
 Surprisingly, if the number of periods is large enough, then the problem becomes polynomial-time solvable.

	Let $P$ be an \npath{s,z} of length $\ell$ in the underlying graph $\ug{\TG}$ of the temporal graph $\TG = (V,\TE, \tau)$.
	A sequence $P' = \big((e_1,t_1),(e_2, t_2),\dots,(e_\ell,t_\ell)\big)$ of~$\ell$ time-edges from $\TE$ is a \emph{realization} of~$P$ ($P' \simeq P$) if $\big(e_1,e_2,\dots,e_\ell\big)$ is~$P$.
	Note, that the sequence of labels of $P'$ is not necessarily non-decreasing.
	Intuitively, we want measure how many non-decreasing points a realization of $P$ must have.
	The \emph{distance to temporality} of~$P$ in~$\TG$ is~$\min_{P' \simeq P} |f_{P'}|-1$, where~$|f_{P'}|$ is the number of monotonically increasing intervals of the function~$f_{P'} : [\ell] \rightarrow [\tau], f_{P'}(x) := t_x$ where~$t_x$ is the label of the~$x$-th time-edge of~$P'$. 
	Furthermore, the distance to temporality from~$s$ to~$z$ in~$\TG$ is the maximum distance to temporality over all \npath{s,z}s in~$\ug{\TG}$.
 
 \begin{lemma}
  \label{lem:highperiodicity}
	 Let~$\TG= {\TG'}^r$ be a $p$-periodic temporal graph such that the number of periods $r$ is at least the distance to temporality from $s$ to $z$ in $\TG'$.
  	 Then \nonstrproblem{} is solvable in $\ON(k\cdot |\TE|)$ time,
  where $k$ is the solution size and $|\TE|$ the number of time-edges. 
  \end{lemma}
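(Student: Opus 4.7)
The plan is to show that under the hypothesis on $r$, the problem collapses to classical (static) $(s,z)$-vertex-separation on the underlying graph $\ug{G}(\TG')$, which is solvable in polynomial time via max-flow/min-cut (Menger's theorem). Concretely, I would prove that for every $S\subseteq V\setminus\{s,z\}$, the set $S$ is a \nonstrsep{s,z} in $\TG$ if and only if $S$ is an $(s,z)$-separator in $\ug{G}(\TG')$. One direction is immediate: every \nonstrpath{s,z} in $\TG-S$ projects onto an $(s,z)$-path in $\ug{G}(\TG)-S=\ug{G}(\TG')-S$, so a static separator is always a temporal separator.

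The substantive direction is the converse. Suppose there is an $(s,z)$-path $P$ in $\ug{G}(\TG')-S$; the goal is to build from it a \nonstrpath{s,z} in $\TG-S$. First note that $\TG-S=(\TG'-S)^r$, so $\TG-S$ remains $p$-periodic with $r$ periods. Since removing vertices cannot increase the distance to temporality of a path (any optimal realization in $\TG'-S$ is also a realization in $\TG'$), the distance to temporality of $P$ in $\TG'-S$ is at most the bound $d$ on the distance to temporality from $s$ to $z$ in $\TG'$. Hence there is a realization $P'\simeq P$ in $\TG'-S$ whose label sequence decomposes into $d'+1$ maximal monotonically nondecreasing blocks $B_1,\ldots,B_{d'+1}$ with $d'\le d$. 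I would then realize $P$ in $\TG-S$ by embedding $B_i$ into the $i$-th period: each time-edge $(e,t)$ appearing in $B_i$ is replaced by $(e,t+(i-1)p)$. Periodicity guarantees that this shifted time-edge exists in $\TG-S$. Within each block the shifted labels remain nondecreasing, and across successive blocks they lie in the disjoint intervals $[(i-1)p+1,ip]$ and $[ip+1,(i+1)p]$, so the concatenation is a valid \nonstrpath{s,z} in $\TG-S$, contradicting that $S$ separates temporally.

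Given the equivalence, the algorithm simply computes a minimum $(s,z)$-vertex-separator in $\ug{G}(\TG')$ via standard max-flow/min-cut and compares its size with $k$. The main obstacle is the block-embedding argument above, in particular verifying that the number of periods $r$ suffices to accommodate all $d'+1$ blocks; this reduces to the inequality $d'+1\le r$, which is exactly where the hypothesis on $r$ is used (one must be careful with the off-by-one in the definition of distance to temporality, since $|f_{P'}|-1$ counts breakpoints rather than monotone blocks). The rest of the reduction is purely bookkeeping.
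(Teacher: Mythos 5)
Your strategy is the same as the paper's: under the hypothesis on $r$, temporal separation in $\TG$ collapses to static $(s,z)$-separation in $\ug{G}(\TG')$, which is then solved by a minimum vertex cut computation. The paper's proof is exactly this collapse, asserted in one sentence (``every \npath{s,z} in $\ug{G}(\TG)$ forms a \nonstrpath{s,z} in $\TG$'') with the block-into-period embedding left implicit; your write-up supplies that missing argument, and your care that the realized path lives in $\TG-S$ is the right way to obtain the separator equivalence. One small slip there: your parenthetical ``any optimal realization in $\TG'-S$ is also a realization in $\TG'$'' justifies the inequality opposite to the one you claim; the clean (and immediate) fact is that a realization of a path $P$ avoiding $S$ uses only time-edges on $P$, so the realizations of $P$ in $\TG'$ and in $\TG'-S$ coincide and its distance to temporality is unchanged.

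The point you defer as bookkeeping, however, is not bookkeeping. Your embedding must place the $d'+1$ maximal nondecreasing blocks of an optimal realization into pairwise distinct, increasing periods (distinct periods are forced, since consecutive blocks meet at a strict descent which survives any common shift), so the argument needs $d'+1\le r$. The lemma's hypothesis only gives $r\ge d\ge d'$, i.e.\ $d'+1\le r+1$, which is one period short exactly when some $(s,z)$-path attains $d'=d=r$; and no cleverer assignment helps, because a temporal realization inside $r$ periods crosses period boundaries at most $r-1$ times and hence induces a realization in $\TG'$ with at most $r-1$ descents, so a path of distance to temporality $d$ is temporally realizable in ${\TG'}^r$ only if $r\ge d+1$. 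Thus under the literal hypothesis $r\ge d$ neither your embedding nor the paper's one-line claim goes through in the boundary case; the argument actually proves the statement with $r\ge d+1$ (which still covers the subsequent corollary, since any path has at most $|V|-2$ descents, so $r\ge|V|\ge d+1$). You correctly isolated the crux inequality $d'+1\le r$ but then waved it off; to make the proof sound you should either strengthen the hypothesis to $r\ge d+1$ or give a separate argument for $r=d$ --- and the latter is not available along this route, since the collapse to static separation genuinely fails there.
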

 \begin{proof}
	 Let~$\TG= {\TG'}^r$ be a $p$-periodic temporal graph such that the number of periods $r$ is at least the distance to temporality from $s$ to $z$ in $\TG'$.
	 Then every~\npath{s,z} in~$\ug{\TG}$ forms a \nonstrpath{s,z} in~$\TG$.
	 Hence, we can compute a minimum~\nsep{s,z} in~$\ug{\TG}$, by $k$ rounds of the Ford-Fulkerson algorithm, to solve~\nonstrproblem{}. 
 \end{proof}
	Observe that the distance to temporality from~$s$ to~$z$ is two in the temporal graph from the reduction of \citet{zschoche2017computational} for maximum label~$\tau=2$.
  	Thus \nonstrproblem{} is \NP-complete, even if the input temporal graph~$\TG= {\TG'}^r$ is~$p$-periodic and the number of periods $r$ is one less than the distance to temporality from~$s$ to~$z$ in~$\TG'$.

However, the distance to temporality is clearly upper-bounded by the number of vertices.
Hence, we obtain the following.

\begin{corollary}
 Let~$\TG=(V,\TE,\tau)$ be a $p$-periodic temporal graph.
 If the number of periods~$r \geq |V|$, then \nonstrproblem{} is solvable in $\ON(k\cdot |\TE|)$ time,
  where $k$ is the solution size and $|\TE|$ the number of time-edges. 
\end{corollary}

\paragraph{Interval-connected temporal graphs}

\citet[Definition~2.1]{KuhnLO10} introduced the following class of temporal graphs.

\begin{definition}
 A temporal graph~$\TG=(V,\TE,\tau)$ is~$T$-interval connected for~$T\geq 1$ if for every~$t\in[\tau-T+1]$ the static graph~$G:=(V,\bigcap_{i=t}^{t+T-1} E_i(\TG))$ is connected.
\end{definition}

\citet{KuhnLO10} studied $T$-interval connected temporal graphs in the context of counting and token dissemination.
Note that temporal graphs where each layer is connected are 1-interval connected temporal graphs, but are not necessarily~$T$-interval connected for some~$T\geq 2$.
On the contrary, for every~$T$-interval connected temporal graph it holds that each layer is connected.

\begin{observation}
 \label{obs:tintconhard}
 There is a polynomial-time many-one reduction that maps any instance $(\TG=(V,\TE,\tau),s,z,k)$ of~\nonstrproblem{} to an equivalent instance~$(\TG'=(V',\TE',\tau),s,z,k+1)$ such that~$\TG'$ is $T$-interval connected for every~$T\geq 1$. 
\end{observation}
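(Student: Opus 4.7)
The plan is to reduce by adding a single universal dominating vertex that must be in every separator. Concretely, construct $\TG'=(V',\TE',\tau)$ by setting $V':=V\cup\{u\}$ for a fresh vertex $u\notin V$ and $\TE' := \TE \cup \{(\{u,v\},t) \mid v\in V,\ t\in[\tau]\}$. This is clearly a polynomial-time construction, and it preserves the property that there is no time-edge between $s$ and $z$.

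First I would verify the $T$-interval connectivity condition. For every $t\in[\tau-T+1]$ the intersection graph $(V',\bigcap_{i=t}^{t+T-1} E_i(\TG'))$ contains every edge $\{u,v\}$ with $v\in V$ (because these edges are present in every layer of $\TG'$), so it contains a spanning star centered at $u$ and is connected. Hence $\TG'$ is $T$-interval connected for every $T\geq 1$ simultaneously.

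Next I would establish the equivalence of the two instances. For the forward direction, if $S\subseteq V\setminus\{s,z\}$ is a \nonstrsep{s,z} of size at most $k$ in $\TG$, then $S\cup\{u\}$ has size at most $k+1$, and any \nonstrpath{s,z} in $\TG'-(S\cup\{u\})$ would use no vertex of $S$ and no edge incident with $u$, so it would lie entirely in $\TG - S$, a contradiction. For the reverse direction, suppose $S'\subseteq V'\setminus\{s,z\}$ is a \nonstrsep{s,z} of size at most $k+1$ in $\TG'$. I would first argue that $u\in S'$: if not, the sequence $((\{s,u\},1),(\{u,z\},1))$ is a valid \nonstrpath{s,z} in $\TG'-S'$, since the defining condition on non-strict temporal paths is only $t_i\le t_{i+1}$ and thus permits two consecutive time-edges with the same time label. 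Therefore $S'\setminus\{u\}\subseteq V\setminus\{s,z\}$ has size at most $k$, and any \nonstrpath{s,z} in $\TG-(S'\setminus\{u\})$ uses only original edges and is also a \nonstrpath{s,z} in $\TG'-S'$, contradicting the assumption on $S'$.

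The only subtle point, and really the only obstacle worth flagging, is the argument that $u$ must be contained in every $(s,z)$-separator of $\TG'$: this is where the non-strict semantics of temporal paths (allowing equal consecutive labels) is essential, and the same construction would fail in the strict setting where one would have to insert $u$ more carefully across different time steps. Everything else is a direct bookkeeping verification that the vertex-added instance has the claimed structure and the claimed one-to-one correspondence between separators of size $k$ in $\TG$ and separators of size $k+1$ containing $u$ in $\TG'$.
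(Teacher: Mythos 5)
Your proposal is correct and matches the paper's own proof: the paper likewise adds a single universal vertex joined to all other vertices in every layer, observes that every \nonstrsep{s,z} in~$\TG'$ must contain it, and notes that each intersection graph $\bigl(V',\bigcap_{i=t}^{t+T-1} E_i(\TG')\bigr)$ contains a spanning star centered at the new vertex. Your write-up merely spells out (correctly) the non-strict-path justification that the paper leaves as ``clearly''.
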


\begin{proof}
 Let instance~$\I=(\TG=(V,\TE,\tau),s,z,k)$ of \nonstrproblem{} be given.
 Obtain the temporal graph~$\TG'$ from~$\TG$ by adding a vertex~$v$ to~$\TG$ and connect~$v$ to all other vertices in~$V$ in each layer of~$\TG$.
 Clearly, every temporal~$(s,z)$-separator in~$\TG'$ contains vertex~$v$.
 As~$\TG=\TG'-v$, instance~$(\TG',s,z,k+1)$ is equivalent to~$\I$.
 Moreover, for any~$T\geq 1$ and~$t\in[\tau-T+1]$ the graph~$G:=(V,\bigcap_{i=t}^{t+T-1} E_i(\TG))$ is a supergraph of the star graph with center~$v$ and set~$V$ of leaves.
\end{proof}

\paragraph{Steady temporal graphs}

When observing a network over time with high resolution, we expect evolutionary instead of revolutionary changes in each time step.
For instance, observing any contact network per second, we do not expect many contacts to appear in the same second.
More generally, in several real-world scenarios we do not expect big changes from one time step to the other.
This assumption motivates the following class of temporal graphs.

\begin{definition}
	A temporal graph~$\TG = (V,\TE,\tau)$ is \emph{$\lambda$-steady} if~$\lambda\in\N$ is the smallest number such that for each point in time~$t \in [\tau-1]$ the size of the symmetric difference of two consecutive edge sets~$|E_t \triangle E_{t+1}|$  is at most~$\lambda$.
\end{definition}
To the best of our knowledge, this class has not been considered in the literature.

The following shows that many hardness results for temporal graphs are also valid for $\lambda$-steady temporal graphs, even if~$\lambda = 1$.

\begin{proposition}
	\label{lemma:steady-hardness-translator}
	There is a polynomial-time many-one reduction that maps any instance $(\TG=(V,\TE,\tau),s,z,k)$ of~\nonstrproblem{} to an equivalent instance~$(\TG'=(V',\TE',\tau'),s,z,k)$ such that~$\TG'$ is 1-steady.
\end{proposition}
\begin{proof}
Let~$\I=(\TG=(V,\TE,\tau),s,z,k)$ be an instance of \nonstrproblem{}.
We define~$\TG' = (V,\TE',\tau')$ as follows.
Intuitively, we slowly construct and subsequently deconstruct each layer~$E_i$ of~$\TG$.
Formally, for each~$i\in[\tau]$ we write $E_i := E_i(\TG) =: \set{(e_{i,j}, i) ; j \in [\abs{E_i}]}$ 
and define an auxiliary temporal graph $\TG_i := (V, \TE_i, 2\cdot\abs{E_i})$
where $\TE_i := \set{(e_{i,j}, t); j \in [\abs{E_i}] \land \abs{\abs{E_i}-t} < j}$.
In particular, we have $\abs{E_1(\TG_i)} = 1$,
$E_{2\cdot\abs{E_i}}(\TG_i) = \emptyset$,
and $E_{\abs{E_i}}(\TG_i) = E_i$.
Now we construct $\TG'$ as $\TG' := \TG_1 \circ \TG_2 \circ \ldots \circ \TG_{[\tau]}$.
Observe that~$\TG'$ is $1$-steady.
Furthermore, for any \nonstrpath{s, z} in~$\TG'$, there is a \nonstrpath{s,z} in $\TG$ that uses the same vertices and vice versa.
Hence \nonstrproblem{} is equivalent on inputs $\TG$ and $\TG'$.
\end{proof}
The reduction of \cref{lemma:steady-hardness-translator} increases the maximum label by a factor depending on the input size.
Indeed, from previous results \cite{zschoche2017computational} it follows that for any fixed $\lambda$, \nonstrproblem{} on~$\lambda$-steady temporal graphs is fixed-parameter tractable when parameterized by~$\tau$.
\begin{corollary}
 	\label{thm:fpt-steady-tau}
 	For any fixed $\lambda$ we have that \nonstrproblem{} on~$\lambda$-steady temporal graphs is fixed-parameter tractable when parameterized by the maximum label~$\tau$. 
\end{corollary}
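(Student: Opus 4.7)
The plan is to exploit the bounded temporal variation of a $\lambda$-steady graph. Define the set of ``dynamic'' edges $D := \bigcup_{t=1}^{\tau-1}(E_t \triangle E_{t+1})$ and their endpoint set $V_D := V(D)$. The $\lambda$-steady condition immediately gives $|D| \le \lambda(\tau-1)$ and $|V_D| \le 2\lambda(\tau-1)$, both bounded by a function of $\tau$ and $\lambda$. Every other edge that ever appears is present in every layer, forming a ``static core'' $C$; in particular, any non-strict temporal path using only $C$-edges is simply a static $(s,z)$-path in $(V,C)$.

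Next, I would branch on the intersection of an optimal temporal $(s,z)$-separator $R$ with $V_D$, enumerating all $2^{|V_D|}$ candidate subsets $R_D \subseteq V_D$. For each guess, the residual task is to find a minimum $R_C \subseteq V \setminus (V_D \cup \{s,z\})$ such that $\TG - R_D - R_C$ admits no temporal $(s,z)$-path. Because $R_C$ is disjoint from $V_D$, the set of surviving dynamic edges is fixed by $R_D$ alone: the guess is infeasible precisely when some surviving temporal $(s,z)$-path has all intermediate vertices in $V_D \setminus R_D$ (since $R_C$ could never touch such a path). Otherwise, every surviving temporal $(s,z)$-path crosses $V \setminus V_D$, and I reduce the task to a polynomial-time static $(s,z)$-vertex-cut on an auxiliary graph $H$ obtained from the $C$-edges of $\TG - R_D$ by adding, for each pair in $(V_D \cup \{s,z\}) \setminus R_D$ joined by a $V_D$-internal temporal path of $\TG-R_D$, a shortcut edge. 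Iterating over all $R_D$ gives a running time of $2^{O(\lambda\tau)}\cdot |\TG|^{O(1)}$, hence fixed-parameter tractability in $\tau$ (for each fixed~$\lambda$).

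The main obstacle will be the soundness of the auxiliary-graph construction: one must verify that every $(s,z)$-walk in $H - R_C$ obtained by alternating static-core segments and shortcut edges lifts to a genuine temporal $(s,z)$-path in $\TG - R_D - R_C$, which requires checking that the non-decreasing time constraints across successive shortcuts can always be simultaneously satisfied, presumably by equipping each shortcut with an appropriate time-window and an earliest-arrival invariant in the spirit of \cref{lemma:get-path}. An alternative cleaner route is to contract every maximal $(V \setminus V_D)$-path through $C$ in $\TG - R_D$ into a single edge, producing an equivalent temporal graph whose size is polynomial in~$\tau$ and~$\lambda$, and then to apply the previous FPT-in-size algorithm of Zschoche et al.~\cite{zschoche2017computational} directly to this bounded contraction, so that the combinatorial work on the ``compressed'' instance replaces the explicit analysis of $H$.
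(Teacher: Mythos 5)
Your set $V_D$ is exactly the paper's \emph{temporal core}: an edge lies in some symmetric difference $E_t\triangle E_{t+1}$ if and only if it belongs to $(\bigcup_{t}E_t)\setminus(\bigcap_{t}E_t)$, and your bound $|V_D|\le 2\lambda(\tau-1)$ is precisely the observation on which the paper's proof rests. The paper then simply invokes the result of \citet{zschoche2017computational} that \nonstrproblem{} is fixed-parameter tractable with respect to the temporal core size; you instead try to re-prove that result, and this is where a genuine gap sits. The auxiliary graph $H$ with unannotated shortcut edges is not an equivalent reformulation of the residual problem: a shortcut only records that \emph{some} $V_D$-internal temporal path joins the pair, so an $(s,z)$-path in $H-R_C$ using two shortcuts whose realizations have incompatible windows (the first realizable only with a late arrival, the second only with an early departure, with nothing but permanent core edges in between, which cannot make time decrease) need not lift to a temporal path in $\TG-R_D-R_C$. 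Hence the minimum static cut in $H$ can be strictly larger than the minimum feasible $R_C$ for that guess (it can even be forced through a well-connected core although no temporal path exists at all), and the algorithm may reject yes-instances. You flag this obstacle yourself, but ``equipping each shortcut with an appropriate time-window and an earliest-arrival invariant'' is precisely the nontrivial core of the temporal-core FPT proof and is not carried out; the direction ``cut in $H$ implies separator'' is fine, the converse is the missing piece.

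The fallback you sketch does not close the gap either. Vertices of $V\setminus V_D$ may well have to be in the separator (for small $\lambda$ an optimal \nonstrsep{s,z} typically consists mostly of such static-core vertices), so contracting maximal core paths through them into single edges discards exactly the deletion options you need, unless you introduce capacities and prove an exchange argument; moreover there is no ``FPT-in-size'' algorithm to run on the compressed instance, since \nonstrproblem{} is \NP{}-hard, so one would have to brute-force it, and the correctness of that again requires the very equivalence that is unproven. The shortest correct completion of your argument is the paper's own: keep your observation that $V_D$ is the temporal core of a $\lambda$-steady temporal graph and has size at most $2\lambda\tau$, and then cite the fixed-parameter tractability of \nonstrproblem{} parameterized by the temporal core size from \citet{zschoche2017computational} instead of re-deriving it.
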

\begin{proof}
For a temporal graph $\TG = (V,\TE,\tau)$, the vertex 
set~$W:=\{v\in V\mid \exists \{v,w\}\in (\bigcup_{i=1}^\tau E_i) \setminus (\bigcap_{i=1}^\tau E_i)\} \subseteq V$
  is called the \emph{temporal core} of $\TG$.
	\citet{zschoche2017computational} showed that \nonstrproblem{} is fixed-parameter tractable when parameterized by the size of the temporal core.

	The statement follows directly from the fact that the temporal core of a~$\lambda$-steady temporal graph~$\TG = (V,\TE,\tau)$ is upper-bounded by~$2 \cdot \lambda \cdot \tau$.
\end{proof}

\section{(Almost) Order-Preserving Temporal Unit Interval Graphs}
\label{sec:unitinterval}
In this section,
we sort of combine aspects studied in \Cref{sec:ug-classes} (restrictions of the underlying graph)
and \Cref{sec:tem-classes} (temporal restrictions).
To this end, we focus on temporal graphs where
each layer is a unit interval graph and we further restrict how much the intervals may change over
time. 
This is a layer-wise restriction with, additionally, a temporal restriction.
Recall from \cref{thm:uinterval-hardness-fixed-tau} that 
\nonstrproblem{} remains \NP-complete on temporal graphs where each layer is a 
unit interval graph, even if the maximum label $\tau$ is a small constant.

Now we show in the
following that if there is an ordering on the vertices that matches the relative
positions of the intervals in all layers, then we can solve \nonstrproblem{} in
polynomial time.
We then generalize this by introducing a parameter that, informally speaking,
describes how much the interval orderings may change over time, and show
fixed-parameter tractability with respect to the combination of this new
parameter and the maximum label~$\tau$.

We call a total ordering~$<_V$ on a vertex set~$V$ \emph{compatible} with a unit interval graph~$G=(V,E)$ if there are unit intervals~$[a_v,a_v+1]$ with~$a_v\in\mathbb{R}$ for all vertices~$v\in V$ that induce the graph $G$ and for all~$u, v\in V$ with~$u<_V v$ we have that~$a_u\le a_v$. Note that for every unit interval graph there is a total ordering on the vertices that is compatible with it.

\begin{definition}
  A temporal graph~$\TG=(V,\TE,\tau)$ is an \emph{\outempinterval} if~$\TG$ is a \utempinterval{} and there is a total ordering~$<_V$ on the vertex set~$V$ that is compatible with every layer~$G_i$.
\end{definition}

Given an \outempinterval{} $\TG=(V,\TE,\tau)$, 
we denote by~$<_V$ a compatible total ordering on~$V$.
Let~$n:=|V|$, 
and number the vertices in~$V =: \{v_1, v_2,\ldots, v_n\}$
such that $v_i<_V~v_j \Leftrightarrow i\le j$.
Furthermore, we use the following notation:~$V_{<i} := \{v_j \mid 1\leq j<i\}$ and~$V_{>i} := \{v_j \mid n\geq j>i\}$
and~$N^{>}_{G_t}(v_i):=N_{G_t}(v_i)\cap V_{>i}$. 
If the ordering~$<_V$ is clear from the context,
then we refer to vertices as smaller or larger than other vertices to express that they appear before or after, respectively, in the ordering~$<_V$.

\begin{lemma}\label{thm:finding-uinterval-order}
Order-preserving \utempinterval{}s can be recognized in polynomial time and a compatible vertex ordering for a given \outempinterval{} can be computed in polynomial time.
\end{lemma}
\begin{proof}
Let $\TG = (V, \TE, \tau)$ be a temporal graph.
Then, due to \citet[Theorem~1]{LOOGES199315}, we know that $\TG$ is an
\outempinterval{} with vertex ordering $<_V$ if and only if the vertices in of
every closed neighborhood $N_{G_i}[v] := N_{G_i}(v) \cup \{ v \}$ with~$v
\in V$ of every
layer $i \in [\tau]$ appear consecutively in the ordering $<_V$.
Thus, the problem can be solved by searching a column ordering of the matrix $M \in \{0,1\}^{\abs{V}\cdot \tau \times \abs{V}}$
defined by $M[(i,t),j] = 1 \iff v_j \in N_{G_t}[v_i]$ that has the \emph{consecutive ones property}, a task for which a linear-time algorithm is known \cite{BOOTH1976335}.
\end{proof}

We now state some useful properties of temporal paths and separators in \outempinterval{}s.
Due to \condRef{item:intervalsep0} of the following lemma, we will henceforth assume without loss of generality that~$v_1=s$ and~$v_n=z$.

\begin{lemma}
 \label{lem:metaordpreinterv}
 Let $\TG=(V,\TE,\tau)$ be an \outempinterval{} with ordering~$<_V$. %
 \begin{compactenum}[(i)]
  \item\label{item:hered} For all $1\le a \le b\le\tau$ and for all $S\subseteq V$ we have that $\TG_{[a:b]}-S$ is also an \outempinterval{}.
  \item\label{item:intervalpaths} If for some $1\le i<j\le n$ there is a \nonstrpath{v_i,v_j} $P$ in $\TG$, then there is \nonstrpath{v_i,v_j} $P'$ in $\TG$ that visits its vertices in the order given by~$<_V$.
  \item\label{item:intervalsep0} Let $S\subseteq V$ be a \nonstrsep{v_i,v_j} in $\TG$ for some $1\le i < j\le n$. 
Then $S':=S\setminus (V_{<i}\cup V_{>j})$ is also a \nonstrsep{v_i,v_j} in $\TG$.
  \item\label{item:intervalsep1} A \nonstrsep{v_i,v_j} in~$\TG$ is also a \nonstrsep{v_{i'},v_{j'}} in~$\TG$ for all $1\le i'\le i< j\le j'\le n$.
  \item\label{item:intervalsep2} Let $S\subseteq V\setminus\{s,z\}$ such that
  $v_i$ is the largest vertex reachable from~$s$ in~$\TG-S$.
	Let $t$ denote the first time~$v_i$ is reachable from~$s$ in $\TG-S$, and let~$t\leq t'\leq \tau$. 
	Then $N_{G_{t'}}^>(v_i)\subseteq S$.
  \item\label{item:intervalsep3} Let $S_1\subseteq V\setminus\{s,z\}$ such that $v_i$ is
  the largest vertex reachable from~$s$ in~$\TG_{[1:t]}-S_1$ for some
  $t\in[\tau-1]$. Let $S_2\subseteq V\setminus\{s,z\}$ such that $v_j$ is
  the largest vertex reachable from~$s$ in $\TG_{[t+1:\tau]}-S_2$. If $i\le j$,
  then $S:=S_1\cup S_2$ is a \nonstrsep{s,z} in~$\TG$ such that there is no
  vertex reachable from $s$ in $\TG-S$ that is larger than $v_j$.
  \item\label{item:intervalsep4} Let $S\subseteq V$ be an inclusion-wise minimal
  \nonstrsep{s,z} in $\TG$ with the property that a given $v_i$ is the largest
  vertex that is reachable from $s$ in~$\TG-S$ and let $v_j$ be the smallest
  vertex that is not in $S$ such that~$S$ is also a \nonstrsep{s,v_j} in $\TG$.
  Then for all $v_i<_V v<_Vv_j$ with~$v_i\neq v\neq v_j$ we have that
  $v\in S$, and we have that $S\cap V_{>j}=\emptyset$.
 \end{compactenum}
\end{lemma}

\begin{proof}
\condRef{item:hered}: Obvious.

\smallskip\noindent\condRef{item:intervalpaths}:
We prove that there is a \nonstrpath{v_i,v_j} $P'$ in $\TG$ that visits its vertices in the order given by~$<_V$ and $t\leq t'$, where~$t$ and~$t'$ denote the first time label in~$P$ and in~$P'$, respectively.
We give an inductive proof over the number of edges in the temporal~$(v_i,v_j)$-path~$P$.
For the base case, if~$P$ has only one edge, then~$E(P)=(\{v_i,v_j\},t)$ for some~$t\in[\tau]$.
Hence,~$P':=P$ clearly is the sought temporal path.
Now, assume that the statement holds for all temporal~$(v_i,v_j)$-paths with at most~$\ell\in \N$ edges.
For the inductive step, let~$P$ be a temporal~$(v_i,v_j)$-path with exactly~$\ell+1$ edges.
Let~$v_{i'}$ be the last vertex on~$P$ such that~$i'\leq i$, and let~$t\in[\tau]$ be the index of the layer where~$P$ contains the edge~$\{v_{i'},v_x\}$, where~$v_x$ is the successor of~$v_{i'}$ on~$P$.
Since~$G_t$ is a unit interval graph with order~$<_V$, the edge~$\{v_i,v_x\}$ is present in~$G_t$.
Denote by~$P_x$ the temporal $(v_x,v_j)$-subpath of~$P$, starting at vertex~$v_x$.
Observe that~$P_x$ has at most~$\ell$ edges, and hence there is a path~$P_x'$  visiting its vertices in the order given by~$<_V$ and starting at some time label~$t'\geq t$.
Thus, the path~$P'=(\{v_i\}\cup V(P_x'),\{(\{v_i,v_x\},t)\}\cup E(P_x'))$, that starts with edge~$(\{v_i,v_x\},t)$ and then follows~$P_x'$, visits its vertices in the order given by~$<_V$ and starts at time label~$t$ being at least the first time label appearing on the edges of~$P$.

\smallskip\noindent\condRef{item:intervalsep0}:
Follows directly from \condRef{item:intervalpaths}.

\smallskip\noindent\condRef{item:intervalsep1}:
Follows directly from \condRef{item:intervalpaths}.

\smallskip\noindent\condRef{item:intervalsep2}:
Suppose not. 
Then there is a time step~$t''$ with larger neighborhood and hence there is a vertex $v_j\in N_{G_{t''}}^>(v_i)\setminus N_{G_{t'}}^>(v_i)$.
Hence,~$v_j$ with $j>i$ is reachable from~$s$ in $\TG_{[1:t'']}-S$, contradicting the definition of~$v_i$.

\smallskip\noindent\condRef{item:intervalsep3}:
Follows directly from \condRef{item:intervalpaths}.

\smallskip\noindent\condRef{item:intervalsep4}: Assume towards a contradiction that there is a vertex $v\notin S$ with $v_i<_V v<_Vv_j$. 
Then either $v$ is reachable from $s$ in $\TG-S$, which would be a contradiction to $v_i$ being the largest vertex reachable from $s$ in $\TG-S$, or~$v$ is not reachable from $s$ in $\TG-S$, a contradiction to the assumption that~$v_j$ is the smallest vertex such that $S$ is also a \nonstrsep{s,v_j} in~$\TG$. 
Furthermore, $S\cap V_{>j}=\emptyset$ follows from the assumption that $S$ is inclusion-wise minimal and \cref{lem:metaordpreinterv}\condRef{item:intervalsep0}.
\end{proof}

Now we have the necessary tools to prove that \nonstrproblem{} can be solved in polynomial time on \outempinterval{}s.

\begin{theorem}
  \label{prop:intervalpoly}
  \nonstrproblem{} on \outempinterval{}s is solvable in $\ON(|V|^2\cdot \tau^2)$ time.
\end{theorem}
\begin{proof}
Let $\TG=(V,\TE,\tau)$ be a given \outempinterval{} and $k$ be a given upper bound on the temporal separator size.
By \cref{thm:finding-uinterval-order} we can find a total vertex ordering $<_V$ compatible with every layer.
Assume that there is no layer with an edge between~$s$ and~$z$. 
In order to solve the problem, we use the
following dynamic programming table~$T$ of size~$\tau\times (n-1)$.
In the table entry $T[t,i]$ we store a minimum \nonstrsep{s,z}~$S$ for
$\TG_{[1:t]}$ with the property that there is no vertex reachable from~$s$ in~$\TG_{[1:t]}-S$ that is larger than~$v_i$.
Let $$\mathcal{N}(v,t,t'):=\begin{cases}
      \set*{N^{>}_{G_{t''}}(v)\mid t\leq t''\le t'},& \text{if } \forall t\leq
      t''\le t': (\{v,z\},t'')\notin\TE, \\
      \set*{V\setminus\{s,z\}},& \text{otherwise.}
    \end{cases}$$
Let~$T$ be defined in the following way:
\begin{align}
T[1,1] :=& \ N_{G_1}(s),\label{DP:base1}\\
	T[t,1] :=& \argmax_{S\in \mathcal{N}(s,1,t)}|S|,\label{DP:base2}\\
T[1,i] :=& \argmin_{S\in Y_i}|S|,\,\text{where
}Y_i:=\{T[1,i-1]\}\cup \mathcal{N}(v_i,1,1),\label{DP:induction1}\\
 T[t,i] :=& \ \argmin_{S\in X_{t,i}} |S| \text{, where}\label{DP:induction2}\\
 & \ X_{t,i} := \set*{T[t',i']\cup \argmax_{{S\in \mathcal{N}(v_i,t'+1,t)}} \abs{S} ; i'\in[i-1] \wedge t'\in[t-1] }\nonumber\\
& \ \ \ \ \ \ \ \ \ \ \ \cup \set*{T[t,i-1] } \cup \set*{\argmax_{S\in
\mathcal{N}(v_i,1,t)} \abs{S} }\nonumber.
\end{align}
We decide whether we face a \yes-instance by checking if there is an $i\in[n-1]$ such that~$|T[\tau,i]|\le k$.

It is easy to see that each table entry can be computed in $\ON(|V|\cdot\tau)$ time and the table has size $|V|\cdot \tau$. 
Hence, the algorithm has the claimed polynomial running time. 

\emph{Correctness.} 
 We prove by induction on both dimensions of~$T$ that~$T[t,i]$ is a minimum
 \nonstrsep{s,z}~$S$ for $\TG_{[1:t]}$ with the property that there is no vertex
 reachable from~$s$ in~$\TG_{[1:t]}-S$ that is larger than~$v_i$ with respect
 to $<_V$.
 First, observe that
 \cref{lem:metaordpreinterv}\condRef{item:intervalsep2} implies that~$T[1,1]$
 and~$T[t,1]$ are correctly filled in~\cref{DP:base1,DP:base2}.
Hence, the base for our induction is correct.
 
We proceed with the proof of the cases specified by~\cref{DP:induction1,DP:induction2} in two steps. 
First we show that for all~$T[t,i]$ with~$t\ge 1$ and~$i>1$, we have that~$T[t,i]$ is a \nonstrsep{s,z}~$S$ for~$\TG_{[1:t]}$ with the property that there is no vertex reachable from~$s$ in~$\TG_{[1:t]}-S$ that is larger than~$v_i$. 
Then, in a second step, we show that said separator is \emph{minimum}.
 
It is easy to check that if $t=1$, then for all $i\in[n-1]$ we have that~$T[1,i]$ (as specified in \cref{DP:induction1}) is a \nonstrsep{s,z} with the desired
properties. Next, we consider the case that $t,i>1$. We show that every set in~$X_{t,i}$ is a \nonstrsep{s,z} with the desired
properties. By induction we know that this holds for~$T[t,i-1]$. It is also easy to
check that it holds for~$S':=\argmax_{S\in \mathcal{N}(v_i,1,t)} |S|$. 
For arbitrary~$i'\in[i-1]$ and $t'\in[t-1]$ (\cref{DP:induction2}) it is also straightforward to see that~$S':=T[t',i']\cup \argmax_{S\in \mathcal{N}(v_i,t'+1,t)} |S|$ has the desired properties. 
By induction, $T[t',i']$ contains a \nonstrsep{s,z} for~$\TG_{[1:t']}$ with the property that there is no vertex reachable from~$s$ in~$\TG_{[1:t']}-T[t',i']$ that is larger than~$v_{i'}$. 
The set~$S'':=\argmax_{S\in \mathcal{N}(v_i,t'+1,t)} |S|$ either equals~$V\setminus\{s,z\}$, in which case we clearly have a separator with the desired properties, or it forms a \nonstrsep{s,z} for~$\TG_{[t'+1:t]}$ with the property that there is no vertex reachable from~$s$ in~$\TG_{[t'+1:t]}-S'$ that is larger than~$v_i$. 
Then by \cref{lem:metaordpreinterv}\condRef{item:intervalsep3} we get that we have a separator with the desired properties. 
 
 Now we show that for all $t\ge 1$ and $i>1$, the separator contained
 in~$T[t,i]$ is of minimum size.
 Let~$S^\star\subseteq V\setminus\{s,z\}$ be a minimum set of vertices such that
 in~$\TG_{[1:t]}-S^\star$ the vertex~$v_j$, $j\leq i$,  is the largest reachable
 vertex from~$s$.
 If~$j<i$, then by induction hypothesis (both for $t=1$ and $t>1$) we have that~$|S^\star|\geq |T[t,i-1]|$ and hence~$|T[t,i]|\le |S^\star|$.
 
 We continue with the case that~$j=i$.
 If $v_i$ is reachable in $\TG_{[1:1]}-S^\star$ from~$s$, then
 by~\cref{lem:metaordpreinterv}\condRef{item:intervalsep2} we know that
 $N^{>}_{G_{t'}}(v_i)\subseteq S^\star$ for all~$t'\in[t]$.
 As~$S^\star$ is minimum, it holds that~$|S^\star|=\max_{S\in
 \mathcal{N}(v_i,1,t)} |S|$, and we have that $\argmax_{S\in \mathcal{N}(v_i,1,t)} |S|\in X_{t,i}$
 (if $t=1$, then $\argmax_{S\in \mathcal{N}(v_i,1,t)} |S|\in
 Y_i$) which implies that $|T[t,i]|\le |S^\star|$.
 
 Now assume that~$t>1$ and~$v_i$ is not reachable from~$s$ in~$\TG_{[1:1]}-S^\star$.
 Let~$t'$ be the largest time-step in which~$v_i$ is not reachable from~$s$ in
 $\TG_{[1:t']}-S^\star$, and let~$i'<i$ be the largest index such
 that~$v_{i'}$ is reachable from~$s$ in~$\TG_{[1:t']}-S^\star$.
 By~\cref{lem:metaordpreinterv}\condRef{item:intervalsep2}, we know
 that~$S'':=N^{>}_{G_{t''}}(v_i)$, where~$t'+1\leq t''\leq t$ achieves the
 maximum cardinality, is contained in $S^\star$. Let $S'$ be the smallest subset
 of~$S^\star$ such that in $\TG_{[1:t']}-S'$ the vertex~$v_{i'}$ is the largest
 reachable vertex from~$s$. By induction hypothesis, we have that $|S'|\ge
 |T[t',i']|$. From~\cref{lem:metaordpreinterv}\condRef{item:intervalsep4} it
 follows that $S'\cap S''=\emptyset$. Hence, because $S^\star$ is minimum, we can write
 $S^\star=S'\uplus S''$. %
 Hence, we have~
 $$  |S| = |S'|+|S''|\geq |T[t',i']| + |N^{>}_{G_{t''}}(v_i)|\geq \min_{S\in
 X_{t,i}} |S|= |T[t,i]|,$$ where the second inequality follows from the
 fact that~$T[t',i'] \cup N^{>}_{G_{t''}}(v_i)\in X_{t,i}$.
\end{proof}

Next we show how to use the derived poly\-no\-mi\-al-time algorithm as a basis
for a distance-to-triviality
parameterization~\cite{cai2003parameterized,guo2004structural}. For a
\utempinterval{} we introduce a parameter $\kappa$ that bounds how much the
compatible vertex orderings of two consecutive layers of a \utempinterval{}
differ. We use the \emph{Kendall tau} distance~\cite{kendall1938new} to measure
the similarity of vertex orderings. The Kendall tau distance~$K$ is a metric
that counts the number of pairwise disagreements between two total orderings; it is also known as ``bubble sort distance''. 
We call the parameter~$\kappa$ the \emph{shuffle number} of a \utempinterval{}
and define it as follows.
\begin{definition}[Shuffle Number]
Given a \utempinterval{}~$\TG=(V,\TE,\tau)$, its \emph{shuffle number} $\kappa$
is the smallest integer such that there are vertex orderings $<_V^1, <_V^2,
\ldots, <_V^\tau$ with the property that $<_V^t$ is compatible with layer $G_t$
for all~$t\in[\tau]$, and the orderings of any two consecutive layers have
Kendall tau distance at most $\kappa$, that is, for all $t\in[\tau-1]$ we have
that~$K(<_V^t,<_V^{t+1})\le\kappa$. We say that the vertex orderings $<_V^1, <_V^2, \ldots, <_V^\tau$ \emph{witness} the shuffle number of $\TG$.
\end{definition}
Clearly for \outempinterval{}s we have that $\kappa=0$ and it is easy to observe
(with the help of \cref{lemma:edge-layer}) that we get \NP-completeness for
$\kappa=1$.
However, if we consider the parameter combination $(\kappa+\tau)$ the problem becomes fixed-parameter tractable.

\begin{theorem}\label{thm:ktdist}
	Given the a \utempinterval{} and a vertex orderings that witness its
	shuffle number~$\kappa$,
	\nonstrproblem{} is fixed-parameter tractable when parameterized by $\kappa + \tau$,
	where $\tau$ is the maximum label.
\end{theorem}
 
 \begin{proof}
Let $\TG=(V,\TE,\tau)$ be a \utempinterval{} given as input together with vertex
orderings $<_V^1, <_V^2, \ldots, <_V^\tau$, and let $k$ be the size bound on the
separator. The algorithm proceeds as follows. We first ``mark'' all vertices~$u, v$ with the property that for some $t\in[\tau-1]$ we have that $u<_V^t v$ and $v<_V^{t+1}u$, that is, their relative order is flipped at some point in time. We also always mark $s$ and $z$. Let $M$ be the set of marked vertices. More formally, let $M$ be the largest subset of $V$ that contains $s$ and $z$ with the property that for all $u\in M\setminus \{s,z\}$ there is a $v\in M$ and a $t\in[\tau-1]$ such that either $u<_V^t v$ and $v<_V^{t+1}u$, or $v<_V^t u$ and $u<_V^{t+1}v$.

Note that we can compute~$M$ in polynomial time when given the vertex orderings
using bubble sort and we have that $|M|\le 2\cdot\kappa\cdot\tau+2$. If $M=V$, then we can solve the problem in the desired running time by trying out every possible separator. From now on we assume that~$M\neq V$.

Next, we define two partitions, one for the vertex set $M$ and one for the vertex set $V':=V\setminus M$. Intuitively, the partition of $V'$ describes which parts of the orderings 
stay the same over the whole lifetime of the temporal graph, or in other
words, which parts of the graph are order-preserving. 
The partition of $M$ describes which vertices lie between parts of the temporal graphs that are order-preserving.

We define a partition of the vertices in $M = M_1\uplus M_2 \uplus \ldots\uplus M_p$ as follows: Let $V=\{v_1, v_2, \ldots, v_n\}$ be the vertex ordering given by $<_V^1$ (that is, $v_i <_V^1 v_j$ if and only if $i<j$).
\begin{compactitem}
\item We have that $s\in M_1$ and $z\in M_p$.
\item If $v_i\in M$ and $v_{i+1}\in M$ for some $i\in[n-1]$, then $v_i\in M_j$ and $v_{i+1}\in M_j$ for some $j\in [p]$.
\item If $v_i\in M_j$ and $v_{i'}\in M_j$ with $i<i'$ for some $j\in[p]$, then for all $i<i^\star<i'$ we have that $v_{i^\star}\in M_j$.
\item For all $j\in[p]$ we have that $M_j\neq \emptyset$, and if $v_i$ in $M_j$
and $v_{i'}$ in $M_{j+1}$ for some $j\in[p-1]$, then we have that $i<i'$.
\end{compactitem}
Analogously, we define a partition of the remaining vertices $V'=V'_1\uplus V'_2 \uplus \ldots\uplus V'_q$ in the following way: 
\begin{compactitem}
\item If $v_i\in V'$ and $v_{i+1}\in V'$ for some $i\in[n-1]$, then $v_i\in V'_j$ and $v_{i+1}\in V'_j$ for some $j\in [q]$.
\item If $v_i\in V'_j$ and $v_{i'}\in V'_j$ with $i<i'$ for some $j\in[q]$, then for all $i<{i^\star}<i'$ we have that $v_{i^\star}\in V'_j$.
\item For all $j\in[q]$ we have that $V'_j\neq \emptyset$, and if $v_i$ in
$V'_j$ and $v_{i'}$ in $V'_{j+1}$ for some $j\in[q-1]$, then we have that
$i<i'$.
\end{compactitem}
We can easily compute both partitions by iterating over all vertices in $V$ in the order given by $<_V^1$ and checking whether a vertex is contained in $M$. It is also easy to check that $q\le p+1\le\kappa\cdot\tau+3\le n$, since for all~$1<j<p$ we have that $|M_j|\ge 2$.

Note that any vertex ordering $<_V^t$ with $t\in[\tau]$ defines the same
partitions. 

Now we are ready to construct a separator $S$. First we guess the set~$M_S:=S\cap M$. %
Then for each $1<j\le p$ we guess the earliest time $a_j$ a temporal path starting from~$s$ should be able to reach a vertex in the set $M_j$ in $\TG-S$ or we set $a_j:=\tau+1$ if no temporal path from~$s$ should be able to reach a vertex in~$M_j$ in $\TG-S$. 
For each~$1\le j<p$ we guess the earliest time $d_j> a_j$ a temporal path from $s$ should be able to reach a vertex in~$V'_j$  in $\TG-S$ or, in other words, leave the set $M_j$, or we set~$d_j:=\tau+1$ if no temporal path from~$s$ should be able to reach a vertex in $V'_j$ in $\TG-S$. 

Now we create the following instances of \nonstrproblem{} on \outempinterval{}s:
For each $j\in[q]$ we do the following: If $d_j<a_{j+1}$, then we create an
\outempinterval{} by taking the graph~$\TG_{[d_j:a_{j+1}-1]}[V'_j]$ and adding
two new vertices~$s_j$ and~$z_j$. We further add the time-edge $(\{s_j,u\},t)$ to
the temporal graph if $d_j\le t\le a_{j+1}-1$ and $(\{u',u\},t)\in\TE$ for some
$u'\in M_j\setminus M_S$. We add the edge $(\{z_j,u\},t)$ to the graph if $d_j\le
t\le a_{j+1}-1$ and $(\{u',u\},t)\in\TE$ for some $u'\in M_{j+1}\setminus M_S$.
We call the constructed graph $\TG_j$. Intuitively, we merge all vertices in $M_j\setminus M_S$ to a vertex $s_j$ and all vertices in $M_{j+1}\setminus M_S$ to a vertex $z_j$. It is easy to check that~$\TG_j$ is an \outempinterval{}. Now we solve the optimization variant of \nonstrproblem{} on $(\TG_j,s_j,z_j)$ using \cref{prop:intervalpoly}\footnote{Note that the corresponding algorithm can easily be modified to output a solution.}. Let~$S_j$ be the solution, that is, a minimum \nonstrsep{s_j,z_j} for $\TG_j$. If there is no valid solution or if $d_j\ge a_{j+1}$, then we set~$S_j=\emptyset$.

Finally, we set $S=\bigcup_{j\in[q]}S_j\cup M_S$. If $|S|\le k$ and there is no
\nonstrpath{s,z} in $\TG-S$, then we output \yes. Otherwise, we output \no.

It is easy to check that the algorithm runs in \FPT{}-time with respect to parameter $(\kappa+\tau)$. We next prove the correctness of the algorithm.

$(\Rightarrow)$: If the algorithm outputs \yes, then we face a \yes-instance. This is trivially true since the algorithm does a sanity check as a last step.

$(\Leftarrow)$: If we face a \yes-instance, then there is a \nonstrsep{s,z}~$S^\star$ with $|S^\star|\le k$ for $\TG$. We claim that in this case, our algorithm outputs \yes. Since we try out all possible sets $M_S$ we can assume that there is a branch of our algorithm where we have that $M_S=M\cap S^\star$. Similarly, we can assume that we are in a branch where all values $a_j$ and $d_j$ for $j\in[q]$ are ``correct'', that is, they are the largest numbers with the property that no vertex $v\in M_j$ is reachable from $s$ in $\TG-S^\star$ earlier than~$a_j$ and no vertex $u\in V'_j$ is reachable from $s$ in $\TG-S^\star$ earlier than~$d_j$. 

Then we can show that~$S=\bigcup_{j\in[q]}S_j\cup M_S$ is a \nonstrsep{s,z} and $|S|\le |S^\star|$: We first check that $S$ is a \nonstrsep{s,z}. Since $M\cap S^\star = M\cap S$ we know that for each part $M_j$ with $1<j<p$ we have that a temporal path from $s$ that arrives at a vertex in $M_j$ no earlier than~$a_j$ cannot arrive at a vertex in $V'_j$ earlier than $d_j$ in $\TG-S$. Furthermore, no temporal path from $s$ can arrive at a vertex in $V'_1$ earlier than $d_1$ in $\TG-S$ and no temporal path from $s$ that arrives at a vertex in $M_p$ at time $a_p$ or later can reach $z$ in $\TG-S$. The sets $S_j$ are chosen in a way that ensures that a temporal path from $s$ that does not arrive at any vertex in $V'_j$ earlier than~$d_j$ cannot reach a vertex in $M_{j+1}$ earlier than $a_{j+1}$ in $\TG-S_j$ and hence also in $\TG-S$. We can conclude that $S$ is a \nonstrsep{s,z} for~$\TG$. Now assume for contradiction that $|S|>|S^\star|$. Then there is a set~$S_j$ such that $|S_j|>|V'_j\cap S^\star|$. This is a contradiction to the fact that $S_j$ is a minimum \nonstrsep{s_j,z_j} for~$(\TG_j,s_j,z_j)$ since~$V'_j\cap S^\star$ is also a \nonstrsep{s_j,z_j} for $(\TG_j,s_j,z_j)$ since otherwise there would be a temporal path from $s$ that arrives at a vertex in $M_{j+1}$ earlier than $a_{j+1}$ in~$\TG-S^\star$. This completes the correctness proof.

\emph{Running time.} There are $2^{|M|}$ possible guesses for $M_S$ and then a total of~$\tau^{2(p-1)}$ possible guesses for the $a_i$ and $d_i$ values. The polynomial part of the running time is $q\cdot \ON(|V|^2\cdot \tau^2)$. Together with the bounds we know for~$q$,~$p$, and~$|M|$ we get a running time upper bound of $\ON((4\tau)^{\tau\cdot\kappa}\cdot (\kappa+\tau)\cdot |V|^2\cdot \tau^2)$.
 \end{proof}
 
We remark that is remains an open question how to compute the shuffle number of a given \utempinterval{} and a set of vertex orderings that witness the shuffle number. We conjecture that deciding whether a \utempinterval{} has shuffle number $\kappa=1$ is already \NP-hard.

\section{Conclusion}\label{sec:conclusion}

We studied \nonstrproblem{} on different temporal graph classes---with structural and temporal restrictions on temporal graph models.
We proved \nonstrproblem{} to remain~\NP{}-complete on the majority of the considered classes of restricted temporal graphs.
Polynomial-time solvability was achieved for temporal graphs where the underlying graph has bounded treewidth, on grounded temporal graphs, temporal graphs with many periods, and temporal graphs where each layer is a unit interval graph with respect to the same vertex ordering.

Our results exemplify that many notions of temporal graph classes that are currently considered in the literature do not impose useful restrictions on temporal graphs
when dealing with separation problems.
We introduced the class of~\outempinterval{}s which is more restrictive than just requiring the layers to fall into a specific graphs class. However, also this notion does not capture temporal aspects (that is, it is invariant under reordering of layers). We defined a distance measure for \utempinterval{} to \outempinterval{}, the \emph{shuffle number} of a \utempinterval{}, and showed that this is a useful restriction for \nonstrproblem{}.
Exploring further, more sophisticated structural restrictions of temporal graphs, whose definitions may rely on global properties and on temporal aspects, is of particular interest when asking for computationally tractable cases of~\nonstrproblem{}.

We further briefly discuss \textsc{Strict Temporal $(s,z)$-Separa\-tion}, the
main difference to \nonstrproblem{} being that we are looking for a
\emph{strict} temporal $(s,z)$-separator that removes all \emph{strict} temporal
$(s,z)$-paths from the input graph. 
A temporal path is \emph{strict} if the
time-edges of the path have strictly increasing time labels. 
In certain circumstances \textsc{Strict Temporal $(s,z)$-Separation} and
\nonstrproblem{} can behave very differently with respect to their computational
complexity~\cite{zschoche2017computational}, nevertheless we believe that most of our
results can be adapted to the strict case. 
More specifically, we believe that
the results presented in \cref{sec:layer} and \cref{sec:ug-classes} all carry
over, however the algorithms of course need suitable adjustments. 
For our results on temporal restrictions (\cref{sec:tem-classes}) it is easy to show that most of the polynomial-time solvable cases become \NP-hard in the strict case. This follows from the
fact that \textsc{Strict Temporal $(s,z)$-Separation} is \NP-complete even if
all layers are the same, or in other words, all edges appear either in all time
steps or never~\cite{zschoche2017computational}. We also believe that the
algorithm of \cref{sec:unitinterval} concerning \utempinterval{} can be adapted
to the strict case.
\bibliographystyle{abbrvnat}
\bibliography{library}
\clearpage
\appendix
\appendixproof
\end{document}